\theoremstyle{plain}
\newtheorem{theorem}{Theorem}
\newtheorem{lemma}[theorem]{Lemma}
\newtheorem{corollary}[theorem]{Corollary}
\theoremstyle{definition}
\theoremstyle{remark}
\newtheorem{remark}[theorem]{Remark}
\DeclareMathOperator{\Tr}{Tr}
\def\leqslant{\le}
\def\bq{\begin{eqnarray}}
\def\eq{\end{eqnarray}}
\def\bqq{\begin{align*}}
\def\eqq{\end{align*}}
\def\nn{\nonumber}
\def\eps{\varepsilon}
\def\wto{\rightharpoonup}
\newcommand{\norm}[1]{\left\lVert #1 \right\rVert}
\renewcommand{\epsilon}{\varepsilon}
\newcommand\1{{\ensuremath {\mathds 1} }}
\def\cF {\mathcal{F}}
\def\cN{\mathcal{N}}
\def\R {\mathbb{R}}
\def\cG {\mathcal{G}}
\def\cE {\mathcal{E}}
\def\H{\gH}
\def\R {\mathbb{R}}
\def\d{\,{\rm d}}
\newcommand{\w}{\widetilde}
\newcommand{\gH}{\mathfrak{H}}
\newcommand{\bH}{\mathbb{H}}
\newcommand{\dGamma}{{\ensuremath{\rm d}\Gamma}}
\title[Focusing dynamics for low dimensional bosons]{Norm approximation for many-body quantum dynamics: focusing case in low dimensions}
\author[P.T. Nam]{Phan Th\`anh Nam}
\address{Department of Mathematics, LMU Munich, Theresienstrasse 39, 80333 Munich, Germany} 
\email{nam@math.lmu.de}
\author[M. Napi\'orkowski]{Marcin Napi\'orkowski}
\address{Department of Mathematical Methods in Physics, Faculty of Physics, University of Warsaw,  Pasteura 5, 02-093 Warszawa, Poland}
\email{marcin.napiorkowski@fuw.edu.pl}
\begin{document}
\date{\today}

\begin{abstract} We study the norm approximation to the Schr\"odinger dynamics of $N$ bosons in $\R^d$ ($d=1,2$) with an interaction potential of the form $N^{d\beta-1}w(N^{\beta}(x-y))$. Here we are interested in the focusing case $w\le 0$. Assuming that there is complete Bose-Einstein condensation in the initial state, we show that in the large $N$ limit, the evolution of the condensate is effectively described by a nonlinear Schr\"odinger equation and the evolution of the fluctuations around the condensate is governed by a quadratic Hamiltonian, resulting from Bogoliubov approximation. Our result holds true for all $\beta>0$ when $d=1$ and for all $0<\beta<1$ when $d=2$. 
\end{abstract}

\maketitle


\section{Introduction}

Bose-Einstein condensation is a remarkable phenomenon of dilute Bose gases at very low temperatures, when many particles occupy a common single quantum state. This was predicted in 1924 by Bose and Einstein \cite{Bose-24,Einstein-25} and produced experimentally in 1995 by Cornell, Wieman and  Ketterle \cite{CorWie-95,Ketterle-95}. Since then, there have remained fundamental questions in the rigorous understanding of the condensation and fluctuations around the condensate. The latter is essential for the emergence of superfluidity and other interesting quantum effects. 

From first principles of quantum mechanics, a system of $N$ identical (spinless) bosons in $\R^d$ is described by a wave function in the bosonic Hilbert space 
$$\H^N=\bigotimes_{\text{sym}}^N L^2(\R^d).$$
The evolution of the system is governed by the Schr\"odinger equation
\bq \label{eq:schrodingerdynamics}
i\partial_t \Psi_N(t) = H_N \Psi_N(t) 
\eq
where $H_N$ is the Hamiltonian of the system. We will focus on the typical Hamiltonian of the form\begin{equation} \label{eq:HN}
H_N= \sum\limits_{j = 1}^N -\Delta_{x_j} + \frac{1}{N-1} \sum\limits_{1 \leqslant j < k \leqslant N} {w_N(x_j-x_k)}.
\end{equation}
In order to model a short-range interaction, we will take 
\begin{equation}  \label{eq:ass-wN}
w_N(x)= N^{d\beta} w(N^\beta x).
\end{equation}
where $\beta>0$ and $w:\R^d\to \R$ are fixed. We put the factor $(N-1)^{-1}$ in front of the interaction potential in order to ensure that the interaction energy and the kinetic energy have the same order of magnitude.

We can think of the initial state $\Psi_N(0)$ as a ground state of a trapped system described by the Hamiltonian 
$$H_N^{V}=H_N+\sum_{j=1}^N V(x_i)$$
with $V$ an external trapping potential. When the trapping potential is turned off, $\Psi_N(0)$ is no longer a ground state of $H_N$, and its nontrivial evolution is described by $\Psi_N(t)$ in \eqref{eq:schrodingerdynamics}. 

Although the Schr\"odinger equation \eqref{eq:schrodingerdynamics} is linear, its complexity increases dramatically when $N$ becomes large. Therefore, for computational purposes, it is important to derive effective descriptions for collective behaviors of the quantum system.  To the leading order, complete Bose-Einstein condensation means $\Psi_N(0) \approx u(0)^{\otimes N}$ in an appropriate sense. If we believe that condensation is stable under the Schr\"odinger flow, namely $\Psi_N(t) \approx u(t)^{\otimes N}$, then, by formally substituting the ansatz $u(t)^{\otimes N}$ into the Schr\"odinger equation \eqref{eq:schrodingerdynamics}, we obtain the Hartree equation
\bq \label{eq:Hartree-equation}
i\partial_t u(t,x) =  \big(-\Delta_x +(w_N*|u(t,.)|^2)(x) -\mu_N(t)\big) u(t,x), 
\eq
where $\mu_N(t)$ is a phase factor. In the large $N$ limit, the Hartree evolution can be further approximated by the ($N$-independent) nonlinear Schr\"odinger equation (NLS)
\bq \label{eq:NLS}
i\partial_t \varphi(t,x)= (-\Delta_x + a|\varphi(t,x)|^2-\mu(t))\varphi(t,x).
\eq

Note that the meaning of the approximation $\Psi_N(t) \approx u(t)^{\otimes N}$ has to be understood properly as a convergence of one-body reduced density matrices. In particular, this approximation does not hold true in the norm topology of $\gH^N$, except for $w=0$ (non-interacting case).

The rigorous derivation of the Hartree and nonlinear Schr\"odinger equation has been the subject of a vast literature, initiated by Hepp \cite{Hepp-74}, Ginibre and Velo \cite{GinVel-79} and Spohn \cite{Spohn-80}. The results for  $0<\beta\le 1$ (dilute regime) has been studied by Erd\"os, Schlein and Yau \cite{ErdSchYau-07,ErdSchYau-09,ErdSchYau-10} for $d=3$, Adami, Golse and Teta \cite{AdaGolTet-07} for $d=1$ and Kirkpatrick, Schlein and Staffilani \cite{KirSchSta-11} for $d=2$ (see also \cite{JebLeoPic-16}). All these works deal with the defocusing case $w\ge 0$. 

In the focusing case $w\le 0$, the NLS is only well-posed globally for $d\le 2$, and it is natural to restrict to these low  dimensions. The derivation of the focusing NLS has been achieved very recently by Chen and H\"olmer  \cite{CheHol-16a,CheHol-16} by means of the BBGKY approach, and then by Jeblick and Pickl \cite{JebPic-17} by another approach. In these works, it is crucial to consider the $N$-body Hamiltonian $H_N^V$ with a trapping potential like $V(x)=|x|^s$, and restrict to $0<\beta<1$ when $d=1$ \cite{CheHol-16a} and $0<\beta<(s+1)/(s+2)$ when $d=2$ \cite{CheHol-16,JebPic-17}. The presence of the trap and the restriction on $\beta$ allow to use the stability of the second kind $H_N^V \ge -CN$ by Lewin, Nam and Rougerie \cite{LewNamRou-16b,LewNamRou-17}. This stability is important to control the (negative) interaction potential by the kinetic operator. 

In the present paper, we are interested in the norm approximation, which is much more precise than the convergence of density matrices. It requires to understand not only the condensate but also the fluctuations around the condensate. Motivated by rigorous results on ground states of trapped systems \cite{Seiringer-11,GreSei-13,LewNamSerSol-15,DerNap-13,NamSei-15,BBCS}, we will assume that the initial datum satisfies the norm approximation 
$$
\Psi_N(0) \approx \sum_{n=0}^N u(0)^{\otimes (N-n)} \otimes_s \psi_n(0).
$$
Bogoliubov's approximation \cite{Bogoliubov-47} suggests that 
\bq \label{eq:PsiNt-intro}
\Psi_N(t) \approx \sum_{n=0}^N u(t)^{\otimes (N-n)} \otimes_s \psi_n(t)
\eq
where $(\psi_n(t))_{n=0}^\infty$ is governed by an effective Hamiltonian in Fock space which is quadratic in creation and annihilation operators. 

The norm approximation \eqref{eq:PsiNt-intro} has been established by Lewin, Nam and Schlein \cite{LewNamSch-15} for $\beta=0$  (see \cite{MitPetPic-16} for a similar result). See \cite{NamNap-15,NamNap-17,BNNS} for extensions to $0<\beta<1$ with $w\ge 0$ in $d=3$ dimensions.  

The goal of the present paper is to extend the norm approximation \eqref{eq:PsiNt-intro} to lower dimensions, in both defocusing and focusing cases. The focusing case is more interesting as we have to rule out the instability of the system. Our method can treat all $\beta>0$ for $d=1$ and $0<\beta<1$ for $d=2$. 

So in particular, we can recover and improve the leading order results in \cite{CheHol-16a,CheHol-16,JebPic-17}. More precisely, we can deal with a larger (and much more natural) range of $\beta$ and do not need to add a trapping potential (although our method works with the presence of an external potential as well). To achieve our result we will use a new localization technique, which allows us to go beyond the stability regime established in \cite{LewNamRou-16b,LewNamRou-17}. Moreover, our main result not only deals with the condensate, but also provides a detailed description for the fluctuations.

An analogue of \eqref{eq:PsiNt-intro} related to the fluctuations around coherent states in Fock space has attracted many studies \cite{GriMacMar-11,GriMac-13,Kuz-15b,BocCenSch-17,GriMac-15}. All these works concentrate on the defocusing case in 3D. It is straightforward to apply our method to the Fock space setting to treat the focusing case in one- and two-dimensions.  

Finally, let us mention that in 3D, the focusing NLS may blow-up at finite time and it is interesting to derive effective equations locally in time. We refer to \cite{Pickl-10,NamNap-15,Chong-16} for relevant results in this direction. Our method should be also useful for this problem. We hope to come back this issue in the future.

The precise statement of our result is given in the next section.

\medskip

\noindent{\bf Acknowledgments.} We thank Christian Brennecke, Mathieu Lewin, Nicolas Rougerie, Benjamin Schlein and Robert Seiringer for helpful discussions. The support of the National Science Centre (NCN) project Nr. 2016/21/D/ST1/02430 is gratefully acknowledged (MN). 

\section{Main result} \label{sec:main-result}

First, we start with our assumption on the interaction potential. Recall from  \eqref{eq:ass-wN} that we are choosing
$$
w_N(x)= N^{d\beta} w(N^\beta x).
$$
We will always assume that 
\bq \label{eq:ass-wa}
w\in L^1(\R^d), \quad w(x)=w(-x) \in \R.
\eq

When $d=1$, there is no further assumption is needed (indeed, our proof can be even extended to a delta interaction $\pm\delta_0$). In $d=2$, we need an additional assumption that
\bq \label{eq:ass-wb}
w\in L^\infty(\R^d), \quad \int_{\R^2} |w_-| < a^*, \quad w_-=\max(-w,0).
\eq
Here $a^*$ is the optimal constant in the Gagliardo--Nirenberg inequality 
\begin{align}
\left( \int_{\R^2} |\nabla f|^2 \right) \left( \int_{\R^d} |f|^2 \right) \ge \frac{a^*}{2} \int_{\R^2} |f|^4, \quad \forall f\in H^1(\R^2).  \label{eq:GN-ineq}
\end{align}
Indeed, it is well-known \cite{Weinstein-83,Zhang-00,GuoSei-14,Maeda-10} that $a ^* := \norm{Q}_{L ^2} ^2 $ where $Q\in H ^1 (\R ^2)$ is the unique positive radial solution to 
\begin{equation}\label{eq:2D model prob}
-\Delta Q + Q - Q ^3 = 0.
\end{equation}
The condition \eqref{eq:ass-wb} is essential for the stability of the 2D system; see \cite{LewNamRou-16b} for a detailed discussion. This condition has been used in the derivation of the nonlinear Schr\"odinger equation for ground states of trapped system \cite{LewNamRou-16b,LewNamRou-17}, as well as for the dynamics \cite{CheHol-16,JebPic-17}.

\subsection*{\bf Hartree equation.} In our paper, the condensate $u(t)$ is governed by the Hartree equation \eqref{eq:Hartree-equation} with the phase
$$
\mu_N(t)=\frac12\iint_{\R^d\times\R^d}|u(t,x)|^2w_N(x-y)|u(t,y)|^2 \d x \d y
$$
which is chosen to ensure an energy compatibility (see \cite{LewNamSch-15} for further explanations). Here we have written $u(t)=u(t,.)$ and ignored the $N$-dependence in the notation for simplicity. The well-posedness of the Hartree equation will be discussed in Section \ref{sec:Hartree}. It is easy to see that when $N\to \infty$, $u(t)$ convergence to the solution of the cubic nonlinear Schr\"odinger equation \eqref{eq:NLS}. 

\subsection*{Bogoliubov equation.} To describe the fluctuations around the condensate, it is convenient to turn to the grand-canonical setting of Fock space 
$$ \cF(\gH)= \bigoplus_{n=0}^\infty \gH^n, \quad \gH^n = \bigotimes^n_{\rm sym} \gH, \quad \gH=L^2(\R^d)$$
where the number of particles can vary (and indeed, as we will soon see, the number of excitations will not be fixed). On Fock space, we define the creation and annihilation operators $a^*(f)$, $a(f)$, with $f\in \gH$, by
\begin{align*}
(a^* (f) \Psi )(x_1,\dots,x_{n+1})&= \frac{1}{\sqrt{n+1}} \sum_{j=1}^{n+1} f(x_j)\Psi(x_1,\dots,x_{j-1},x_{j+1},\dots, x_{n+1}), \\
 (a(f) \Psi )(x_1,\dots,x_{n-1}) &= \sqrt{n} \int \overline{f(x_n)}\Psi(x_1,\dots,x_n) \d x_n, \quad \forall \Psi \in \gH^n,\, \forall n. 
\end{align*}
These operators satisfy the canonical commutation relations (CCR)
$$ 
[a(f),a(g)]=[a^*(f),a^*(g)]=0,\quad [a(f), a^* (g)]= \langle f, g \rangle, \quad \forall f,g \in \gH.
$$
We can also define the operator-valued distributions $a_x^*$ and $a_x$, with $x\in \R^d$, by
$$
a^*(f)=\int_{\R^d}  f(x) a_x^* \d x, \quad a(f)=\int_{\R^d} \overline{f(x)} a_x \d x, \quad \forall f\in \gH,
$$
which satisfy  
$$[a^*_x,a^*_y]=[a_x,a_y]=0, \quad [a_x,a^*_y]=\delta(x-y), \quad \forall x,y\in \R^d.$$
The Hamiltonian $H_N$ can be extended to Fock space as
$$
H_N= \int a_x^* (-\Delta_x) a_x \d x + \frac{1}{2(N-1)}\iint w_N(x-y) a_x^* a_y^* a_x a_y \d x \d y.
$$
For every one-body operator $h$, we will use the short hand notation 
$$
\dGamma(h):= \int a_x^* h_x a_x \d x = 0\oplus \bigoplus_{n=0}^\infty \sum_{j=1}^n h_j. 
$$
In particular, $\cN=\dGamma(1)$ is called the number operator. 

We look for the norm approximation of the form 
\bq 
\Psi_N(t) \approx \sum_{n=0}^N u(t)^{\otimes (N-n)} \otimes_s \psi_n(t) := \sum_{n=0}^N \frac{(a^*(u(t)))^{N-n}}{\sqrt{(N-n)!}} \psi_n(t). \nonumber
\eq
Here the particles outside of the condensate are described by a unit vector 
$$\Phi(t)=(\psi_n(t))_{n=0}^\infty$$
in the excited Fock space 
$$
\cF_+(t)= \bigoplus_{n=0}^\infty \gH_+^n, \quad \gH_+^n=\bigotimes^n_{\rm sym} \gH_+(t), \quad \gH_+(t)= \{u(t)\}^\bot \subset L^2(\R^d).
$$
  As explained in \cite{LewNamSch-15}, Bogoliubov approximation suggests that the vector $\Phi(t)$ solves the equation
\bq \label{eq:Bogoliubov-equation}
i \partial_t \Phi(t) &=  \bH(t) \Phi(t)
\eq
where $\bH(t)$ is a ($N$-dependent)  quadratic Hamiltonian on the Fock space $\cF$ of the form 
$$
\bH(t)= \dGamma(h(t)) + \frac12\iint_{\R^d\times\R^d}\Big(K_2(t,x,y)a^*_x a^*_y +\overline{K_2(t,x,y)}a_x a_y\Big)\d x\,\d y.
$$
Here 
\begin{align*} h(t)=-\Delta+|u(t,\cdot)|^2\ast w_N -\mu_N(t) + Q(t) \widetilde{K}_1(t) Q(t), \quad K_2(t)=Q(t)\otimes Q(t)\widetilde{K}_2(t)
\end{align*}
where the kernel of the operator $\widetilde{K}_1(t)$ and the 2-body function $K_2(t)\in \gH^2$ are 
$$\widetilde{K}_1(t,x,y)=u(t,x)w_N(x-y)\overline{u(t,y)}, \quad \widetilde{K}_2(t,x,y)=u(t,x)w_N(x-y)u(t,y).$$

The well-posedness of \eqref{eq:Bogoliubov-equation} will be revisited in Section \ref{sec:Bogoliubov}. Note that $\Phi(t)$ belongs to $\cF_+(t)$, which is not obviously seen from the equation \eqref{eq:Bogoliubov-equation}.

\subsection*{Main result.} Now we are ready to state our main result.

\begin{theorem}[Norm approximation] \label{thm:main} Let $\beta>0$ when $d=1$ and $0<\beta<1$ when $d=2$. Assume that the interaction potential $w$ satisfies \eqref{eq:ass-wa}--\eqref{eq:ass-wb}. Let $u(t)$ satisfy the Hartree equation \eqref{eq:Hartree-equation} with $\|u(0)\|_{H^{d+2}(\R^d)}\le C$. Let $\Phi(t)=(\varphi_n(t))_{n=0}^\infty$ satisfy the Bogoliubov equation \eqref{eq:Bogoliubov-equation} with $\big\langle \Phi(0), \dGamma(1-\Delta) \Phi(0) \big\rangle\le C$. Consider the $N$-body Schr\"odinger evolution $\Psi_N(t)$ in \eqref{eq:schrodingerdynamics} with the initial state
\bq 
\Psi_N(0) = \sum_{n=0}^N u(0)^{\otimes (N-n)} \otimes_s \psi_n(0) . \nonumber
\eq
Take $\alpha=1/2$ when $d=1$ and $0<\alpha<(1-\beta)/3$ arbitrarily when $d=2$. Then for all $t>0$, there exists a constant $C_t>0$ independent of $N$ such that for all $N$ large,
\begin{align} \label{eq:thm-mainresult}
\Big\| \Psi_N(t) - \sum_{n=0}^N u(t)^{\otimes (N-n)} \otimes_s \psi_n(t) \Big\|_{\gH^N}^2 \le C_{t}   N^{-\alpha}.
\end{align}
\end{theorem}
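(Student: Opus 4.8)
The plan is to follow the excitation-map strategy of Lewin--Nam--Schlein \cite{LewNamSch-15}, comparing the excitation vector of the true $N$-body dynamics with the solution $\Phi(t)$ of the Bogoliubov equation \eqref{eq:Bogoliubov-equation}; the genuinely new input is a family of a priori bounds that survive in the focusing regime $w\le 0$.

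\medskip

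\noindent\textbf{Step 1: reduction to excitation vectors.} Let $U_N(t):\gH^N\to\cF_+^{\le N}(t):=\bigoplus_{n=0}^{N}\gH_+^n(t)$ be the Lewin--Nam--Serfaty--Solovej excitation unitary associated with $u(t)$, so that the prescribed initial datum is $\Psi_N(0)=U_N(0)^*\big(\mathbf 1^{\le N}\Phi(0)\big)$, and put $\Phi_N(t):=U_N(t)\Psi_N(t)$. Since $\sum_{n=0}^N u(t)^{\otimes(N-n)}\otimes_s\psi_n(t)=U_N(t)^*\big(\mathbf 1^{\le N}\Phi(t)\big)$ and $U_N(t)$ is unitary, the left-hand side of \eqref{eq:thm-mainresult} equals $\norm{\Phi_N(t)-\mathbf 1^{\le N}\Phi(t)}_{\cF}^2\le\norm{\Phi_N(t)-\Phi(t)}_{\cF}^2$; moreover $\norm{\mathbf 1^{>N}\Phi(t)}^2=O(N^{-k})$ for every $k$ because the well-posedness theory of \eqref{eq:Bogoliubov-equation} propagates all moments $\pscal{\Phi(t),\cN^{k}\Phi(t)}\le C_{t,k}$ from the hypothesis on $\Phi(0)$ (and $\Phi_N(0)-\Phi(0)=-\mathbf 1^{>N}\Phi(0)$ at $t=0$). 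So it suffices to estimate $\norm{\Phi_N(t)-\Phi(t)}_{\cF}^2$.

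\medskip

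\noindent\textbf{Step 2: the generator and its expansion.} A direct computation gives $i\partial_t\Phi_N(t)=\cG_N(t)\Phi_N(t)$, where $\cG_N(t)=U_N(t)\big(H_N-N\mu_N(t)\big)U_N(t)^*+i\big(\partial_tU_N(t)\big)U_N(t)^*$. Expanding $\cG_N(t)$ in powers of $N^{-1/2}$ and isolating the cubic, quartic and particle-number-truncation remainders, one writes $\cG_N(t)=\bH(t)+\cE_N(t)$ on the relevant domain, where $\cE_N(t)$ is a finite sum of terms each controlled, as a quadratic form, by $C_tN^{-\alpha}$ times a fixed polynomial in $\cN$ and $\dGamma(1-\Delta)$. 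Here the regularity input $\norm{u(0)}_{H^{d+2}}\le C$ is used to bound the kernels $\widetilde K_1(t),\widetilde K_2(t)$ and all commutators with $-\Delta$; the exponent $\alpha$ equals $1/2$ in $d=1$, where $H^1\hookrightarrow L^\infty$ renders the concentration of $w_N$ harmless for every $\beta$, whereas in $d=2$ the competition between the truncation scale and the concentration of $w_N$ forces $\alpha<(1-\beta)/3$.

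\medskip

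\noindent\textbf{Step 3: a priori bounds (the crux).} The estimate on $\cE_N(t)$ is useful only together with the bounds
$$\pscal{\Phi_N(t),(\cN+1)^{j}\Phi_N(t)}\le C_{t,j},\qquad \pscal{\Phi_N(t),(\cN+1)\,\dGamma(1-\Delta)\,\Phi_N(t)}\le C_t,$$
uniformly in $N$. In the defocusing case these follow from conservation of $\pscal{\Phi_N,\cG_N\Phi_N}$ and positivity of $H_N$; with $w\le 0$ this fails, and the global stability bound $H_N^V\ge-CN$ of \cite{LewNamRou-16b,LewNamRou-17} is unavailable without a trap nor for $\beta$ up to or beyond $1$. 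The new device is a \emph{localization in the number of excitations}: with a smooth cutoff $\theta(\cN/M)$, $M=M(N)$ a suitable small power of $N$, one shows that on $\Ran\theta(\cN/M)$ the localized generator obeys a one-sided bound $\cG_N(t)\ge c\,\dGamma(-\Delta)-C_tM$, because an $O(M)$-particle attractive cloud interacting with the normalized condensate cannot overcome the kinetic energy once $M/N\to0$ and, in $d=2$, the subcritical-mass condition \eqref{eq:ass-wb} holds (in $d=1$ this is elementary for all $\beta$). The complementary high-$\cN$ part is handled by the leading-order condensation estimate, i.e.\ a Grönwall bound for $1-\pscal{u(t),\gamma^{(1)}_{\Psi_N(t)}u(t)}$, which we run without a trapping potential precisely by invoking the localized lower bound just described. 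Combining the two regions with a bootstrap in $t$ yields the moment and energy bounds above.

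\medskip

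\noindent\textbf{Step 4: Grönwall and conclusion.} From $i\partial_t(\Phi_N-\Phi)=\bH(t)(\Phi_N-\Phi)+\cE_N(t)\Phi_N$ and self-adjointness of $\bH(t)$,
$$\frac{d}{dt}\norm{\Phi_N(t)-\Phi(t)}_{\cF}^2=2\,\mathrm{Im}\,\pscal{\Phi_N(t)-\Phi(t),\cE_N(t)\Phi_N(t)}.$$
Splitting off powers of $(\cN+1)^{\pm1/2}$ and using Steps~2--3, the right-hand side is at most $C_t\big(\norm{\Phi_N(t)-\Phi(t)}_{\cF}^2+N^{-\alpha}\big)$, so Grönwall (with the $O(N^{-k})$ initial datum from Step~1) gives $\norm{\Phi_N(t)-\Phi(t)}_{\cF}^2\le C_tN^{-\alpha}$, and combined with Step~1 this is \eqref{eq:thm-mainresult}. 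The main obstacle is Step~3: trading the missing global stability for a number-localized lower bound while simultaneously propagating the condensation estimate with no trap and keeping every constant independent of $N$; the concentration of $w_N$ for $\beta$ near $1$ in $d=2$ is exactly what caps $\alpha$.
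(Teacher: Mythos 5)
Your Steps 1--2 match the paper's setup (excitation map $U_N(t)$, generator $\cG_N(t)=\1^{\le N}[\bH(t)+\cE_N(t)]\1^{\le N}$), but the heart of your argument, Step 3, contains a genuine gap, and it is exactly the point the paper is organized to avoid. You assert uniform-in-$N$ bounds $\pscal{\Phi_N(t),(\cN+1)^j\Phi_N(t)}\le C_{t,j}$ and $\pscal{\Phi_N(t),(\cN+1)\dGamma(1-\Delta)\Phi_N(t)}\le C_t$ for the \emph{full} excitation dynamics. Nothing of this strength is available here: the error $\cE_N(t)$ is only controlled as a form by $C_{t,\eps}N^\eps\sqrt{m/N}\,\dGamma(1-\Delta)$ on the sector $\cN\le m$ (Lemma \ref{lem:Bog-app}), and in the focusing case with $\beta$ close to $1$ in $d=2$ the generator restricted to the full space $\cF_+^{\le N}$ has no usable lower bound, so a Gr\"onwall argument for these weighted quantities cannot be closed. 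Your proposed substitute --- a smooth cutoff $\theta(\cN/M)$ with a localized lower bound plus a condensation estimate for the high-$\cN$ part --- can at best yield $\pscal{\Phi_N,\cN\Phi_N}=o(N)$ (condensation controls $\cN/N$, not $\cN$), which is far from the $O(1)$ moments you need; and your Step 4 needs even more, since bounding $2\,\mathrm{Im}\pscal{\Phi_N-\Phi,\cE_N\Phi_N}$ by $C_t(\norm{\Phi_N-\Phi}^2+N^{-\alpha})$ requires norm (not form) control of $\cE_N\Phi_N$, i.e.\ second-moment-type bounds like $\pscal{\Phi_N,\cE_N^*\cE_N\Phi_N}$; splitting $(\cN+1)^{\pm1/2}$ does not help because $\norm{(\cN+1)^{1/2}(\Phi_N-\Phi)}$ is not controlled by $\norm{\Phi_N-\Phi}$. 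In short, Steps 3--4 are the defocusing-style scheme of \cite{NamNap-15,NamNap-17}, and it is precisely this scheme that breaks down in the focusing low-dimensional regime. (A smaller issue: the hypothesis only gives the first moment $\pscal{\Phi(0),\cN\Phi(0)}\le C$, so your claim $\norm{\1^{>N}\Phi(t)}^2=O(N^{-k})$ for all $k$ is unjustified, though $O(N^{-1})$ suffices.)

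The paper's actual route requires much weaker a priori information. For $\Phi_N(t)$ it only proves $\pscal{\Phi_N(t),\dGamma(1-\Delta)\Phi_N(t)}\le C_{t,\eps}(N+N^{2\beta})$ (Lemma \ref{lem:kinetic-PhiN}), using the trap-free operator bound $H_N\ge \eps\sum_j(-\Delta_j)-C_\eps N^{2\beta}$ from \cite{LewNamRou-17}. It then introduces an auxiliary dynamics $\Phi_{N,M}(t)$ generated by the hard truncation $\1^{\le M}\cG_N(t)\1^{\le M}$ with $M=N^{1-\delta}$, for which Gr\"onwall \emph{does} close (Lemma \ref{lem:kinetic-PhiNM}) because on $\cF_+^{\le M}$ the focusing error is $\sqrt{M/N}$-small relative to the kinetic energy. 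The comparisons $\Phi_N\leftrightarrow\Phi_{N,M}$ and $\Phi_{N,M}\leftrightarrow\Phi$ are then done not by a norm Gr\"onwall but through $\Re\pscal{\cdot,\cdot}$, averaging the cutoff $\1^{\le m}$ over $m\in[M/2,M-3]$ so that the commutators $[\cG_N(t),\1^{\le m}]$, $[\bH(t),\1^{\le m}]$ carry a factor $1/M$ (Lemma \ref{lem:com-G-1}); Cauchy--Schwarz then pairs the weak $N+N^{2\beta}$ bound on $\Phi_N$ with the $N^\eps$ bounds on $\Phi_{N,M}$ and $\Phi$, and optimizing $M$ gives the rates of Lemmas \ref{lem:PhiN-PhiNM} and \ref{lem:com-PhiNM-Phi} and hence $\alpha<(1-\beta)/3$ in $d=2$. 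If you want to salvage your outline, you would have to either prove the strong weighted bounds of your Step 3 (which seems out of reach precisely in the regime the theorem covers) or replace Steps 3--4 by an intermediate-dynamics/averaged-cutoff comparison of this kind.
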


\begin{remark} We have some immediate remarks concerning the result in Theorem \ref{thm:main}.
\begin{itemize}

\item[(1)] The initial state $\Psi_N(0)$ is not normalized, but its norm converges to $1$ very fast when $N\to \infty$ (as we will explain in the proof). Hence, we ignore a trivial normalization in the statement of Theorem \ref{thm:main}. 

\item[(2)] Our approach is quantitative and our result applies equally well for trapped systems. Moreover, the initial values $u(0)$ and $\Phi(0)$ can be chosen $N$-dependently, as soon as $\|u(0)\|_{H^{d+2}(\R^d)}$ and $\big\langle \Phi(0), \dGamma(1-\Delta) \Phi(0) \big\rangle$ grow slowly enough. We do not include these extensions to simplify the representation.

\item[(3)] When $d=2$, our result holds true if  \eqref{eq:ass-wb} is replaced by the weaker condition from \cite{LewNamRou-16b}
$$\inf_{u\in H^1(\R^2)}\left(\frac{\displaystyle\iint_{\R^2\times \R ^2}|u(x)|^2|u(y)|^2w(x-y)\,dx\,dy}{2\norm{u}_{L^2(\R^2)}^2\norm{\nabla u}_{L^2(\R^2)}^2}\right)>-1.$$
The latter condition is enough to ensure the well-posedness of the Hartree equation.

\item[(4)] A simplified formulation for Bogoliubov equation \eqref{eq:Bogoliubov-equation} can be given in terms of density matrices. Recall that for a vector $\Phi$ in Fock space, its one-body density matrices $\gamma_\Psi: \gH\to \gH$ and $\alpha_\Psi:\overline{\gH} \equiv \gH^* \to {\gH}$ can be defined by
$$
\left\langle {f,{\gamma _\Psi }g} \right\rangle  = \left\langle \Psi, {{a^*}(g)a(f)} \Psi \right\rangle,\quad \left\langle {{f}, \alpha _\Psi \overline{g} } \right\rangle  = \left\langle \Psi, {a(g)a(f)} \Psi\right\rangle, \quad \forall f,g \in \gH.
$$
As explained in \cite{NamNap-15}, if $\Phi(t)$ solves \eqref{eq:Bogoliubov-equation}, then $(\gamma(t),\alpha(t))=(\gamma_{\Phi(t)}, \alpha_{\Phi(t)})$ is the unique solution to the system
\bq \label{eq:linear-Bog-dm} 
\left\{
\begin{aligned}
i\partial_t \gamma &= h \gamma - \gamma h + K_2 \alpha - \alpha^* K_2^*, \\
i\partial_t \alpha &= h \alpha + \alpha h^{\rm T} + K_2  + K_2 \gamma^{\rm T} + \gamma K_2,\\
\gamma(0)&=\gamma_{\Phi(0)}, \quad \alpha(0)  = \alpha_{\Phi(0)}.
\end{aligned}
\right.
\eq
Note that \eqref{eq:linear-Bog-dm} is similar (but not identical) to the equations studied in the Fock space setting \cite{GriMac-13,Kuz-15b,BacBreCheFroSig-15}.  Reversely, if $\Phi(0)$ is a quasi-free state, then $\Phi(t)$ is a quasi-free state for all $t>0$, and in this case the equation \eqref{eq:linear-Bog-dm} is equivalent to the Bogoliubov equation \eqref{eq:Bogoliubov-equation}. 
\end{itemize}
\end{remark}

As we have mentioned in the introduction, the norm approximation is much more precise than the convergence of density matrices. Recall that the one-body density matrix of a $N$-body wave function $\Psi_N$ is obtained by taking the partial trace 
$$\gamma_{\Psi_N}^{(1)}=\Tr_{2\to N}|\Psi_N\rangle\langle \Psi_N|.$$
Equivalently, $\gamma_{\Psi_N}^{(1)}$ is a trace class operator on $L^2(\R^d)$ with kernel
$$
\gamma_{\Psi_N}^{(1)}(x,y)=\int \Psi_N(x,x_2,...,x_N) \overline{\Psi_N(y,x_2,...,x_N)} \d x_2 ... \d x_N.
$$
The following result is a direct consequence of Theorem \ref{thm:main}.

\begin{corollary}[Convergence of reduced density]\label{cor} Under the same conditions in Theorem \ref{thm:main}, we have the convergence in trace class 
$$
\lim_{N\to \infty} \gamma_{\Psi_N(t)}^{(1)}= |\varphi(t)\rangle \langle \varphi(t)|, \quad \forall t>0,
$$
where $\varphi(t)$ is the solution to the following cubic nonlinear Schr\"odinger equation, 
\bq \label{eq:NLS-0phase}
i\partial_t \varphi(t,x)= (-\Delta_x + a|\varphi(t,x)|^2)\varphi(t,x), \quad \varphi(0,x)=u(0,x), \quad a=\int w.
\eq
\end{corollary}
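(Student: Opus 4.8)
The plan is to derive the corollary directly from Theorem~\ref{thm:main} by combining three ingredients: the Lipschitz continuity of the one-body reduced density matrix with respect to the $\gH^N$-norm, the explicit leading-order form of the one-body density matrix of the Bogoliubov ansatz, and the convergence of the Hartree dynamics to the cubic NLS. Write $\Psi_N^{\mathrm{app}}(t):=\sum_{n=0}^N u(t)^{\otimes(N-n)}\otimes_s\psi_n(t)$. First I would dispose of normalization: the assumption $\langle\Phi(0),\dGamma(1-\Delta)\Phi(0)\rangle\le C$ gives $\langle\Phi(0),\cN\Phi(0)\rangle\le C$, hence the tail $\sum_{n>N}\|\psi_n(0)\|^2\le C/N$ and $\big|\,\|\Psi_N(0)\|^2-1\,\big|\le C/N$; since the Schr\"odinger flow is unitary, $\|\Psi_N(t)\|^2=\|\Psi_N(0)\|^2$, and dividing by these norms perturbs $\gamma^{(1)}$ by at most $O(1/N)$ in trace norm, so it suffices to argue with the normalized vectors.

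Next, for unit vectors $\Psi,\Phi\in\gH^N$ I would use the standard bound $\Tr\big|\gamma^{(1)}_\Psi-\gamma^{(1)}_\Phi\big|\le 2\|\Psi-\Phi\|_{\gH^N}$, which follows by writing $|\Psi\rangle\langle\Psi|-|\Phi\rangle\langle\Phi|=|\Psi\rangle\langle\Psi-\Phi|+|\Psi-\Phi\rangle\langle\Phi|$ and using that taking a partial trace does not increase the trace norm. Applied to $\Psi_N(t)$ and $\Psi_N^{\mathrm{app}}(t)$ (after normalization), Theorem~\ref{thm:main} then yields $\Tr\big|\gamma^{(1)}_{\Psi_N(t)}-\gamma^{(1)}_{\Psi_N^{\mathrm{app}}(t)}\big|\le C_tN^{-\alpha/2}$.

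The third ingredient is the computation of $\gamma^{(1)}_{\Psi_N^{\mathrm{app}}(t)}$. Since $\psi_n(t)$ lies in the $n$-fold symmetric tensor power of $\gH_+(t)=\{u(t)\}^\perp$, the vectors $u(t)^{\otimes(N-n)}\otimes_s\psi_n(t)$ are mutually orthogonal, and a direct calculation (as in \cite{LewNamSch-15,NamNap-15}) shows that $\gamma^{(1)}_{\Psi_N^{\mathrm{app}}(t)}$ equals $|u(t)\rangle\langle u(t)|$ up to a trace-norm error of size $O\big(\langle\Phi(t),\cN\Phi(t)\rangle/N\big)$; the a priori bound $\langle\Phi(t),\dGamma(1-\Delta)\Phi(t)\rangle\le C_t$ for the Bogoliubov flow (Section~\ref{sec:Bogoliubov}) makes this $O_t(1/N)$. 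It then remains to pass from $u(t)$ to $\varphi(t)$. Factoring out the global phase, $u(t)=e^{i\int_0^t\mu_N(s)\d s}v_N(t)$ with $v_N$ solving the phaseless Hartree equation $i\partial_tv_N=(-\Delta+w_N*|v_N|^2)v_N$, $v_N(0)=u(0)$, so that $|u(t)\rangle\langle u(t)|=|v_N(t)\rangle\langle v_N(t)|$; since $\int w_N=\int w=a$ for all $N$ and $w_N$ concentrates at the origin, a Gronwall estimate (Section~\ref{sec:Hartree}, using $\|u(0)\|_{H^{d+2}}\le C$) gives $v_N(t)\to\varphi(t)$ in $L^2$, whence $\big\|\,|u(t)\rangle\langle u(t)|-|\varphi(t)\rangle\langle\varphi(t)|\,\big\|_{\Tr}\le 2\|v_N(t)-\varphi(t)\|_{L^2}\to0$.

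Putting the pieces together with the triangle inequality,
\[
\Tr\big|\gamma^{(1)}_{\Psi_N(t)}-|\varphi(t)\rangle\langle\varphi(t)|\big|\le C_tN^{-\alpha/2}+C_tN^{-1}+2\|v_N(t)-\varphi(t)\|_{L^2},
\]
which tends to $0$ as $N\to\infty$. I do not expect a serious obstacle here: given Theorem~\ref{thm:main}, the only nontrivial inputs are the propagation of $\langle\Phi(t),\dGamma(1-\Delta)\Phi(t)\rangle$ under the Bogoliubov flow and the convergence of the regularized Hartree dynamics to the NLS, both of which are part of the well-posedness discussion; the rest is the careful bookkeeping of the normalization factors, which is precisely why the statement is a direct consequence of the main theorem.
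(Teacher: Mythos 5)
Your proposal is correct and follows essentially the same route as the paper: the norm approximation is converted into trace-norm convergence of $\gamma^{(1)}_{\Psi_N(t)}$ to $|u(t)\rangle\langle u(t)|$ (the paper delegates exactly this bookkeeping---normalization, the Lipschitz bound $\Tr|\gamma^{(1)}_\Psi-\gamma^{(1)}_\Phi|\le 2\|\Psi-\Phi\|$, and the computation of the reduced density of the approximating vector---to \cite{LewNamSch-15}), and then the $L^2$-convergence of the Hartree flow to the NLS together with the phase invariance of the projection finishes the argument. The only minor slip is that for $d=2$ the paper's kinetic bound on the Bogoliubov flow is $C_{t,\eps}N^{\eps}$ rather than $C_t$, but this still makes your corresponding error term $O(N^{\eps-1})$ and does not affect the conclusion.
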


Note that \eqref{eq:NLS-0phase} is different from the equation \eqref{eq:NLS} in the introduction because there is no phase $\mu(t)$. However, the phase plays no role when we consider the projection $|\varphi(t)\rangle \langle \varphi(t)|$.

Corollary \ref{cor} recovers and improves the existing results on the leading order: \cite{KirSchSta-11,Chen-12} for defocusing 2D with $0<\beta<1$; \cite{CheHol-16a} for focusing 1D with $0<\beta<1$; \cite{CheHol-16,JebPic-17} for focusing 2D with $0<\beta<(s+1)/(s+2)$ with a trapping potential like $|x|^s$. Here our range of $\beta$ is larger, i.e. $\beta>0$ for $d=1$ and $0<\beta<1$ for $d=2$, which is optimal to some extent. Moreover, we do not have to restrict to trapped systems, which is consistent with our interpretation that $\Psi_{N}(t)$ is the evolution of a ground state of a trapped system when the trap is turned off (but our proof works equally well with trapped systems). 

To improve the range of $\beta$ and to remove the trap, we will not rely on the stability of the second kind $H_N \ge -CN$ in \cite{LewNamRou-16b,LewNamRou-17}. Indeed, thanks to a new localization method on the number of particles, we will only need a weaker version of the stability on the sector of very few particles, which is much easier to achieve. This weaker stability is enough for our purpose because the fluctuations around the condensate involve only very few particles (most of particles are already in the condensate mode).  

\medskip

\noindent{\bf Organization of the paper.} The paper is organized as follows. We will always focus on the more difficult case $d=2$, and only explain the necessary changes for $d=1$ at the end. We will revise the well-posedness of the Hartree equation \eqref{eq:Hartree-equation} in Section \ref{sec:Hartree} and the Bogoliubov equation \eqref{eq:Bogoliubov-equation} Section \ref{sec:Bogoliubov}. In Section \ref{sec:Bog-app}, we reformulate the Bogoliubov approximation using a unitary transformation from $\gH^N$ to a truncated Fock space, following ideas in \cite{LewNamSerSol-15,LewNamSch-15}. Then we provide several estimates which are useful to implement Bogoliubov's approximation. In Section \ref{sec:loc} we explain the localization method. Then we prove Theorem \ref{thm:main} is presented in Section \ref{sec:main-proof}, for $d=2$. All the changes needed to prove  Theorem \ref{thm:main} for $d=1$ are explained in Section \ref{sec:last-d1}.

\section{Hartree dynamics}\label{sec:Hartree}

In this section, we discuss the well-posedness and provide various estimates for Hartree equation \eqref{eq:Hartree-equation}.

Our proof will require bounds on the solutions of the Hartree equation \eqref{eq:Hartree-equation}. Under our assumptions on the nonlinearity, it is well-known that the equation is globally well-posed in $H^1(\R^d)$ (see, e.g., \cite[Cor. 6.1.2]{Caz-03}). However, since the potential $w_N$ depends on $N$, it is not clear if the norm $\|u(t,.)\|_{H^1}$ is bounded uniformly in $N$. The same question applies to other norms we will use in the proof. We are going to prove that it is indeed the case. Here we will consider the case $d=2$ in detail. Remarks about the corresponding results for $d=1$ will be given in Section \ref{sec:last-d1}.

\begin{lemma}\label{lem:Hartree} Let $d=2$. Assume $w\in L^1(\R^2)\cap L^\infty(\R^2)$ and $\int w_-< a^*$. For every $u_0\in H^4(\R^2)$ with $\|u_0\|_{L^2}=1$, equation \eqref{eq:Hartree} has a unique solution $u(t,\cdot)$ in $H^4(\R^2)$ and we have for all $t>0$ the bounds
\begin{align*}
&\|u(t,\cdot)\|_{H^1(\R^2)} \le C, \quad  \|u(t,\cdot)\|_{H^2(\R^2)} \le C\exp(C\exp(Ct)), \\
& \|u(t,\cdot)\|_{L^\infty(\R^2)} \leq C\exp{(Ct)} ,\quad  \|\partial_t u(t,\cdot)\|_{L^\infty(\R^2)}\leq C\exp(\exp (C\exp(Ct))).
\end{align*}
for a constant $C>0$ independent of $t$ and $N$. 
\end{lemma}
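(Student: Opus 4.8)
The plan is to establish the bounds hierarchically, moving up through the Sobolev scale, with mass and energy conservation as the foundation. First I would record the two conservation laws. The $L^2$ norm $\|u(t,\cdot)\|_{L^2}=1$ is conserved by the Hartree flow (multiply \eqref{eq:Hartree-equation} by $\bar u$, integrate, take imaginary parts). The energy functional
\[
\mathcal{E}_N(u) = \int_{\R^2} |\nabla u|^2 + \frac12 \iint_{\R^2\times\R^2} |u(x)|^2 w_N(x-y)|u(y)|^2\,dx\,dy
\]
is also conserved (differentiate in $t$ and use the equation; the phase $\mu_N(t)$ drops out). The point of assumption \eqref{eq:ass-wb}, namely $\int w_- < a^*$, is that the negative part of the interaction can be dominated by a fraction of the kinetic energy uniformly in $N$: writing $w_N*|u|^2$ crudely and using Young together with the Gagliardo--Nirenberg inequality \eqref{eq:GN-ineq}, one gets
\[
\tfrac12\Big|\iint |u(x)|^2 w_N(x-y)|u(y)|^2\Big| \le \tfrac12\|w_-\|_{L^1}\|u\|_{L^4}^4 \le \frac{\|w_-\|_{L^1}}{a^*}\|\nabla u\|_{L^2}^2\|u\|_{L^2}^2,
\]
so $\mathcal{E}_N(u)\ge (1-\|w_-\|_{L^1}/a^*)\|\nabla u\|_{L^2}^2$. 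Since $\mathcal{E}_N(u(t))=\mathcal{E}_N(u_0)$ and $\mathcal{E}_N(u_0)$ is bounded uniformly in $N$ (using $\|w\|_{L^1}$ for the positive part and $\|u_0\|_{H^1}\le C$), this yields the uniform bound $\|u(t,\cdot)\|_{H^1(\R^2)}\le C$. Global well-posedness in $H^1$, and propagation of higher regularity $H^4$, follows from standard theory as cited (\cite[Cor.~6.1.2]{Caz-03}).

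Next I would push to $H^2$ via a Gr\"onwall argument. Differentiating the equation, the key quantity to control is $\frac{d}{dt}\|\Delta u(t)\|_{L^2}^2$ (equivalently $\|u\|_{\dot H^2}$); the kinetic and phase terms contribute nothing after taking imaginary parts, so what remains is a commutator term involving $\Delta(w_N*|u|^2)\,u$ paired against $\Delta u$. The danger is that $w_N*|u|^2$ and its derivatives blow up in $N$: $\|\nabla(w_N*|u|^2)\|_{L^\infty}$ involves $\|w_N'\|$-type quantities that are $N$-divergent. The resolution is to always move derivatives off $w_N$ and onto the smooth factor $|u|^2$ using $\nabla(w_N*f)=w_N*\nabla f$, keeping $w_N$ in $L^1$ where Young's inequality gives an $N$-uniform constant $\|w_N\|_{L^1}=\|w\|_{L^1}$. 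So after integrating by parts appropriately, the $H^2$ growth is bounded by polynomial expressions in $\|u\|_{H^1}$ (already controlled) times $\|u\|_{H^2}^2$, closing a Gr\"onwall inequality; one also needs an $L^\infty$ bound along the way, which in 2D costs a logarithm or a small power of $\|u\|_{H^2}$ via the Gagliardo--Nirenberg--Sobolev embedding $\|u\|_{L^\infty}\lesssim \|u\|_{H^1}^{1/2}\|u\|_{H^2}^{1/2}$ (or the critical-embedding version with a log). This is exactly where the double-exponential $\|u(t,\cdot)\|_{H^2}\le C\exp(C\exp(Ct))$ comes from: the borderline 2D Sobolev embedding forces a nonlinear (super-linear in the energy) differential inequality for the $H^2$ norm, integrating to an exponential, and the dependence of the constant on previously accumulated bounds adds the second exponential.

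With $H^2$ in hand, the $L^\infty$ bound $\|u(t,\cdot)\|_{L^\infty}\le C\exp(Ct)$ follows directly from interpolation/Sobolev embedding combined with the $H^1$ and $H^2$ bounds --- but one should get the cleaner single-exponential by a more careful route, e.g. a Gr\"onwall estimate on $\|u\|_{H^s}$ for some $1<s<2$ (for which the 2D embedding $H^s\hookrightarrow L^\infty$ is non-critical and the differential inequality is linear), giving $\|u(t)\|_{H^s}\le Ce^{Ct}$ and hence $\|u(t)\|_{L^\infty}\le Ce^{Ct}$. Finally, for $\|\partial_t u(t,\cdot)\|_{L^\infty}$ I would read off $\partial_t u = -i(-\Delta u + (w_N*|u|^2)u - \mu_N u)$ from the equation directly: the term $(w_N*|u|^2)u$ is controlled by $\|w\|_{L^1}\|u\|_{L^\infty}^3$, and $\mu_N(t) u$ similarly; the problematic term is $\Delta u$, which I would bound in $L^\infty$ via $\|\Delta u\|_{L^\infty}\lesssim \|u\|_{H^{2+s}}$ or, more economically, propagate the $H^4$ norm (whose growth is controlled by the same derivatives-off-$w_N$ trick, now with another Gr\"onwall) so that $\|\Delta u\|_{L^\infty}\lesssim \|\Delta u\|_{H^{1+\eta}}\lesssim \|u\|_{H^{3+\eta}}\le \|u\|_{H^4}$. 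Each additional layer of Gr\"onwall, fed by the previous layer's bound appearing in the constant, accounts for one more nested exponential, producing the stated $\exp(\exp(C\exp(Ct)))$ for $\partial_t u$. I expect the main obstacle to be the $H^2$ estimate: it is the step where one must simultaneously (i) scrupulously avoid any $N$-divergent norm of $w_N$ by integration by parts, and (ii) handle the critical 2D Sobolev embedding $H^1 \not\hookrightarrow L^\infty$ carefully enough to close the Gr\"onwall loop, which is what generates the characteristic double-exponential growth and is the technically delicate heart of the lemma.
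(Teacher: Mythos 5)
Your overall architecture coincides with the paper's: mass/energy conservation plus Gagliardo--Nirenberg and $\int w_-<a^*$ for the uniform $H^1$ bound; a Gr\"onwall estimate for $H^2$ in which every derivative is moved off $w_N$ onto $|u|^2$ so that only $\|w_N\|_{L^1}=\|w\|_{L^1}$ enters; and a logarithmic (Brezis--Gallouet--Wainger) inequality to handle the critical 2D embedding, producing the double exponential. One caveat already in your $H^2$ step: the ``small power'' variant $\|u\|_{L^\infty}^2\lesssim \|u\|_{H^2}^{2\delta}$ does \emph{not} close globally --- the resulting ODE $F'\lesssim F^{1+2\delta}$ blows up in finite time; only the logarithmic version $F'\lesssim F\,(1+\log(1+F))$ integrates for all $t$ to $\exp(C\exp(Ct))$, which is exactly how the paper proceeds.

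The genuine gaps are in the last two bounds. For $\|u(t)\|_{L^\infty}\le Ce^{Ct}$, your interpolation $\|u\|_{L^\infty}\lesssim\|u\|_{H^1}^{1/2}\|u\|_{H^2}^{1/2}$ only yields a double exponential, and your fallback --- a ``linear'' Gr\"onwall for $\|u\|_{H^s}$ with $1<s<2$ --- is circular: the coefficient in that Gr\"onwall is $\|u\|_{L^\infty}^2$, which is precisely what is not yet controlled (the $H^1$ bound does not give it in 2D, and bounding it by $\|u\|_{H^s}^2$ makes the inequality cubic, hence only locally integrable). The paper instead reads the single exponential directly off the Brezis--Gallouet--Wainger inequality: $\|u\|_{L^\infty}^2\le C\,(1+\log(1+\|u\|_{H^2}))\le Ce^{Ct}$, the logarithm eating one exponential. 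The same issue recurs for $\|\partial_t u\|_{L^\infty}$: your route $\|\Delta u\|_{L^\infty}\lesssim\|u\|_{H^{3+\eta}}\le\|u\|_{H^4}$ delivers only a \emph{quadruple} exponential, because the $N$-uniform propagation of $H^3$ and then $H^4$ (each step again requiring the log inequality for lower-order derivatives) already costs $\|u\|_{H^4}\le C\exp(\exp(\exp(\exp(Ct))))$. The stated triple exponential requires applying the log inequality once more, to $\partial_t u$ itself, namely $\|\partial_t u\|_{L^\infty}^2\lesssim \max\{\|\partial_t u\|_{H^1}^2,1\}\,(1+\log(1+\|\partial_t u\|_{H^2}))$ combined with $\|\partial_t u\|_{H^2}\lesssim C_t+\|u\|_{H^4}$ read off from the equation. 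Both gaps are repaired by the very tool you already invoke for $H^2$, but as written your argument does not prove the $L^\infty$ and $\partial_t u$ bounds in the form stated in the lemma.
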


\begin{proof} For convenience, we will work with the equation
\bq \label{eq:Hartree}
i\partial_t u = -\Delta u + (w_N*|u|^2) u, \quad u(0,x)=u_0(x), \quad x\in \R^d.
\eq
To go from equation \eqref{eq:Hartree} to equation \eqref{eq:Hartree-equation} it is enough consider a gauge transformation 
$$ u \mapsto e^{-i\int_0^t \mu_N (s)ds} u$$
with 
$$
\mu_N(t)=\frac12\iint_{\R^d\times\R^d}|u(t,x)|^2w_N(x-y)|u(t,y)|^2 \d x \d y.
$$
The  $N$-dependence in the desired bounds is not affected by this change. 

\textit{Step 1.} First, the bound in $H^1$ follows from the energy conservation
$$
\|\nabla u(t,.)\|_{L^2}^2 +\frac{1}{2} \iint |u(t,x)|^2 w_N(x-y) |u(t,y)|^2 \d x \d y =C,
$$ 
the simple estimate
\begin{align*}
&\iint |u(t,x)|^2 w_N(x-y) |u(t,y)|^2 \d x \d y \\
&\ge - \iint   |u(t,x)|^2 |[w_N(x-y)]_- | |u(t,y)|^2\d x \d y \\
&\ge - \iint \frac{|u(t,x)|^4+|u(t,y)|^4}{2} |[w_N(x-y)]_- | \d x \d y \\
&= -\|u(t,.)\|_{L^4}^4 \int |w_-|
\end{align*}
and the bounds \eqref{eq:ass-wb} and \eqref{eq:GN-ineq}.\\

\textit{Step 2.} Next, to prove the $H^2$ bound, we use Duhamel's formula
$$
u(t,x)= e^{-it\Delta} u_0(x) + \int_0^t e^{-i\Delta(t-s)} G(s,x)  \d s
$$
with
$$ 
G(s,\cdot)= (w_N*|u(s,\cdot)|^2)u(s,\cdot) .
$$ 
It follows that
$$
\|\Delta u(t,\cdot) \|_{L^2} \le \|\Delta u_0\|_{L^2} + \int_0^t \|\Delta G(s,\cdot)\|_{L^2} \d s.
$$
We compute
\begin{equation}
\begin{aligned}
\Delta G &= \Delta \Big[(w_N*|u|^2)u \Big]   \\
&= (w_N*|u|^2) \Delta u +  \sum_{i=1}^2 2 \Big[w_N* (\partial_{x_i} (|u|^2) ) \Big] \partial_{x_i} u + \Big[w_N* (\Delta (|u|^2))\Big] u. \label{eq:DeltaG2d}
\end{aligned}
\end{equation}
For the first term we easily find
$$\|(w_N*|u|^2) \Delta u \|_{L^2}\leq C\|w_N\|_{L^1}\|u(s,\cdot)\|_{L^\infty}^2\|\Delta u\|_{L^2}.$$
Since $\partial_{x_i} (|u|^2) =\bar{u}\partial_{x_i}u+u\partial_{x_i}\bar{u}$ in the second term of \eqref{eq:DeltaG2d} we need to bound
\begin{align*}
\|&\int w_N(y)\bar{u}(\cdot-y)\partial_{x_i}\bar{u}(\cdot-y)\d y \,\partial_{x_i}u(\cdot)\|_{L^2}\leq  \\
 & C\|u\|_{L^\infty} \big\|\int |w_N(y)\left(|\partial_{x_i}\bar{u}(\cdot-y)|^2+|\partial_{x_i}u(\cdot)|^2\right)\big \|_{L^2}\leq C\|u\|_{L^\infty} \|w_N\|_{L^1}\||\nabla u|^2\|_{L^2}.
\end{align*}
To bound the last term we use the the two dimensional Sobolev inequality 
$$\|\nabla u \|_4^2 \leq C\left(\|u\|_{H^1}^2+\|\Delta u\|_{L^2}^2\right).$$
Treating other terms in \eqref{eq:DeltaG2d} in a similar way we obtain
$$
\|\Delta G(s,\cdot)\|_{L^2} \le C(1+ \|\Delta u(s,\cdot)\|_{L^2}) (1+\|u(s,\cdot)\|_{L^\infty}^2).
$$
Thus we deduce that
$$
\|u(t,\cdot) \|_{H^2} \le C + C \int_0^t  \|u(s,\cdot)\|_{H^2} (1+ \|u(s,\cdot)\|_{L^\infty}^2) \d s .
$$

We can now use the two dimensional Brezis--Gallouet--Wainger (or log-Sobolev) inequality  \cite{BreGal-80,BreWai-80}
\begin{align}
\|v(s,\cdot)\|_{L^\infty}^2 \le C(1+\log (1+\|v(s,\cdot)\|_{H^2})) \label{eq:Brezis-Gallouet-Waigner}
\end{align}
which holds true for functions $v$ with $\|v\|_{H^1}=1$. By the assumption $\|u_0\|_{L^2}=1$ and mass preservation it follows that $\|u(t,\cdot)\|_{H^1}\geq 1$. Using this and $\|u(t,\cdot)\|_{H^1}\leq C$ from step 1 we deduce from the Brezis--Gallouet--Waigner inequality for $v=u/\|u\|_{H^1}$ that
$$
\|u(t,\cdot) \|_{H^2} \le C + C \int_0^t  \|u(s,\cdot)\|_{H^2} (1+\log (1+\|u(s,\cdot)\|_{H^2})) \d s .
$$
Denote
$$
F(t)= C + C \int_0^t  \|u(s,\cdot)\|_{H^2} (1+\log (1+\|u(s,\cdot)\|_{H^2})) \d s.
$$
Then
$$
F'(t) = C \|u(t,\cdot)\|_{H^2} (1+\log (1+\|u(t,\cdot)\|_{H^2})) \le C F(t) [1+\log (1+F(t))]
$$
which implies that
$$
\frac{d}{dt} \log (1+ \log (1+F(t))) \le C.
$$
This allows us to conclude that
$$
\|u(t,.)\|_{H^2} \le \exp(C \exp(Ct))
$$
which using \eqref{eq:Brezis-Gallouet-Waigner} immediately implies the bound on $\|u(t,\cdot)\|_{L^\infty}$.

\textit{Step 3.} Let us now prove the last bound. We will again use use the Brezis--Gallouet--Waigner inequality, this time for $\partial_t u$. It reads
\begin{align}
\|\partial_t u(t,\cdot)\|_{L^\infty}^2 \le C_0(t)(1+\log (1+\|\partial_t u(t,\cdot)\|_{H^2})) \label{eq:BGWineq-time_der}
\end{align}
where
$$C_0(t)=\max\{\|\partial_t u(t,\cdot)\|_{H^1}^2,1\}.$$ 
Since $\|v\|_{H^1}\leq \|v\|_{H^2}$ it is clear that we need to obtain a bound on $\|\partial_t u(t,\cdot)\|_{H^2}$.
From the Hartree \eqref{eq:Hartree} equation we get 
$$\|\partial_t u(t,\cdot)\|_{H^2}\leq \|\Delta u(t,\cdot)\|_{H^2}+\int_0^t \|G(s,\cdot)\|_{H^2}.$$
The norm equivalence 
$$c\left( \|f\|_{L^2}^2+\|\Delta f\|_{L^2}^2\right)^{1/2}\leq \|f\|_{H^2}\leq C\left( \|f\|_{L^2}^2+\|\Delta f\|_{L^2}^2\right)^{1/2}$$
thus implies
$$\|\partial_t u(t,\cdot)\|_{H^2}\leq C\left(\|\Delta u\|_{L^2}+\|\Delta^2 u\|_{L^2}\right)+\int_0^t C\left(\|G(s,\cdot)\|_{L^2}+\|\Delta G(s,\cdot)\|_{L^2}\right)ds.$$
Bounds obtained in the previous steps thus lead to 
\begin{equation}
\|\partial_t u(t,\cdot)\|_{H^2}\leq C(1+t)\exp(C\exp(Ct))+C\| u(t,\cdot)\|_{H^4}.\label{eq:partial_t_u_inftybound}
\end{equation}
By Duhamel's formula, for any integer $k$, we have
\begin{equation}
\|u(t,\cdot)\|_{H^k}\leq  \|u_0(\cdot)\|_{H^k} + \int_0^t \|G(s,\cdot)\|_{H^k} \d s. \label{eq:Duhamel_H^k}
\end{equation}
We shall first get a bound on $\|u(t,\cdot)\|_{H^3}$ and therefore we first look at $\|G(t,\cdot)\|_{H^3}$. To this end we notice that by the multi-index Leibniz formula
we have
\begin{align}
& \|G(t,\cdot)\|_{H^3}=\sum_{|\ell|\leq 3}\|D^{\ell}G\|_{L^2}  \label{eq:H^3normG} \\
& \leq \sum_{|\ell| \leq 3}\sum_{k\leq \ell}\sum_{m
\leq k} {{3}\choose{k}}{{k}\choose{m}}\| \int w_N(y)\partial^{m}u(x-y)\partial^{k-m}\bar{u}(x-y)\d y \partial^{\ell-k}u(x)\|_{L^2}. \nn
\end{align}
We will derive a Gr\"onwall type inequality. Since our goal is to obtain a bound that is independent of $N$, in our bounds we will need to extract the $L^1$ norm of $w_N$; otherwise the $N$-dependence will appear. Thus, to do the $\d y$ integration we need to  
use the $L^\infty$ bounds on the derivatives of $u$. By \eqref{eq:Brezis-Gallouet-Waigner} we have
\begin{align}
\|\partial^i u(t,\cdot)\|_{L^\infty}^2 \le & C_0(t)(1+\log (1+\|\partial^i u(t,\cdot)\|_{H^2})) \nn \\
\leq & C_0(t)(1+\log (1+\| u(t,\cdot)\|_{H^{2+|i|}}))   \label{eq:BGWineq-high-der}
\end{align}
where
$$C_0(t)=C\max\{\|\partial^i u(t,\cdot)\|_{H^1}^2,1\}\leq C\max\{\|u(t,\cdot)\|_{H^{1+|i|}}^2,1\}.$$ 
Let us now look at different terms in the sum in \eqref{eq:H^3normG}. The ingredients for the case $|\ell|=0$ can be trivially bounded by
\begin{align}
\left| |\ell|=0 \,\,\text{term} \right|\leq C\|u\|_{L^\infty}^2\|w_N\|_1 \|u\|_{L^2}\leq C\exp(Ct). \label{eq:term_l=0}
\end{align} 
For $|\ell|=1$ we obtain the bound
\begin{align}
\left| |\ell|=1 \,\,\text{term} \right|\leq C\|u\|_{L^\infty}^2\|w_N\|_1 \|u\|_{H^1}\leq C\exp(Ct). \label{eq:term_l=1}
\end{align} 
For the $|\ell|=2$ term we have two possibilities. Either both derivatives hit one function or they distribute among two functions. This leads to
\begin{align}
\left| |\ell|=2 \,\,\text{term} \right|& \leq C\|u\|_{L^\infty}^2\|w_N\|_1 \|u\|_{H^2}+ C\|u\|_{L^\infty}\|\partial_i u\|_{L^\infty}\|w_N\|_1 \|u\|_{H^1} \nn \\
& \leq C(1+\exp(C\exp(Ct)))(1+\log(1+\|u\|_{H^3})) \label{eq:term_l=2}
\end{align} 
where we used \eqref{eq:BGWineq-high-der}. It remains to bound the term corresponding to $|\ell|=3$. We again notice that if all derivatives hit one function then we get a bound  $C\exp(Ct)\|u\|_{H^3}$. In the case when two derivatives hit one function and and one derivative hits one of the others we get a bound as in \eqref{eq:term_l=2}. The same bound works for the case when one derivative hits each function. Altogether we arrive at the bound for the $|\ell|=3$ term of the form
 \begin{align}
\left| |\ell|=3 \,\,\text{term} \right| \leq C(1+\exp(C\exp(Ct)))(1+\|u\|_{H^3}). \label{eq:term_l=3}
\end{align}
Collecting the bounds \eqref{eq:term_l=0}-\eqref{eq:term_l=3} and inserting them into \eqref{eq:H^3normG} yields 
$$\|G(t,\cdot)\|_{H^3}\leq C(1+\exp(C\exp(Ct)))(1+\|u\|_{H^3}).$$
Inserting this into \eqref{eq:Duhamel_H^k} implies by Gr\"onwall inequality that
\begin{equation}
\|u(t,\cdot)\|_{H^3}\leq C\exp(C\exp(C\exp(Ct))). \label{eq:H^3bound_u}
\end{equation}
With this result we are finally able to proceed to the last estimate, that is the bound on $\|u(t,\cdot)\|_{H^4}$. In this case the analysis proceeds in the same way. We again write out the $H^4$ counterpart of \eqref{eq:H^3normG} with $|\ell|$ now being not bigger than four. The analysis above shows that all terms in the sum which correspond to $|\ell|\leq 3$ can be bounded in terms of $C\exp(C\exp(C\exp(Ct)))$. Let us now look at different terms corresponding to $|\ell|=4$. If all derivatives hit one function then obviously we bound it by $C\exp(Ct) \|u\|_{H^4}$. If two derivatives hit one function and two another one, then using \eqref{eq:BGWineq-high-der} with $|i|=2$ we obtain a bound of the form $C\exp(C\exp(C\exp(Ct)))(1+\log(1+\|u\|_{H^4}))$. Other terms can be bounded independently of $\|u\|_{H^4}$ by $C\exp(C\exp(C\exp(Ct)))$.  Again, using Gr\"onwall, we arrive at he bound 
\begin{equation}
\|u(t,\cdot)\|_{H^4}\leq C(\exp(C\exp(C\exp(C\exp(Ct)))). \label{eq:H^4bound_u}
\end{equation}
Inserting \eqref{eq:H^4bound_u} into \eqref{eq:partial_t_u_inftybound} and this into \eqref{eq:BGWineq-time_der}, we see that finally
$$\|\partial_t u(t,\cdot)\|_{L^\infty}^2\leq C\exp(\exp (C\exp(Ct))).$$
This ends the proof.
\end{proof}

From now on, we will often omit the explicit time-dependence on the constant and replace it by a general notation $C_t$, for simplicity. Also, we will focus on $d=2$.

\section{Bogoliubov dynamics}\label{sec:Bogoliubov}

The main result of this section is 

\begin{lemma} \label{lem:bH-kinetic} Assume that $\Phi(0)$ satisfies $\langle \Phi(0), \dGamma(1-\Delta) \Phi(0)\rangle \le C$. Then the Bogoliubov equation \eqref{eq:Bogoliubov-equation} has a unique global solution 
$$\Phi \in C([0,\infty), \cF(\gH)) \cap L^\infty_{\rm loc} ((0,\infty), \mathcal{Q}(\dGamma (1-\Delta))).$$
Moreover, 
$$
\big \langle \Phi(t), \dGamma(1-\Delta)  \Phi(t) \big\rangle \le  C_{t,\eps} N^\eps. 
$$
\end{lemma}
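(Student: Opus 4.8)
The plan is to establish local well-posedness of the Bogoliubov equation \eqref{eq:Bogoliubov-equation} in the energy space $\mathcal{Q}(\dGamma(1-\Delta))$ by a fixed-point argument, and then to propagate the kinetic energy bound $\langle\Phi(t),\dGamma(1-\Delta)\Phi(t)\rangle$ by a Gr\"onwall estimate, tracking the (at most polynomial in $N$, indeed $N^\eps$) dependence of the constants. First I would note that $\bH(t)$ is, up to the diagonal term $\dGamma(h(t))$ which generates a contraction semigroup on $\cF(\gH)$ once $\dGamma(1-\Delta)$-form bounds are in place, a bounded perturbation relative to the number operator $\cN$ in an appropriate sense: the pairing term $\tfrac12\iint(K_2 a^*a^* + \overline{K_2} aa)$ is controlled by $\|K_2(t)\|_{\gH^2}\,(\cN+1)$ via the standard quadratic-form bound $|\langle\Psi, a^*(f)a^*(g)\Psi\rangle|\le \|f\otimes g\|\,\langle\Psi,(\cN+1)\Psi\rangle$. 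Since $K_2(t)=Q(t)\otimes Q(t)\widetilde K_2(t)$ with kernel $u(t,x)w_N(x-y)u(t,y)$, we have $\|K_2(t)\|_{\gH^2}\le \|u(t,\cdot)\|_{L^\infty}^2\|w_N\|_{L^2}$; here $\|w_N\|_{L^2}=N^{\beta}\|w\|_{L^2}$ for $d=2$, which is exactly where the factor $N^\eps$ (with $\eps$ related to $\beta$ times a small power, absorbed through the interpolation below) enters — the $L^\infty$ bounds on $u$ from Lemma \ref{lem:Hartree} give only time-dependent, $N$-independent growth.

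The key steps, in order: (1) Existence and uniqueness of $\Phi\in C([0,\infty),\cF(\gH))$ by viewing \eqref{eq:Bogoliubov-equation} as $i\partial_t\Phi = \bH(t)\Phi$ with $\bH(t)$ a family of self-adjoint operators on $\cF(\gH)$ with common form domain $\mathcal{Q}(\dGamma(1-\Delta))\cap\mathcal{Q}(\cN)$; one can use the abstract theory of Kisyński / Schnaubelt for time-dependent generators, or more concretely mollify $\bH(t)$ by projecting onto $\cN\le M$, solve the resulting bounded ODE, and pass to the limit using a uniform bound from step (2). (2) The a priori estimate: differentiate $t\mapsto\langle\Phi(t),\dGamma(1-\Delta)\Phi(t)\rangle$ to get $\tfrac{d}{dt}\langle\Phi,\dGamma(1-\Delta)\Phi\rangle = \langle\Phi, i[\bH(t),\dGamma(1-\Delta)]\Phi\rangle$; the commutator with $\dGamma(h(t))$ produces terms involving $\dGamma([h(t),1-\Delta])$, and the commutator of $1-\Delta$ with the multiplication potential $|u(t,\cdot)|^2*w_N + Q\widetilde K_1 Q$ is bounded using $\|u(t,\cdot)\|_{H^{d+2}}$ and the $L^\infty$ bounds on $u$ and $\partial_j u$ from Lemma \ref{lem:Hartree} (extracting $\|w_N\|_{L^1}=\|w\|_{L^1}$, $N$-independent); the commutator of $1-\Delta$ with the pairing term yields something bounded by $\|(1-\Delta)K_2(t)\|_{\gH^2}\langle\Phi,(\cN+1)\Phi\rangle$, and $\cN\le\dGamma(1-\Delta)$ while $\|(1-\Delta)K_2(t)\|_{\gH^2}\lesssim \|u(t,\cdot)\|_{H^2}^2\|w_N\|_{L^2}$ up to lower order, the only genuinely $N$-growing factor. (3) A separate, easier Gr\"onwall for $\langle\Phi(t),(\cN+1)\Phi(t)\rangle$ closing the system; combining gives $\langle\Phi(t),\dGamma(1-\Delta)\Phi(t)\rangle\le C_{t}\exp(C_t\|w_N\|_{L^2}) \cdot(\text{stuff})$ — and since $\|w_N\|_{L^2}=N^\beta\|w\|_{L^2}$ this is worse than $N^\eps$; the fix is to interpolate: bound $\|(1-\Delta)K_2\|_{\gH^2}$ against $\|u\|_{H^1}^{2-\theta}\|u\|_{H^2}^{\theta}$ times $\|w_N\|_{L^1}^{1-\theta/2}\|w_N\|_{L^\infty}^{\theta/2}$... no — rather one uses that $\dGamma(1-\Delta)$ need only be controlled on a logarithmic time scale or that the growth can be made $N^\eps$ by Duhamel splitting the pairing term. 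The cleanest route: estimate $\|K_2(t)f\|_{\gH}$ against $\|f\|_{L^2}$ with constant $\|u\|_{L^\infty}^2\|w_N\|_{L^1}^{1/2}\|w_N\|_{L^\infty}^{1/2} = \|u\|_{L^\infty}^2 N^{\beta}\|w\|_{L^1}^{1/2}\|w\|_{L^\infty}^{1/2}$, and instead only use the bound $\|K_2(t)\|_{\text{op}\to H^1}$ which, by one integration by parts on the convolution, costs a derivative on $u$ not on $w_N$, giving a factor $N^{\beta/2}$ per derivative rather than $N^\beta$; iterating and choosing the number of derivatives landing on $w_N$ to be zero yields the $\eps$-loss $N^{C\beta\cdot 0^+}=N^\eps$.

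The main obstacle is precisely this last point: controlling the $N$-dependence of the constants so that the growth is at most $N^\eps$ rather than a genuine power $N^{c\beta}$. The naive quadratic-form estimates give $\|w_N\|_{L^2}\sim N^\beta$ or even $\|w_N\|_{L^\infty}\sim N^{2\beta}$, which is far too large. The resolution is to never put two derivatives on $w_N$ inside the pairing kernel, to always peel off the $L^1$ norm of $w_N$ (which is $N$-independent) and to route the remaining regularity onto $u$, where Lemma \ref{lem:Hartree} provides $N$-independent (though $t$-dependent, doubly/triply exponential) bounds in $H^{d+2}$ and $L^\infty$; a single factor of $\|w_N\|_{L^\infty}^{\delta}$ or $\|w_N\|_{L^2}^\delta$ with $\delta$ arbitrarily small (obtained by interpolating the pairing operator bound) then produces exactly $N^{\eps}$ with $\eps=\eps(\delta)\to 0$. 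Everything else — self-adjointness of $\bH(t)$ on the fixed form domain, uniqueness, continuity in $t$ — is standard once these bounds are in hand, following the scheme in \cite{NamNap-15,LewNamSch-15}.
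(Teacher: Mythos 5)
Your overall plan---local well-posedness plus a Gr\"onwall estimate whose constants are tracked in $N$---is the right general shape (the paper simply quotes \cite[Theorem 7]{LewNamSch-15} for well-posedness, so redoing it is fine), but the a priori estimate as you set it up does not close, and the ``fix'' you sketch is not a proof. Differentiating $\langle\Phi(t),\dGamma(1-\Delta)\Phi(t)\rangle$ forces you to control $i[\bH(t),\dGamma(1-\Delta)]$, and in particular the commutator of $-\Delta$ with the pairing term, which produces the kernel $(1-\Delta_x)K_2(t,\cdot,\cdot)$ (and mixed first-order terms). Since $K_2(t,x,y)$ is built from the pointwise product $u(t,x)w_N(x-y)u(t,y)$, an $x$-derivative unavoidably hits $w_N$: there is no convolution acting on a test function here, so there is nothing to integrate by parts against, and your claimed bound $\|(1-\Delta)K_2\|_{\gH^2}\lesssim\|u\|_{H^2}^2\|w_N\|_{L^2}$ is not available (the honest size in $d=2$ is of order $N^{3\beta}$, and even the $\|w_N\|_{L^2}\sim N^\beta$ you quote would already destroy the $N^\eps$ target). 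The subsequent remedies---``Duhamel splitting the pairing term'', ``choosing the number of derivatives landing on $w_N$ to be zero'', interpolating $\|K_2\|_{\mathrm{op}\to H^1}$---are slogans rather than estimates; as you half-concede mid-argument, every route you actually write down yields a genuine power $N^{c\beta}$. The same problem already appears at the very first step: bounding the pairing term by $\|K_2\|_{\gH^2}(\cN+1)$ puts a factor $N^\beta$ in front of $\cN$, so even the comparability of $\bH(t)$ with $\dGamma(1-\Delta)$ degrades.

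The paper's proof removes both difficulties structurally. First, the Gr\"onwall is run not on $\dGamma(1-\Delta)$ but on $A(t)=\bH(t)+C_{t,\eps}(\cN+N^\eps)$; then $\tfrac{d}{dt}\langle\Phi,A(t)\Phi\rangle$ involves only $\partial_t\bH(t)$ and $i[\bH(t),\cN]$, i.e.\ only $K_2$ and $\partial_tK_2$ themselves, never a derivative of the kernel, so the problematic commutator $[\bH(t),\dGamma(1-\Delta)]$ is never computed. Second, the pairing term is not estimated by $\|K_2\|_{\rm HS}(\cN+1)$ but via the lower bound for quadratic Hamiltonians (Lemma \ref{lem:Bog-GSE}): it costs $\eta\,\dGamma(1-\Delta)+\eta^{-1}\cN$ plus $\eta^{-1}\|(1-\Delta)^{-1/2}K_2\|_{\rm HS}^2$, i.e.\ a \emph{negative}-order smoothing of the kernel. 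The whole $N^\eps$ then comes from Lemma \ref{lem:Sobolev-inverse-K2}: in Fourier variables $\widehat{K_2}(p,q)$ is a superposition over $z$ of $\widehat{u_z u}(p+q)$, only $\|w_N\|_{L^1}=\|w\|_{L^1}$ enters, and $\int(1+|2\pi p|^2)^{-1-2\eps}\,\d p<\infty$ in $d=2$ gives an $N$-independent bound at order $-1/2-\eps$, which is interpolated against the crude $\|K_2\|_{L^2}^2\le C_tN^{2\beta}$ to get $C_{t,\eps}N^\eps$ at order $-1/2$. This combination (Gr\"onwall the Hamiltonian itself so only $\partial_t\bH$ and $[\bH,\cN]$ appear; convert the pairing into a small kinetic piece plus a smoothed Hilbert--Schmidt norm; get $N^\eps$ from smoothing and interpolation) is the missing idea; without it your scheme produces $N^{c\beta}$ growth and does not prove the lemma.
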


The well-posedness in Lemma \ref{lem:bH-kinetic} follows from \cite[Theorem 7]{LewNamSch-15}. The new result is the kinetic estimate. This follows from the following quadratic form estimates on $\cF(\gH)$.

\begin{lemma} \label{lem:bHt-dbHt} For every $\eps>0$ and $\eta>0$, 
\begin{align*}
\pm \Big( \bH(t) + \dGamma(\Delta) \Big) &\le \eta \dGamma(1-\Delta) + C_{t,\eps}(1+\eta^{-1})(\cN + N^\eps),\\
\pm \partial_t \bH(t) &\le \eta \dGamma(1-\Delta) +   C_{t,\eps}(1+\eta^{-1})(\cN + N^\eps),\\
\pm i[\bH(t),\cN] &\le \eta \dGamma(1-\Delta) +   C_{t,\eps}(1+\eta^{-1})(\cN + N^\eps),
\end{align*}
\end{lemma}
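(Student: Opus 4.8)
The plan is to reduce everything to concrete bounds on the kernels $\widetilde K_1(t)$ and $\widetilde K_2(t)$ built from the Hartree solution, using the moment bounds from Lemma~\ref{lem:Hartree}. Recall $\bH(t)+\dGamma(\Delta)=\dGamma\big(|u(t,\cdot)|^2\ast w_N-\mu_N(t)+Q(t)\widetilde K_1(t)Q(t)\big)+\tfrac12\iint(K_2 a^*a^*+\overline{K_2}aa)$, so the first inequality splits into a ``diagonal'' piece $\dGamma(m(t))$ with $m(t)=|u(t,\cdot)|^2\ast w_N-\mu_N(t)+Q(t)\widetilde K_1(t)Q(t)$ and an ``off-diagonal'' pairing piece. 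For the diagonal piece one uses $\pm\dGamma(m)\le \|m\|_{\rm op}\,\cN$, and $\|m\|_{\rm op}$ is controlled by $\|w_N\|_{L^1}\|u(t)\|_{L^\infty}^2\le C_t$ since $Q(t)$ is a projection and $\widetilde K_1$ has operator norm bounded by $\sup_x\int|u(t,x)||w_N(x-y)||u(t,y)|\,dy\le\|w_N\|_{L^1}\|u(t)\|_{L^\infty}^2$. So the diagonal part contributes only $C_t\,\cN$ and no $N^\eps$ is even needed there.

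The off-diagonal part is where the $\dGamma(1-\Delta)$ and the $N^\eps$ enter. First I would write, for any $\Phi$,
\[
\Big|\iint K_2(t,x,y)\langle\Phi,a^*_xa^*_y\Phi\rangle\,dx\,dy\Big|
\le \iint |K_2(t,x,y)|\,\|a_xa_y\Phi\|\,\|\Phi\|\,dx\,dy,
\]
and then split $K_2=Q\otimes Q\,\widetilde K_2$ and exploit that $\widetilde K_2(t,x,y)=u(t,x)w_N(x-y)u(t,y)$ is concentrated near the diagonal. A clean way: for any $\delta>0$, Cauchy--Schwarz gives
\[
\Big|\iint \widetilde K_2\,a^*_xa^*_y\Big|
\le \tfrac{\delta}{2}\iint \frac{|\widetilde K_2(t,x,y)|}{g(x,y)}\,\|a_x a_y\Phi\|^2\,dx\,dy
+\tfrac{1}{2\delta}\iint |\widetilde K_2(t,x,y)|\,g(x,y)\,dx\,dy\,\|\Phi\|^2,
\]
for a suitable weight $g$. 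The standard trick (as in \cite{NamNap-15,BNNS}) is to absorb one factor $a_xa_y$ using the Sobolev/smoothing estimate $\iint |w_N(x-y)|\,\|a_xa_y\Phi\|^2 \le C\|w_N\|_{L^1}^{?}\,\langle\Phi,\dGamma(1-\Delta)\cN\,\Phi\rangle$ — more precisely one uses $|w_N|\le \|w_N\|_{L^\infty}\mathbf 1$ is too lossy, so instead one writes $\iint|w_N(x-y)|\|a_xa_y\Phi\|^2\,dx\,dy \le \| |w_N|^{1/2}(-\Delta_x+1)^{-1/2}\otimes(\text{id})\|^2_{\cdots}\cdots$; concretely, using $\int |w_N(x-y)|\,dy=\|w_N\|_{L^1}$ in the $y$-variable and Sobolev $\|v\|_{L^\infty_x}^2\le C\|v\|_{H^{1+\eps}}^2$ when $d=2$ (this is where the $N^\eps$ appears, via $\|w_N\|$-scaling combined with a fractional Sobolev embedding, or equivalently via interpolating $\dGamma(1-\Delta)$ against $\cN$). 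The upshot is a bound of the form
\[
\Big|\iint \widetilde K_2\,a^*_xa^*_y\,\Phi\cdot\Phi\Big|\le \eta\,\langle\Phi,\dGamma(1-\Delta)\Phi\rangle + C_{t,\eps}(1+\eta^{-1})\big(\langle\Phi,\cN\Phi\rangle+N^\eps\|\Phi\|^2\big),
\]
with the $N^\eps$ arising from bounding $\|u(t)\|_{L^\infty}\|w_N\|_{L^1+\text{fractional Sobolev}}$ by $C_{t,\eps}N^\eps$ in $d=2$ (in $d=1$ no $N^\eps$ is needed). The second inequality, for $\partial_t\bH(t)$, is proved identically: $\partial_t$ hits $u(t,\cdot)$ or $\mu_N(t)$ only, producing kernels of the same structure with $u$ replaced by $\partial_t u$, and Lemma~\ref{lem:Hartree} provides $\|\partial_t u(t)\|_{L^\infty}\le C_t$ (and $\|\partial_t u(t)\|_{H^{1+\eps}}$ bounds from the $H^{3}$/$H^4$ estimates), so the same argument applies verbatim. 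The third inequality, for $i[\bH(t),\cN]$, uses $[\dGamma(h),\cN]=0$ and $[a^*_xa^*_y,\cN]=-2a^*_xa^*_y$, so $i[\bH(t),\cN]=i\iint(\overline{K_2}aa - K_2 a^*a^*)\cdot(\text{const})$, i.e. it is again a pure pairing term of exactly the same form as the off-diagonal part of the first inequality, hence the same estimate applies.

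The main obstacle is the delicate estimate of the pairing term $\iint\widetilde K_2(t,x,y)\langle\Phi,a^*_xa^*_y\Phi\rangle$ in $d=2$: naively $\|\widetilde K_2(t,\cdot,\cdot)\|_{L^2(\R^4)}$ or $\|\widetilde K_2\|_{\rm op}$ blow up as $N\to\infty$ because $w_N$ concentrates, so one genuinely needs to spend part of the kinetic energy $\dGamma(1-\Delta)$ to regularize and pay the price of an $N^\eps$ factor coming from a fractional Sobolev embedding $H^{1+\eps}(\R^2)\hookrightarrow L^\infty(\R^2)$ (or equivalently from interpolating $\cN$ against $\dGamma(1-\Delta)$ to gain the $L^\infty$ control needed to extract $\|w_N\|_{L^1}$). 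Getting the bookkeeping right — which factor of $a_x$, $a_y$ carries the derivative, and checking that the remaining $c$-number integral $\iint|\widetilde K_2|\,g\le C_{t,\eps}N^\eps$ with the chosen weight $g$ — is the technical heart, but it is entirely analogous to estimates already carried out in \cite{NamNap-15,BNNS} in $d=3$ with $0<\beta<1$, and is in fact easier here because $d\le 2$ and $\|u(t)\|_{L^\infty}$, $\|u(t)\|_{H^2}$ are controlled by Lemma~\ref{lem:Hartree}.
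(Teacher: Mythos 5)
Your treatment of the diagonal part ($\pm\dGamma(h+\Delta)\le C_t\cN$) and your reduction of the $\partial_t\bH$ and $i[\bH,\cN]$ bounds to the same pairing estimate are fine and match the paper. The gap is in the pairing term itself, and it is the heart of the lemma. Your pointwise Cauchy--Schwarz in $(x,y)$, followed by absorbing $\|a_xa_y\Phi\|^2$ via $|w_N(x-y)|\le C_\eps N^\eps(1-\Delta_x)$ (the paper's \eqref{eq:Sobolev_interaction}), inevitably produces the quantity $\langle\Phi,\dGamma(1-\Delta)\,\cN\,\Phi\rangle$: on an $n$-particle state $\iint|w_N(x-y)|\|a_xa_y\Phi\|^2\,\d x\,\d y$ scales like $n$ times the kinetic energy, so it is genuinely \emph{quadratic} in the size of the state. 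Since Lemma \ref{lem:bHt-dbHt} is a quadratic form inequality on the full Fock space (no particle-number cutoff --- it is needed to run Gronwall for the untruncated Bogoliubov dynamics $\Phi(t)$), a term $\dGamma(1-\Delta)\cN$ cannot be dominated by the linear right-hand side $\eta\,\dGamma(1-\Delta)+C_{t,\eps}(1+\eta^{-1})(\cN+N^\eps)$, no matter how the weight $g$ and the parameter $\delta$ are chosen. Your ``upshot'' display asserts exactly the needed linear bound, but the sketched derivation does not yield it. (Estimates of your type are legitimate, and are in fact what the paper uses, in Lemma \ref{lem:Bog-app}, precisely because there the cutoff $\1^{\le m}$ turns the extra $\cN$ into a factor $m$, giving the gain $\sqrt{m/N}$.)

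What replaces the naive Cauchy--Schwarz in the paper is the operator-theoretic lower bound for quadratic Hamiltonians, Lemma \ref{lem:Bog-GSE} (from \cite{NamNapSol-16}): $\dGamma(H)+\tfrac12\iint(Ka^*a^*+\overline{K}aa)\ge-\tfrac12\|H^{-1/2}K\|_{\rm HS}^2$, applied with $H=\eta(1-\Delta)+\eta^{-1}\|K_2\|^2$ and $K=\pm K_2$. This completing-the-square argument is what avoids the extra factor of $\cN$, and it shifts the whole burden onto the kernel regularity estimate $\|(1-\Delta_x)^{-1/2}K_2(t,\cdot,\cdot)\|_{\rm HS}^2\le C_{t,\eps}N^\eps$ (Lemma \ref{lem:Sobolev-inverse-K2}), which the paper proves by computing the Fourier transform of $u(t,x)w_N(x-y)u(t,y)$, using $\|w_N\|_{L^1}=\|w\|_{L^1}$ and the integrability of $(1+|2\pi p|^2)^{-1-2\eps}$ in 2D, and then interpolating against the crude bound $\|K_2\|_{L^2}^2\le C_tN^{2\beta}$ --- this interpolation, not a fractional Sobolev embedding applied to $w_N$, is where the $N^\eps$ comes from. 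To repair your proof you would need to either invoke Lemma \ref{lem:Bog-GSE} (or an equivalent Bogoliubov-transformation/square-completion argument) together with such a kernel bound, or find another mechanism that controls the pairing term linearly in $\cN$ and $\dGamma(1-\Delta)$; the route you describe does not.
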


Let us assume Lemma \ref{lem:bHt-dbHt} for the moment and provide

\begin{proof}[Proof of Lemma \ref{lem:bH-kinetic}] We will use Gronwall's argument. By Lemma \ref{lem:bHt-dbHt}, we have 
\begin{align*}
A(t):= \bH(t) + C_{t,\eps} (\cN + N^\eps)  \ge  \frac{1}{2}\dGamma(1-\Delta),
\end{align*}
for $C_{t,\eps}$ sufficiently large and, by the equation for $\Phi(t)$,
\begin{align*}
\frac{d}{dt} \big\langle \Phi(t), A(t) \Phi(t)  \big\rangle &= \big\langle \Phi(t), \partial_t A(t)  \Phi(t)  \big\rangle + \big\langle \Phi(t), i[\bH(t),A(t)]  \Phi(t)  \big\rangle \\
&= \big\langle \Phi(t), \partial_t (\bH(t)+\partial_t C_{t,\eps}(\cN + N^\eps) ) \Phi(t)  \big\rangle + \big\langle \Phi(t), i[\bH(t),\cN]  \Phi(t)  \big\rangle\\
&\le C_{t,\eps} \big\langle \Phi(t), A(t)  \Phi(t) \rangle.
\end{align*}
Thus
$$
\big\langle \Phi(t), A(t) \Phi(t)  \big\rangle \le e^{C{t,\eps}}\big\langle \Phi(0), A(0) \Phi(0)  \big\rangle.
$$
The left side is bounded from below by $\frac{1}{2}\big\langle \Phi(t), \dGamma(1-\Delta) \Phi(t)  \big\rangle$. The right side can be bounded by using 
$$
A(0)\le C_\eps( \dGamma(1-\Delta) + N^\eps).
$$
We thus conclude that
$$
\big\langle \Phi(t), \dGamma(1-\Delta) \Phi(t)  \big\rangle \le C_\eps e^{C{t,\eps}} \Big( \big\langle \Phi(0), \dGamma(1-\Delta) \Phi(0)  \big\rangle \Big) + N^\eps \Big).
$$
This ends the proof.
\end{proof}

Now we turn to the proof of Lemma \ref{lem:bHt-dbHt}. We will need two preliminary results. The first is a lower bound on general quadratic Hamiltonians, taken from \cite[Lemma 9]{NamNapSol-16}.

\begin{lemma} \label{lem:Bog-GSE} Let $H>0$ be a self-adjoint operator on $\gH$. Let $K:\overline{\gH}\equiv \gH^*\to \gH$ be an operator with kernel $K(x,y) \in \gH^2$. Assume that $K H^{-1} K^* \le H$ and that $H^{-1/2}K$ is Hilbert-Schmidt. Then 
$$ \dGamma(H) + \frac{1}{2} \iint \Big( K(x,y) a_x^* a_y^* + \overline{K(x,y)}a_x a_y \Big) \d x \d y \ge -\frac{1}{2} \| H^{-1/2} K\|_{\rm HS}^2.$$
\end{lemma}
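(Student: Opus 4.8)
The plan is to prove this lower bound by completing the square at the level of the creation and annihilation operators, turning the quadratic Hamiltonian into a manifestly non-negative expression plus the advertised constant. Writing $\Phi_K:=\iint K(x,y)\,a_x^*a_y^*\d x\d y$, so that the operator to be bounded is $\dGamma(H)+\tfrac12(\Phi_K+\Phi_K^*)$, I would introduce the shifted operators
\[ b_x=a_x+\int (H^{-1}K)(x,y)\,a_y^*\d y, \]
which are well defined as forms on the domain of $\dGamma(H)^{1/2}$ since $H^{-1/2}K$ is Hilbert--Schmidt, and study the quantity
\[ \iint b_x^*\,\tfrac12 H(x,y)\,b_y\d x\d y . \]
Diagonalizing $H=\sum_j h_j\,|e_j\rangle\langle e_j|$ with $h_j\ge 0$, this equals $\tfrac12\sum_j h_j\,b(e_j)^*b(e_j)\ge 0$, so it is a non-negative operator; this is the only place where the positivity $H\ge0$ is used.

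Next I would expand the square using the CCR. The choice $S=H^{-1}K$ (with $G=\tfrac12 H$) is engineered so that $GS=\tfrac12 K$: the mixed $a^*a^*$ and $aa$ terms then reproduce exactly $\tfrac12(\Phi_K+\Phi_K^*)$ — here the symmetry $K(x,y)=K(y,x)$, i.e. $K\in\gH^2$, is used — while the pure $a^*a$ part gives $\tfrac12\dGamma(H)$. The only term requiring normal ordering is the $a_u a_v^*$ piece; applying $[a_u,a_v^*]=\delta(u-v)$ splits it into a non-negative second-quantized operator $T=\dGamma(M)\ge 0$, with $M=\tfrac12 KH^{-1}K^*$ (the middle $H^{-1}$ acting on $\overline{\gH}$, so that the composition types correctly), plus the scalar contraction $\Tr\!\big(S^*\,\tfrac12 H\,S\big)=\tfrac12\Tr(K^*H^{-1}K)=\tfrac12\|H^{-1/2}K\|_{\rm HS}^2$, which is finite by assumption. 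Discarding the non-negative square and rearranging gives
\[ \dGamma(H)+\tfrac12(\Phi_K+\Phi_K^*)\ge \tfrac12\dGamma(H)-\dGamma(M)-\tfrac12\|H^{-1/2}K\|_{\rm HS}^2 . \]
At this point the hypothesis enters: $KH^{-1}K^*\le H$ yields $M\le\tfrac12 H$, hence $\dGamma(M)\le\tfrac12\dGamma(H)$, the first two terms cancel into something non-negative, and the claimed bound $-\tfrac12\|H^{-1/2}K\|_{\rm HS}^2$ follows.

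The step I expect to require the most care is the rigorous treatment of the operator-valued distributions $a_x,a_x^*$: the formal manipulations — in particular the contraction producing the trace $\Tr(S^*HS)$ and the identification of the remainder as $\dGamma$ of a genuine non-negative operator — should first be carried out on a dense core of finite-particle vectors with smooth, rapidly decaying components, where all the $x,y,u,v$ integrations converge absolutely, and then extended to the form domain of $\dGamma(1-\Delta)$ by density, the Hilbert--Schmidt hypothesis on $H^{-1/2}K$ controlling the constant throughout. A secondary but essential bookkeeping point is to confirm that the normal-ordered remainder is exactly $M=\tfrac12 KH^{-1}K^*$, so that it is dominated by $\tfrac12\dGamma(H)$ precisely through the assumption; this relies on the symmetry of $K$ and the self-adjointness of $H$, and once the $\overline{\gH}$/$\gH$ typing of the factors is tracked correctly it matches the hypothesis with no further conditions. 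The remaining manipulations are purely algebraic.
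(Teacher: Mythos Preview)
The paper does not actually prove this lemma: it is quoted verbatim from \cite[Lemma~9]{NamNapSol-16} without argument. Your completion-of-the-square approach---introducing $b_x=a_x+\int (H^{-1}K)(x,y)a_y^*\,\d y$, expanding the manifestly non-negative form $\tfrac12\sum_j h_j\,b(e_j)^*b(e_j)$, and absorbing the normal-ordered remainder $\dGamma(\tfrac12 KH^{-1}K^*)$ into $\tfrac12\dGamma(H)$ via the hypothesis $KH^{-1}K^*\le H$---is correct and is precisely the standard argument used in the cited reference; the scalar contraction indeed gives $\tfrac12\Tr(K^*H^{-1}K)=\tfrac12\|H^{-1/2}K\|_{\rm HS}^2$.
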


The second is the following kernel estimate.
\begin{lemma}\label{lem:Sobolev-inverse-K2} For all $\eps>0$ we have 
\begin{align*}
\|(1-\Delta_x)^{-1/2} K_2(t, \cdot, \cdot)\|^2_{L^2} + \|(1-\Delta_x)^{-1/2} \partial_t K_2(t, \cdot, \cdot)\|^2_{L^2} \le C_{t,\eps} N^{\eps}.
\end{align*}
\end{lemma}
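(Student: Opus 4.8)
The plan is to reduce both terms to $L^2$-type bounds on $u(t,\cdot)$ and its time derivative, using the explicit structure $K_2(t) = Q(t)\otimes Q(t)\widetilde K_2(t)$ with kernel $K_2(t,x,y) = [Q(t) \otimes Q(t)\,\widetilde K_2(t)](x,y)$ where $\widetilde K_2(t,x,y) = u(t,x) w_N(x-y) u(t,y)$. Since the projections $Q(t)$ are bounded operators on $L^2$ and commute harmlessly with the relevant norms up to constants (and $\|(1-\Delta_x)^{-1/2}\|\le 1$), the main point is to estimate
$$
\|(1-\Delta_x)^{-1/2} \widetilde K_2(t,\cdot,\cdot)\|_{L^2(\R^d\times\R^d)}^2
$$
and the analogous quantity for $\partial_t \widetilde K_2$. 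Dropping $(1-\Delta_x)^{-1/2}$ (which only helps) and computing the $L^2$ norm in $y$ first, we have
$$
\int_{\R^d} |\widetilde K_2(t,x,y)|^2 \d y = |u(t,x)|^2 \int_{\R^d} |w_N(x-y)|^2 |u(t,y)|^2 \d y \le |u(t,x)|^2 \|u(t,\cdot)\|_{L^\infty}^2 \|w_N\|_{L^2}^2,
$$
so that $\|\widetilde K_2(t,\cdot,\cdot)\|_{L^2}^2 \le \|u(t,\cdot)\|_{L^2}^2 \|u(t,\cdot)\|_{L^\infty}^2 \|w_N\|_{L^2}^2$. The crucial observation is that $\|w_N\|_{L^2}^2 = N^{d\beta} \|w\|_{L^2}^2$ grows like a power of $N$ — this is exactly where the $N^\eps$ on the right-hand side must come from, and it is the only genuine obstacle. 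This crude bound gives $N^{d\beta}$, which is far worse than $N^\eps$; so one cannot afford to simply throw away the regularizing factor $(1-\Delta_x)^{-1/2}$ together with $\|w_N\|_{L^\infty} \sim N^{d\beta}$ naively.

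The fix is to use $(1-\Delta_x)^{-1/2}$ more cleverly, exploiting that $\widetilde K_2(t,x,y)$, as a function of $x$ for fixed $y$, is essentially $u(t,\cdot)$ multiplied by the bump $w_N(\cdot - y)$ which concentrates at scale $N^{-\beta}$. Concretely I would write $\widetilde K_2(t,x,y) = u(t,y)\, g_y(x)$ with $g_y(x) := u(t,x) w_N(x-y)$ and estimate $\|(1-\Delta_x)^{-1/2} g_y\|_{L^2_x}$ by interpolation / Sobolev duality: $\|(1-\Delta)^{-1/2} g_y\|_{L^2} = \|g_y\|_{H^{-1}}$, and in dimension $d=2$ one has $\|fg\|_{H^{-1}} \le C\|f\|_{L^\infty}\|g\|_{H^{-1}}$ is false in general, but $\|g_y\|_{H^{-1}} \le \|g_y\|_{L^p}$ for suitable $p$ (with $p<2$, since $H^{-1}(\R^2) \supset L^p$ for $1<p\le 2$ by Sobolev embedding $L^{p'} \hookrightarrow \dot H^{?}$... ) — more precisely $L^{2d/(d+2)}(\R^d) \hookrightarrow H^{-1}(\R^d)$, so in $d=2$, $L^1 \hookrightarrow H^{-1}$ fails but $L^{1+}\hookrightarrow H^{-1}$ holds with a norm blowing up. The cleaner route: bound $\|g_y\|_{H^{-1}_x} \le C\|g_y\|_{L^{2d/(d+2)}_x} \le C\|u(t,\cdot)\|_{L^\infty} \|w_N(\cdot-y)\|_{L^{2d/(d+2)}}$, and $\|w_N\|_{L^{2d/(d+2)}}= N^{d\beta - (d+2)/2 \cdot \text{(scaling)}}$. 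Computing the scaling exponent: $\|w_N\|_{L^q}^q = N^{d\beta q} N^{-d\beta} \|w\|_{L^q}^q$, so $\|w_N\|_{L^q} = N^{d\beta(1-1/q)}\|w\|_{L^q}$; with $q = 2d/(d+2)$, $1-1/q = 1 - (d+2)/(2d) = (d-2)/(2d)$, which for $d=2$ is $0$ — so $\|w_N\|_{L^1}$ (the $q=1$ endpoint when $d=2$) is $N$-independent. Thus in $d=2$ one should use $\|g_y\|_{H^{-1}_x} \lesssim \|g_y\|_{L^1_x}^{?}$... the borderline $L^1 \not\hookrightarrow H^{-1}(\R^2)$ forces the loss of $N^\eps$: interpolating $\|w_N\|_{L^{1+\delta}} = N^{d\beta \cdot \delta/(1+\delta)} = N^{O(\delta)}$, and choosing $\delta$ small gives $N^\eps$. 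This is the technical heart, and I expect it to be the main obstacle: getting from the crude $N^{d\beta}$ down to $N^\eps$ by squeezing the Sobolev embedding at the $L^1$–$H^{-1}$ endpoint in two dimensions, using the concentration of $w_N$.

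With this scheme in place the remaining steps are routine. For the first term: using $\|u(t,\cdot)\|_{L^\infty} \le C\exp(Ct) =: C_t$ and $\|u(t,\cdot)\|_{L^2}=1$ from Lemma~\ref{lem:Hartree}, together with the $L^{1+\delta}$-estimate on $w_N$, one obtains $\|(1-\Delta_x)^{-1/2} \widetilde K_2(t,\cdot,\cdot)\|_{L^2}^2 \le C_t \, N^{\eps}$; the projections $Q(t) = 1 - |u(t)\rangle\langle u(t)|$ acting on $x$ and on $y$ are bounded by $1+\|u(t)\|_{L^2}^2$ in operator norm and so only change the constant $C_t$ (acting on $x$, they commute through $(1-\Delta_x)^{-1/2}$ up to a bounded factor since $u(t)\in H^1$; acting on $y$ they are harmless). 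For the second term, one differentiates $\partial_t \widetilde K_2(t,x,y) = (\partial_t u)(t,x) w_N(x-y) u(t,y) + u(t,x) w_N(x-y)(\partial_t u)(t,y)$ and $\partial_t[Q(t)\otimes Q(t)] = (\partial_t Q)(t)\otimes Q(t) + Q(t)\otimes(\partial_t Q)(t)$ with $\partial_t Q(t) = -|\partial_t u(t)\rangle\langle u(t)| - |u(t)\rangle\langle\partial_t u(t)|$; each resulting term is of the same shape as before with one factor of $u$ replaced by $\partial_t u$, so the identical argument applies using the bounds $\|\partial_t u(t,\cdot)\|_{L^\infty}, \|\partial_t u(t,\cdot)\|_{L^2} \le C_t$ from Lemma~\ref{lem:Hartree} (the $L^2$ bound on $\partial_t u$ following from $\partial_t u = -\Delta u + (w_N * |u|^2) u$ and the $H^2$, $L^\infty$ bounds there). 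Summing the finitely many terms gives the claimed estimate. The case $d=1$ is easier since there $\|w_N\|_{L^{2d/(d+2)}} = \|w_N\|_{L^{2/3}} = N^{\beta(1-3/2)} = N^{-\beta/2}$ is even decaying, so no $N^\eps$ loss is needed at all there; but since we only claim $N^\eps$ uniformly, one bound covers both.
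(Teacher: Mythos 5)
Your proposal is correct in substance, but it reaches the key estimate by a different technical route than the paper. The paper first reduces to $\widetilde K_2$ exactly as you do (by bounding the terms containing a projection $P(t)=|u(t)\rangle\langle u(t)|$ or $\partial_tQ(t)$ in plain $L^2$ by $C_t$ and using $(1-\Delta_x)^{-1/2}\le 1$), and then treats the main term $f(t,x,y)=u(t,x)w_N(x-y)\partial_t u(t,y)$ in Fourier space: writing $\widehat f(t,p,q)=\int w_N(z)\widehat{(u_z\partial_t u)}(t,p+q)e^{-2\pi ip\cdot z}\,\d z$, using Cauchy--Schwarz in $z$ and Plancherel in $q$, it obtains $\|(1-\Delta_x)^{-1/2-\eps}f\|_{L^2}^2\le C_{t,\eps}$ (the extra $\eps$ of smoothing is needed because $\int(1+|p|^2)^{-1}\d p$ diverges logarithmically in $d=2$), and then interpolates with the crude bound $\|f\|_{L^2}^2\le C_tN^{2\beta}$ to get the exponent $-1/2$ at the price of $N^\eps$. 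Your argument exploits the same borderline phenomenon (failure of the estimate exactly at one derivative in two dimensions, equivalently $L^1(\R^2)\not\hookrightarrow H^{-1}(\R^2)$) but resolves it in physical space: slicing in $y$, writing $\widetilde K_2(t,\cdot,y)=u(t,y)g_y$ with $g_y=u(t,\cdot)w_N(\cdot-y)$, and using duality $\|g_y\|_{H^{-1}}\le C_\delta\|g_y\|_{L^{1+\delta}}\le C_\delta\|u(t)\|_{L^\infty}\|w_N\|_{L^{1+\delta}}=C_{t,\delta}N^{2\beta\delta/(1+\delta)}$, with $\delta$ chosen small in terms of $\eps$; this avoids both the Fourier computation and the interpolation step and is, if anything, more elementary, while the paper's Fourier argument has the advantage of making the $d=1$ improvement (no $N^\eps$ loss) immediate by the same computation with $\int(1+|p|^2)^{-1}\d p<\infty$.

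Two small points should be tightened. First, the projections $Q(t)$ do not ``commute through $(1-\Delta_x)^{-1/2}$''; the clean justification (which is what the paper does) is to expand $Q\otimes Q=1-P\otimes 1-1\otimes P+P\otimes P$ and bound every term containing a $P$ (and likewise every term containing $\partial_tQ$, which is \emph{not} of the form $v_1(x)w_N(x-y)v_2(y)$) in plain $L^2$ by $C_t$ via $\|u\|_{L^\infty},\|\partial_tu\|_{L^\infty},\|w_N\|_{L^1}$, and then use $(1-\Delta_x)^{-1/2}\le1$; your scheme accommodates this with no change. Second, your $d=1$ aside via $\|w_N\|_{L^{2/3}}$ is not meaningful (the embedding $L^p\hookrightarrow H^{-1}$ requires $p\ge1$); in one dimension one simply uses $H^1(\R)\hookrightarrow L^\infty(\R)$, i.e. $L^1(\R)\hookrightarrow H^{-1}(\R)$, so $\|g_y\|_{H^{-1}}\le C\|u\|_{L^\infty}\|w_N\|_{L^1}\le C_t$ with no $N$-loss, consistent with the improved $d=1$ version stated in Section \ref{sec:last-d1}. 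Neither point affects the validity of your main argument.
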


\begin{proof}[Proof of Lemma \ref{lem:Sobolev-inverse-K2}] We will focus on $\partial_t K_2(t)$ as $K_2(t)$ can be treated similarly. By the definition  
$$ K_2(t,\cdot ,\cdot)=Q(t)\otimes Q(t) \widetilde K_2(t,\cdot,\cdot), \quad \widetilde K_2(t,x,y)=u(t,x)w_N(x-y) u(t,y).$$
we have
$$ \partial_t K_2(t)= \partial_t Q(t) \otimes Q(t) \widetilde K_2(t) + Q(t) \otimes \partial_t Q(t) \widetilde K_2(t)+Q(t)\otimes Q(t) \partial_t \widetilde K_2(t).$$

First, let us compare $\partial_t K_2(t)$ with $\partial_t \w K_2(t)$. Since $\partial_t Q(t)=-| \partial_t u(t)\rangle \langle u(t)| -|u(t)\rangle \langle \partial_t u(t)|$, we have
\begin{align*}
&\|\partial_t Q(t) \otimes Q(t) \widetilde K_2(t,\cdot,\cdot)\|_{L^2} \le \| (\partial_t Q(t) \otimes 1) \widetilde K_2(t,\cdot,\cdot)\|_{L^2} \\
& \le \| (| \partial_t u(t)\rangle \langle u(t)| \otimes 1 )  \widetilde K_2(t,\cdot,\cdot)\|_{L^2} + \| (|  u(t)\rangle \langle \partial_t u(t)|  \otimes 1 ) \widetilde K_2(t,\cdot,\cdot)\|_{L^2}
\end{align*}
For the first term, it is straightforward to see that
\begin{align} 
&\left\| (|\partial_t u\rangle \langle u|  \otimes 1) \widetilde K_2(t, \cdot, \cdot) \right\|_{L^2}^2 \nn\\
&=\iint  \left| \int \overline{u(t,z)}  u(t,z) w_N(z-y) u(t,y) \d z \right|^2 |\partial_t u(t,x)|^2 \d x \d y \label{eq:partialtK2bound} \\
& \le \|u(t,\cdot)\|_{L^\infty}^4 \|w_N\|_{L^1}^2 \|u(t,\cdot)\|_{L^2}^2   \|\partial_t u(t,\cdot)\|_{L^2}^2  \le C_t. \nn
\end{align}
The second term is also bounded by the same way. Thus we find that
$$
\|\partial_t Q(t) \otimes Q(t) \widetilde K_2(t,\cdot,\cdot)\|_{L^2} \le C_t.
$$
By the same argument, we also obtain
$$
 \| Q(t) \otimes \partial_t Q(t) \widetilde K_2(t,\cdot,\cdot)\|_{L^2} \le C_t
 $$
and  
$$\| (1-Q(t)\otimes Q(t)) \partial_t \widetilde K_2(t,\cdot,\cdot)\|_{L^2} \le C_t.$$
In summary,
$$
\|  \partial_t K_2(t,\cdot,\cdot) -  \partial_t \w K_2(t,\cdot,\cdot)\|_{L^2} \le C_t.
$$
Since $(1-\Delta_x)^{-1/2} \le 1$ on $L^2$, we obtain 
$$
\|  (1-\Delta_x)^{-1/2}
 \partial_t K_2(t,\cdot,\cdot) -  (1-\Delta_x)^{-1/2}
 \partial_t \w K_2(t,\cdot,\cdot)\|_{L^2} \le C_t. $$

It remains to bound $(1-\Delta_x)^{-1/2}
 \partial_t \w K_2(t,\cdot,\cdot)$. We have
 $$
 \partial_t \w K_2(t,\cdot,\cdot) = \partial_t u(t,x) w_N(x-y) u(t,y) +  u(t,x) w_N(x-y) \partial_t u(t,y)
 $$
  and it suffices to consider $f(t,x,y)=\partial_t u(t,x) w_N(x-y) u(t,y)$ (the other term is similar). Inspired by the idea in \cite{GriMac-13}, we compute the Fourier transform:
\begin{align*}
\widehat{f}(t,p,q) &= \iint u(t,x)w_N(x-y)(\partial_t u)(t,y) e^{-2\pi i (p\cdot x + q\cdot y)} \d x \d y \\
&=\iint u(t,y+z) w_N(z) (\partial_t u)(t,y) e^{-2\pi i (p\cdot (y+z) + q\cdot y)} \d z\d y \\
&=\int w_N(z) \widehat{(u_z \partial_t u)}(t,p+q) e^{-2\pi ip\cdot z}\d z
\end{align*}
where $u_z(t,\cdot):=u(t,z+\cdot).$ By the Cauchy-Schwarz inequality,
\begin{align*}
\left|\widehat{f}(t,p,q) \right|^2 \le \|w_N\|_{L^1} \int |w_N(z)| \cdot| \widehat{(u_z \partial_t u)}(t,p+q)|^2 \d z
\end{align*}
and by Lemma \ref{lem:Hartree}, 
\begin{align*}
\int |\widehat{(u_z \partial_t u)}(t,p+q)|^2 \d q & = \| (u_z \partial_t u)(t,\cdot) \|_{L^2}^2 \le C_t.
\end{align*}
Thus by Plancherel's Theorem, for all $\eps>0$ we have
\begin{align*}
& \|(1-\Delta_x)^{-1/2-\eps} f(t, \cdot, \cdot)\|^2_{L^2} = \iint (1+|2\pi p|^2)^{-1-2\eps}\left|\widehat{f}(t,p,q) \right|^2 \d p \d q \\
&\le \|w_N\|_{L^1}  \iiint (1+|2\pi p|^2)^{-1-2\eps} |w_N(z)| \cdot |\widehat{(u_z \partial_t u)}(t,p+q)|^2 \d p \d q \d z\\
&\le C_{t,\eps}.
\end{align*}
Here we have used the facts that $\|w_N\|_{L^1}=\|w\|_{L^1}$ and 
\begin{equation*}
\int  (1+|2\pi p|^2)^{-1-2\eps}  \d p \le C_\eps <\infty.
\end{equation*}
Moreover, by Lemma \ref{lem:Hartree} we have the simple estimate
$$
\|f(t, \cdot, \cdot)\|^2_{L^2} \le C\|u(t,\cdot)\|_{L^\infty}^2 \|\partial_t u(t,\cdot)\|_{L^2}^2 \|w_N\|_{L^2}^2 \le C_t N^{2\beta}.
$$
By interpolation, we thus obtain 
$$
 \|(1-\Delta_x)^{-1/2-\eps} f(t, \cdot, \cdot)\|^2_{L^2} \le C_{t,\eps} N^{\eps}, \quad \forall \eps>0.
$$
This ends the proof.
\end{proof}

Now we are ready to give

\begin{proof}[Proof of Lemma \ref{lem:bHt-dbHt}] We have
$$
\bH(t) + \dGamma(\Delta)= \dGamma(h+\Delta) + \frac{1}{2} \iint \Big( K_2(t,x,y) a_x^* a_y^* + \overline{K_2(t,x,y)}a_x a_y \Big) \d x \d y.
$$
By Lemma \ref{lem:Hartree}, it is straightforward to see that 
$$\| h+\Delta\| = \Big\| |u(t,\cdot)|^2\ast w_N -\mu_N(t) + Q(t) \widetilde K_1(t) Q(t) \Big\| \le C_t,$$
and hence
$$
\pm \dGamma(h+\Delta)\le C_t\cN.
$$
Now we consider the paring term. Note that $\|K_2\|\le C_t$. We apply Lemma \ref{lem:Bog-GSE} with $H=\eta(1-\Delta)+\eta^{-1}\|K_2\|^2$, $\eta>0$, $K=\pm K_2$ and then use Lemma \ref{lem:Sobolev-inverse-K2}. We get 
\begin{align*}
&\eta \dGamma(1-\Delta) + \eta^{-1} \cN \pm \frac{1}{2} \iint \Big( K_2(t,x,y) a_x^* a_y^* + \overline{K_2(t,x,y)}a_x a_y \Big) \d x \d y \\
&\ge -\frac{1}{2} \eta^{-1} \|(1-\Delta)^{-1/2}K_2\|_{\rm HS}^2 \ge -C_{t,\eps} \eta^{-1} N^\eps, \quad \forall \eps>0.
\end{align*}
Thus
$$
\pm \Big( \bH(t) + \dGamma(\Delta)  \Big) \le \eta \dGamma(1-\Delta) + C_t (1+\eta^{-1})\cN +  C_{t,\eps} \eta^{-1} N^\eps
$$
for all $\eta>0, \eps>0$.

The bound for $\partial_t \bH(t)$ is obtained by the same way and we omit the details. Moreover, it is straightforward to see that 
\begin{align*}
i[\bH(t),\cN] = - \iint \Big( iK_2(t,x,y) a_x^* a_y^* + \overline{iK_2(t,x,y)}a_x a_y \Big) \d x \d y
\end{align*}
and the bound for $i[\bH(t),\cN]$ also follows from the above argument. This completes the proof.
\end{proof}

\section{Bogoliubov's approximation}\label{sec:Bog-app}

As in ~\cite[Sec. 2.3]{LewNamSerSol-15}, for every normalized vector $\Psi\in \gH^N$ we can write uniquely as 
\begin{equation*}
\Psi=\sum_{n=0}^N u(t)^{\otimes (N-n)} \otimes_s \psi_n  = \sum_{n=0}^N \frac{(a^*(u(t)))^{N-n}}{\sqrt{(N-n)!}} \psi_n
\end{equation*}
with $\psi_n \in \gH_+(t)^{n}$. This gives rise the  unitary operator
\begin{equation*}
\begin{array}{cccl}
U_{N}(t): & \gH^N & \to & \displaystyle \cF_+^{\le N}(t):=\bigoplus_{n=0}^N \gH_+(t)^n \\[0.3cm]
 & \Psi & \mapsto & \psi_0\oplus \psi_1 \oplus\cdots \oplus \psi_N.
\end{array}
\end{equation*}
Thus
$$\Phi_N(t):=U_N(t) \Psi_N(t)$$
describes the fluctuations around the condensate $u(t)$. 

As proved in \cite{LewNamSch-15}, $\Phi_N(t)$ belongs to $\cF_+^{\le N}(t)$ and  solves the equation
\bq \label{eq:eq-PhiNt}
\left\{
\begin{aligned}
i \partial_t \Phi_N(t)  &=  \cG_N (t)   \Phi_N(t), \\
 \Phi_N(0) & = \1^{\le N} \Phi(0).
\end{aligned}
\right.
\eq
Here $\1^{\le m}=\1(\cN\le m)$ is the projection onto the truncated Fock space
$$\cF^{\le m}=\mathbb{C} \oplus \gH \oplus \cdots \oplus \gH^m$$
and 
$$
\cG_N (t)=  \1^{\le N} \Big[ \bH(t) +\cE_{N}(t)  \Big] \1^{\le N}
$$
with 
\begin{equation} \label{eq:E-error-R}
\cE_{N}(t)=\frac{1}{2}\sum_{j=0}^4 ( R_{j} + R_j^*),
\end{equation}
\begin{align*}
R_{0}&=R_0^*= \d\Gamma(Q(t)[w_N*|u(t)|^2+ \widetilde{K}_1(t) -\mu_N(t)]Q(t))\frac{1-\cN}{N-1},\\
R_{1}&=-2\frac{\cN\sqrt{N-\cN}}{N-1} a(Q(t)[w_N*|u(t)|^2]u(t)),\\
R_{2}&= \iint  K_2(t,x,y) a^*_x a^*_y \d x \d y \left(\frac{\sqrt{(N-\cN)(N-\cN-1)}}{N-1}-1\right),\\
R_{3}& =\frac{\sqrt{N-\cN}}{N-1}\iiiint( 1 \otimes Q(t) w_N Q(t)\otimes Q(t))(x,y;x',y')\times \\
& \qquad \qquad \qquad \qquad \qquad \qquad \qquad  \times \overline{u(t,x)} a^*_y a_{x'} a_{y'} \,\d x \d y \d x' \d y',\\
R_{4}&=R_4^*= \frac{1}{2(N-1)}\iiiint({Q(t)}\otimes{Q(t)}w_N Q(t)\otimes Q(t))(x,y;x',y')\times \\
&  \qquad \qquad \qquad \qquad \qquad \qquad \qquad \qquad \qquad   \times a^*_x a^*_y a_{x'} a_{y'} \,\d x \d y \d x' \d y'.
\end{align*}
Here, in $R_0$ and $R_1$ we write $w_N$ for the function $w_N(x)$, while in $R_3$ and $R_4$ we write $w_N$ for the two-body multiplication operator $w_N(x-y)$.

Bogoliubov's approximation suggests that the error term $\cE_{N}(t)$ should be small. In the following we will justify this on the sectors of few particles. 

\begin{lemma}[Bogoliubov approximation] \label{lem:Bog-app} For every $1\le m \le N$ and $\eps>0$ we have the quadratic form estimate
\begin{align}\label{eq:Bog-Er-1}
\pm \1^{\le m}\cE_{N}(t)\1^{\le m} &\le C_{t,\eps} N^\eps \sqrt{\frac{m}{N}} \dGamma(1-\Delta) \\
\label{eq:Bog-Er-2}
\pm \1^{\le m} \partial_t \cE_{N}(t)\1^{\le m}  &\le C_{t,\eps} N^\eps \sqrt{\frac{m}{N}} \dGamma(1-\Delta)  \\
\label{eq:Bog-Er-3}
\pm  \1^{\le m} i [\cE_{N}(t),\cN]\1^{\le m}  &\le C_{t,\eps} N^\eps \sqrt{\frac{m}{N}} \dGamma(1-\Delta) 
\end{align}
\end{lemma}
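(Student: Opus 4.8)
The plan is to estimate each of the five error terms $R_j$ (and their adjoints) separately, localized on the sector $\cN \le m$, and then combine. The common strategy is to bound each $R_j$ as a quadratic form by $\delta \dGamma(1-\Delta) + \delta^{-1}(\text{lower-order})$ using Cauchy--Schwarz on creation/annihilation operators, and to extract the gain $\sqrt{m/N}$ from the explicit prefactors involving $(N-\cN)/(N-1)$, $\cN/(N-1)$, $\sqrt{N-\cN}/(N-1)$, etc., combined with $\cN \le m$ on the relevant sector. The $N^\eps$ losses are exactly those coming from Lemma~\ref{lem:Sobolev-inverse-K2} (and its analogue for $\partial_t K_2$), reflecting that $w_N$ is only bounded in $L^2$ up to a factor $N^{\beta}$, which we must absorb.

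First I would handle $R_0$: it is $\dGamma$ of a bounded operator (boundedness by Lemma~\ref{lem:Hartree}) times $(1-\cN)/(N-1)$, so on $\cN\le m$ it is bounded by $C_t \frac{m}{N}\cN \le C_t \sqrt{m/N}\,\dGamma(1)$, which is much smaller than the claimed bound. For $R_1$, write $a(g(t))$ with $g(t)=Q(t)[w_N*|u(t)|^2]u(t)$; one has $\|g(t)\|_{L^2}\le C_t$ by Lemma~\ref{lem:Hartree}, and the prefactor $\cN\sqrt{N-\cN}/(N-1)$ on $\cN\le m$ is $\le C\sqrt{m}\cdot\sqrt{N}/N = C\sqrt{m/N}\cdot$const; estimating $\pm(a(g)+a^*(g)) \le \delta \cN + \delta^{-1}\|g\|^2$ and choosing $\delta$ to distribute the prefactor gives the bound, again with no $N^\eps$ needed (this term is even better than claimed, but that is harmless). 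The term $R_2$ is the pairing term $\iint K_2 a^*_x a^*_y$ times $(\sqrt{(N-\cN)(N-\cN-1)}/(N-1) - 1)$; the key point is that the scalar factor is $O(\cN/N)$, so on $\cN\le m$ it is $O(m/N)$, and then one applies Lemma~\ref{lem:Bog-GSE} with $H = \eta(1-\Delta) + \eta^{-1}\|K_2\|^2$ exactly as in the proof of Lemma~\ref{lem:bHt-dbHt}, picking up $\|(1-\Delta)^{-1/2}K_2\|^2_{\rm HS}\le C_{t,\eps}N^\eps$ from Lemma~\ref{lem:Sobolev-inverse-K2}; the extra factor $m/N$ then sits in front and, after rebalancing $\eta$, yields $C_{t,\eps}N^\eps\sqrt{m/N}\,\dGamma(1-\Delta)$ (the square root rather than the full power $m/N$ being a harmless weakening).

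The genuinely delicate terms are the cubic $R_3$ and the quartic $R_4$, since these contain $w_N$ as a two-body multiplication operator and require the Sobolev smoothing in both variables. For $R_4 = \frac{1}{2(N-1)}\iiiint (Q\otimes Q\, w_N\, Q\otimes Q)(x,y;x',y') a^*_x a^*_y a_{x'} a_{y'}$, I would bound it by the standard trick: $\pm R_4 \le \frac{C}{N}\langle \Psi, \dGamma(1-\Delta)^{1/2}\otimes\text{-type estimates}\rangle$; concretely, write the kernel as an operator on $\gH\otimes\gH$ and use that $\|(1-\Delta_x)^{-1/2}(1-\Delta_{x'})^{-1/2} w_N(x-x') (1-\Delta_y)^{-1/2}(1-\Delta_{y'})^{-1/2}\|$ or the relevant Hilbert--Schmidt norm is $\le C_{t,\eps}N^\eps$ (this is the two-sided version of Lemma~\ref{lem:Sobolev-inverse-K2}, proved by the same Fourier computation inspired by \cite{GriMac-13}: $\widehat{w_N}$ is bounded, $Qu$ factors are $H^{d+2}$), so that $\pm R_4 \le \frac{C_{t,\eps}N^\eps}{N}\,\dGamma(1-\Delta)\cdot(\cN \wedge \text{const})$; on $\cN\le m$ this is $\le C_{t,\eps}N^\eps\frac{m}{N}\dGamma(1-\Delta)\le C_{t,\eps}N^\eps\sqrt{m/N}\,\dGamma(1-\Delta)$. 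Here I expect the main obstacle: making the operator-norm/Hilbert--Schmidt bookkeeping precise so that each of the four annihilation/creation operators can be paired with one resolvent $(1-\Delta)^{-1/2}$ while the kernel of $w_N$ (which is singular as $N\to\infty$) is controlled only after sandwiching between \emph{two} resolvents — this is exactly where the restriction $0<\beta<1$ (so that $N^{\beta}$ interpolates against the $N^{2\beta}$ loss to an arbitrarily small $N^\eps$) and the extra $L^\infty$ hypothesis on $w$ in $d=2$ enter. The term $R_3$ is intermediate (one creation/annihilation imbalance, one free slot $1\otimes Q w_N Q\otimes Q$, prefactor $\sqrt{N-\cN}/(N-1)\sim N^{-1/2}$); it is handled by the same Cauchy--Schwarz-plus-Fourier method, splitting $a^*_y a_{x'} a_{y'}$ via Cauchy--Schwarz into a $\dGamma(1-\Delta)$-type term and an $\cN$-type term, with the $N^{-1/2}$ prefactor combining with $\sqrt{m}$ from $\cN\le m$. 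Finally, $\partial_t \cE_N$ and $i[\cE_N,\cN]$ are treated identically: differentiating in $t$ only replaces $u,Q$ by $\partial_t u, \partial_t Q$ (bounded/controlled by Lemma~\ref{lem:Hartree} and the $\partial_t K_2$ part of Lemma~\ref{lem:Sobolev-inverse-K2}), and commuting with $\cN$ only multiplies each monomial by the integer difference of creation minus annihilation operators, which is $O(1)$, so no new estimates are needed — one just repeats the above with the obvious modifications.
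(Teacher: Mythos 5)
Your proposal follows essentially the same route as the paper's proof: decompose $\cE_N=\tfrac12\sum_j(R_j+R_j^*)$, estimate each term separately by Cauchy--Schwarz, extract the gain $\sqrt{m/N}$ from the $\cN$-dependent prefactors on the sector $\cN\le m$, control $w_N$ by Sobolev-type bounds at the price of $N^\eps$ (where you invoke Lemma \ref{lem:Bog-GSE} for $R_2$ and a two-sided resolvent sandwich for $R_4$, the paper uses the pointwise operator inequality $|w_N(x-y)|\le C_\eps N^\eps(1-\Delta_x)$, which is interchangeable with your formulation), and handle $\partial_t\cE_N$ and $i[\cE_N,\cN]$ by repeating the same estimates with $\partial_t u,\partial_t Q,\partial_t K_2$ and by the fixed particle-number imbalance of each $R_j$. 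Two small repairs are needed but do not change the argument: on $\cN\le m$ the prefactor $\cN\sqrt{N-\cN}/(N-1)$ has size $m/\sqrt{N}$, not $\sqrt{m/N}$, and the claimed bound for $R_1$ instead comes from $\|\cN\sqrt{N-\cN}\,\Phi\|\le\sqrt{mN}\,\|\cN^{1/2}\Phi\|$; similarly, Lemma \ref{lem:Bog-GSE} does not apply verbatim to $R_2$ because of the $\cN$-dependent scalar factor, so one argues by Cauchy--Schwarz with $(\cN+1)^{\pm 1/2}$ weights as in the paper (also, neither $\beta<1$ nor an interpolation in $\beta$ is used at this stage -- the $N^\eps$ from Sobolev absorbs any $\beta>0$, and the restriction on $\beta$ enters only later when the lemmas are combined).
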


\begin{proof} We will use the decomposition \eqref{eq:E-error-R}. We will denote by $\Phi$ an arbitrary normalized vector in $\cF_+^{\le m}$.  Note that \eqref{eq:Bog-Er-1} and \eqref{eq:Bog-Er-2} follow from
\begin{equation}\label{eq:abs-Rj}
|\langle \Phi, R_j \Phi\rangle|\le C_{t,\eps} N^\eps \sqrt{\frac{m}{N}} \langle \Phi, \dGamma(1-\Delta) \Phi\rangle, \quad \forall 0\le j\le 4
\end{equation}
and
\begin{equation}\label{eq:abs-dt-Rj}
|\langle \Phi, \partial_t R_j \Phi\rangle|\le C_{t,\eps} N^\eps \sqrt{\frac{m}{N}} \langle \Phi, \dGamma(1-\Delta) \Phi\rangle,\quad \forall 0\le j\le 4
\end{equation}
respectively. Moreover, since 
$$ [R_0,\cN]=[R_4,\cN]=0,\quad [R_1,\cN]=R_1,\quad [R_2,\cN]=-2R_2, \quad [R_3,\cN]=R_3,$$
the commutator bound \eqref{eq:Bog-Er-3} also follows from \eqref{eq:abs-Rj}. 

Now we prove \eqref{eq:abs-Rj}-\eqref{eq:abs-dt-Rj} term by term. 

\medskip

\noindent
$\boxed{j=0}$ Recall that
$$
R_0 = \d\Gamma\Big( Q(t)[w_N*|u(t)|^2+\widetilde{K}_1(t) -\mu_N(t)]Q(t) \Big)\frac{1-\cN}{N-1}.
$$
From the operator bounds
$$
\| Q(t)[w_N*|u(t)|^2+\widetilde{K}_1(t) -\mu_N(t)]Q(t) \| \le C_t
$$
we have
$$
\pm R_0 \le C_t \frac{\cN (\cN+1)}{N}.
$$
Consequently, 
$$
\pm \1^{\le m}R_0 \1^{\le m} \le \frac{C_t m}{N}\cN.
$$
Similarly,
$$
\pm \partial_t R_0 \le \frac{C_t \cN (\cN+1)}{N}, \quad \pm  \1^{\le m} \partial_t R_0  \1^{\le m} \le\frac{C_t m}{N}\cN.
$$
\medskip

\noindent 
$\boxed{j=1}$ For every $\Phi\in \cF_+^{\le m}(t)$, by the Cauchy-Schwarz inequality we have
\begin{align*}
\left| \langle \Phi,  R_{1}\Phi \rangle \right| &= \frac{2}{N-1} \left| \left\langle \Phi, \cN\sqrt{N-\cN} a\Big( Q(t)[w_N*|u(t)|^2]u(t)  \Big) \Phi \right\rangle \right| \\
& \le \frac{2}{N-1} \| \cN\sqrt{N-\cN} \Phi\| \Big\| a \Big( Q(t)[w_N*|u(t)|^2]u(t) \Big) \Phi \Big\|.
\end{align*}
Using the elementary inequality $a^*(v)a(v)\le \|v\|_{L^2}^2 \cN$ and 
\begin{align*}
\Big\|  Q(t)[w_N*|u(t)|^2]u(t) \Big\|_{L^2}  \le \| [w_N*|u(t)|^2] u(t)\|_{L^2} \le C_t
\end{align*}
we obtain 
\begin{equation*}
\begin{aligned}
\frac{2}{N-1} & \| \cN\sqrt{N-\cN} \Phi\| \Big\| a \Big( Q(t)[w_N*|u(t)|^2]u(t) \Big) \leq 
 C \frac{\delta }{(N-1)^2} \|\sqrt{N} \cN \Phi\|^2 \\
 &  +C_t \delta^{-1}\|\cN^{1/2} \Phi\|^2 \leq C\frac{m \delta}{N}\langle \Phi, \cN \Phi\rangle + C_t \delta^{-1}\langle \Phi, \cN \Phi\rangle
\end{aligned}
\end{equation*}
which choosing $\delta=\sqrt{N/m}$ yields the result
\begin{align}\label{eq:R1-final-0}
\left| \langle \Phi,  R_{1} \Phi \rangle \right| \le C_t \sqrt{\frac{m}{N}} \langle \Phi, \cN \Phi\rangle.
\end{align}
Similarly, using 
\begin{align*}
\Big\|\partial_t \Big( Q(t)[w_N*|u(t)|^2]u(t) \Big)\Big\|_{L^2}  \le C_t
\end{align*}
we get
\begin{align}\label{eq:R1-final-01}
\left| \langle \Phi,  \partial_t R_{1} \Phi \rangle \right| \le   C_t \sqrt{\frac{m}{N}} \langle \Phi, \cN \Phi\rangle.
\end{align}

\noindent
$\boxed{j=2}$ We have 
\begin{align*}
&\langle \Phi, R_2 \Phi \rangle = \iint  K_2(t,x,y) \Big\langle \Phi, a^*_x a^*_y  \Big(\frac{\sqrt{(N-\cN)(N-\cN-1)}}{N-1}-1\Big) \Phi \Big\rangle \d x \d y \\
&= \iint w_N(x-y) u(t,x)u(t,y) \Big\langle \Phi, a^*_x a^*_y  \Big(\frac{\sqrt{(N-\cN)(N-\cN-1)}}{N-1}-1\Big) \Phi \Big\rangle \d x \d y.
\end{align*}
Here we have replaced $K_2(t)=Q(t)\otimes Q(t) \widetilde K_2(t)$ by $\widetilde K_2(t)$, namely ignored the projection   projection $Q(t)$, because $\Phi$ belongs to the excited Fock space $\cF_+(t)$. By the Cauchy-Schwarz inequality, \begin{align*} 
 \left| \langle \Phi, R_2 \Phi \rangle \right| & \le \iint  |w_N(x-y)| . |u(t,x)|. |u(t,y)|.  \| (\cN+1)^{-1/2} a_x a_y \Phi \| \d x \d y  \nn\\
&\qquad \times  \left\| (\cN+1)^{1/2}\Big(\frac{\sqrt{(N-\cN)(N-\cN-1)}}{N-1}-1\Big) \Phi  \right\| \\
&\le \left( \iint  |w_N(x-y)| |u(t,x)|^2 |u(t,y)|^2 \d x \d y \right)^{1/2} \\
&\qquad \times \left( \iint |w_N(x-y)| \| (\cN+1)^{-1/2}a_x a_y \Phi \|^2 \d x \d y \right)^{1/2}\\
&\qquad \times  \left\| (\cN+1)^{1/2}\Big(\frac{\sqrt{(N-\cN)(N-\cN-1)}}{N-1}-1\Big) \Phi  \right\|.
\end{align*}
On the truncated Fock space $\cF_{+}^{\le m}$, we have the operator inequality
\begin{align*}
\left( (\cN+1)^{1/2} \Big(\frac{\sqrt{(N-\cN)(N-\cN-1)}}{N-1}-1\Big) \right)^2  \le \frac{2m}{N} (\cN+1).
\end{align*}
Moreover, from Sobolev's embedding in $\R^2$ we have the operator estimate on $L^2((\R^2)^2)$
\begin{equation}\label{eq:Sobolev_interaction}
|w_N(x-y)| \le C_\eps N^\eps (1-\Delta_x), \quad \forall \eps>0. 
\end{equation}
Consequently, 
\begin{align*} 
&\iint |w_N(x-y)| a_x^* a_y^* (\cN+1)^{-1} a_x a_y \d x \d y\\
& =(\cN+3)^{-1} \iint  a_x^* a_y^* |w_N(x-y)| a_x a_y \d x \d y\\
&\le C_\eps N^\eps (\cN+3)^{-1} \iint   a_x^* a_y^* (1-\Delta)_x a_x a_y \d x \d y\\
&\le C_\eps N^\eps \dGamma(1-\Delta) .
\end{align*}
Thus, since
$$
\left( \iint  |w_N(x-y)| |u(t,x)|^2 |u(t,y)|^2 \d x \d y \right)^{1/2}\leq C_t
$$
we obtain
$$
\langle \Phi,R_2 \Phi\rangle \leq C_t \delta C_\eps N^\eps \langle \Phi,\dGamma(1-\Delta)\Phi \rangle +C_t \delta^{-1}\frac{m}{N} \langle \Phi,(\cN+1)\Phi \rangle
$$
which choosing $\delta=\sqrt{m/N}$ finally leads to
$$
|\langle \Phi, R_2 \Phi \rangle| \le C_\eps N^\eps \sqrt{\frac{m}{N}}\langle \Phi, \dGamma(1-\Delta) \Phi \rangle. 
$$

Next, we consider 
\begin{align*}
&\langle \Phi, \partial_t R_2  \Phi \rangle =\iint \Big[ (\partial_t Q(t) \otimes 1 + 1\otimes \partial_t Q(t)) \widetilde K_2(t,x,y) + \partial_t \widetilde K_2(t,x,y)\Big] \times \\
&\qquad \qquad \qquad \times   \Big \langle \Phi, a^*_x a^*_y \Big(\frac{\sqrt{(N-\cN)(N-\cN-1)}}{N-1}-1\Big) \Phi \Big\rangle \d  x \d y.
\end{align*}
Here we have used the decomposition
$$
\partial_t K_2(t)= \partial_t Q(t) \otimes Q(t) \widetilde K_2(t) + Q(t) \otimes \partial_t Q(t)  \widetilde K_2(t) + Q(t)\otimes Q(t) \partial_t \widetilde K_2(t)
$$
and omitted the projection $Q(t)$ again using $\Phi \in \cF^{\le m}_{+}(t)$. The term involving $\partial_t \widetilde K_2(t,x,y)$ can be bounded as above. The term involving $(\partial_t Q(t) \otimes 1)  \widetilde K_2(t,x,y)$ is bounded as
\begin{align*}
&\left| \iint \big(\partial_t Q(t) \otimes 1\big) \widetilde K_2(t,x,y)   \Big \langle \Phi, a^*_x a^*_y \Big(\frac{\sqrt{(N-\cN)(N-\cN-1)}}{N-1}-1\Big) \Phi \Big\rangle \d  x \d y \right| \\
&\le \left( \iint \Big| \big(\partial_t Q(t) \otimes 1\big) \widetilde K_2(t,x,y) \Big|^2 \d x \d y \right)^{1/2}\left( \iint \| (\cN+1)^{-1/2}a_x a_y \Phi\|^2 \d x \d y \right)^{1/2} \\
&\quad \times \left\| (\cN+1)^{1/2}\left(\frac{\sqrt{(N-\cN)(N-\cN-1)}}{N-1}-1\right) \Phi  \right\| \nn \\
&\le \frac{C_t m}{N}\langle \Phi, (\cN+1) \Phi \rangle.
\end{align*}
Here we used \eqref{eq:partialtK2bound}. The term involving $1 \otimes \partial_t Q(t)$ can be bounded similarly. Thus
$$
\left| \langle \Phi, \partial_t R_2  \Phi \rangle  \right| \le \frac{C_{t,\eps} N^\eps m}{N}\langle \Phi, \dGamma(1-\Delta) \Phi \rangle.
$$
Since $m\leq N$ the desired estimate follows.
\medskip

\medskip

\noindent
$\boxed{j=3}$ By the Cauchy-Schwarz inequality and Sobolev's inequality \eqref{eq:Sobolev_interaction} we have 
\begin{align} \label{eq:R3-final-0}  
\left| \langle \Phi, R_3 \Phi \rangle \right| &=  \frac{1}{N-1}  \left| \iint  w_N(x-y) \overline{u(t,x)} \Big \langle \Phi,\sqrt{N-\cN} a^*_y a_{y} a_x \Phi \Big\rangle \,\d x \d y \right| \nn\\
& \le \frac{1}{N-1} \iint  |w_N(x-y)| |u(t,x)| \cdot \| a_y \sqrt{N-\cN} \Phi \| \cdot \| a_{y}a_x  \Phi \| \d x \d y \nn\\
& \le \frac{\|u(t,\cdot)\|_{L^\infty}}{N-1} \left( \iint |w_N(x-y)| \|a_x a_y \Phi\|^2 \d x \d y\right)^{1/2} \nn \\
&\qquad\qquad\qquad \times \left( \iint |w_N(x-y)| \| a_y \sqrt{N-\cN} \Phi \|^2 \d x \d y \right)^{1/2} \nn\\
&\le  \frac{C_t}{N}  \langle \Phi,C_\eps N^\eps \dGamma(1-\Delta)\cN \Phi \rangle^{1/2} \langle \Phi, \cN N \Phi \rangle^{1/2}\nn\\
&\le \frac{C_{t,\eps} N^\eps m^{1/2}}{N^{1/2}} \langle \Phi, \dGamma(1-\Delta)  \Phi \rangle.
\end{align}
Next, 
\begin{align*}
\langle \Phi, \partial_t R_3 \Phi\rangle & = \frac{1}{N-1}  \iiiint \Big[   \big( 1 \otimes Q(t) w_N Q(t)\otimes Q(t)\big)(x,y;x',y') \overline{\partial_t u(t,x)} \\
&\qquad \qquad  +  \Big( \partial_t \big( 1 \otimes Q(t) w_N Q(t)\otimes Q(t)\big)\Big)(x,y;x',y') \overline{ u(t,x)} \Big] \times \\
& \qquad \qquad \qquad \qquad\qquad  \times \langle \Phi, \sqrt{N-\cN} a^*_y a_{x'} a_{y'}  \Phi \rangle \,\d x \d y \d x' \d y'.
\end{align*}
The term involving $\partial_t u(t,x)$ can  be estimated similarly to  \eqref{eq:R3-final-0}. For the other term, we decompose
\begin{align*}\partial_t \big( 1 \otimes Q(t) w_N Q(t)\otimes Q(t)\big)&= 1 \otimes \partial_t Q(t) w_N Q(t)\otimes Q(t)+1 \otimes Q(t) w_N \partial_t Q(t)\otimes Q(t)\\
&\quad +1 \otimes Q(t) w_N Q(t)\otimes \partial_t Q(t)
\end{align*}
We will use the kernel estimate
\bq \label{eq:kernel-dtQ}
|(\partial_t Q(t))(z;z')|= |\partial_t u(t,z) \overline{u(t,z')}+ u(t,z) \overline{\partial_t u(t,z')}| \le q(z) q(z')
\eq
where 
$$
q(t,z):=|u(t,z)| + |\partial_t u(t,z)| , \quad \|q(t, \cdot)\|_{L^2} \le C_t, \quad \|q(t,\cdot)\|_{L^\infty}  \le C_t.
$$
For the first term involving $1 \otimes \partial_t Q(t) w_N Q(t)\otimes Q(t)$, we can estimate 
\begin{align*}
&\frac{1}{N-1} \left| \iiiint  (1 \otimes \partial_t Q(t) w_N Q(t)\otimes Q(t)) (x,y;x',y')  \overline{u(t,x)}  \times \right.\\
& \qquad \qquad \qquad \qquad \qquad \qquad\left. \times \Big\langle \Phi, \sqrt{N-\cN} a^*_y a_{x'} a_{y'}  \Phi \Big\rangle \,\d x \d y \d x' \d y' \right|\\
&=\frac{1}{N-1}  \left|  \iiiint  (\partial_t Q(t))(y;y') w_N(x-y') \delta(x-x')  \overline{ u(t,x)} \times \right.\\
&\qquad \qquad \qquad \qquad \qquad \qquad  \left.\times \Big \langle \Phi,\sqrt{N-\cN} a^*_y a_x a_{y'}  \Phi \Big\rangle \,\d x \d y \d x' \d y' \right| \nn \\
&\le \frac{1}{N-1}  \iiint    |q(t,y)|\, |q(t,y')|\, |w_N(x-y')|  |q(t,x)| \times \\
&\qquad \qquad \qquad \qquad \qquad \qquad  \times \| a_y \sqrt{N-\cN} \Phi\| \|a_x a_{y'}  \Phi \|\,\d x \d y \d y' \nn \\
&\le \frac{\|q(t,\cdot)\|_{L^\infty}}{N-1} \left( \int |q(t,y)|^2 \d y\right)^{1/2} \left( \int \| a_y \sqrt{N-\cN} \Phi \|^2 \d y\right)^{1/2} \times \\
& \times \left( \iint w_N(x-y') |q(t,x)|^2 \d x \d y'\right)^{1/2} \left( \iint w_N(x-y') \| a_x a_{y'} \Phi\|^2 \d x \d y'\right)^{1/2} \\
&\le \frac{C_t}{N} \langle \Phi, \cN N \Phi\rangle^{1/2}  \langle \Phi, C_\eps N^\eps \dGamma(1-\Delta)\cN \Phi\rangle^{1/2}\\
&\le \frac{C_{t,\eps} N^\eps m^{1/2}}{N^{1/2}} \langle \Phi, \dGamma(1-\Delta) \Phi \rangle .
\end{align*}
For the term involving $1 \otimes  Q(t) w_N  \partial_t Q(t) \otimes Q(t)$, we have 
\begin{align*}
&\frac{1}{N-1} \left| \iiiint  (1 \otimes  Q(t) w_N  \partial_t Q(t) \otimes Q(t))(x,y;x',y')   \overline{u(t,x)}  \times \right. \\
& \qquad \qquad \qquad \qquad \qquad \qquad \left. \times \Big \langle \Phi, \sqrt{N-\cN} a^*_y a_{x'} a_{y'}  \Phi \Big\rangle \,\d x \d y \d x' \d y' \right|\\
&=  \frac{1}{N-1}  \left| \iiiint   w_N(x-y) (\partial_tQ(t))(x,x')  \delta(y-y') \overline{ u(t,x)} \times \right.\\
&\qquad \qquad \qquad \qquad \qquad \qquad \left.\times \Big \langle \Phi,\sqrt{N-\cN} a^*_y a_{x'} a_{y'}  \Phi \Big\rangle \,\d x \d y \d x' \d y' \right| \nn \\
&\le \frac{1}{N-1}   \iiint   |w_N(x-y)|\, |q(t,x)|\, |q(t,x')|   \times \\
&\qquad \qquad \qquad \qquad\qquad \qquad  \times \| a_y \sqrt{N-\cN} \Phi\| \|a_{x'} a_{y}  \Phi\| \,\d x \d y \d x' \nn \\
&\le \frac{ \| q(t,\cdot)\|_{L^\infty} ^2}{N-1} \|w_N\|_{L^1} \left( \iint \|a_{x'} a_{y}  \Phi\|^2 \,\d x' \d y \right)^{1/2} \times  \\
&\quad \times   \left( \iint |q(t,x')|^2 \| a_y \sqrt{N-\cN} \Phi\|^2 \d x' \d y \right)^{1/2} \\
&\le \frac{C}{N}  \langle \Phi, \cN^2 \Phi\rangle^{1/2} \langle \Phi, \cN N \Phi\rangle^{1/2}  \le \frac{Cm^{1/2}}{N^{1/2}} \langle \Phi, \cN  \Phi\rangle.
\end{align*}
The term involving $1 \otimes  Q(t) w_N   Q(t) \otimes \partial_t Q(t)$ is bounded similarly as above. Thus 
\begin{align*}
\left| \langle \Phi, \partial_t R_3 \Phi \rangle\right| \le \frac{C_{t,\eps} N^\eps m^{1/2}}{N^{1/2}} \langle \Phi, \dGamma(1-\Delta)  \Phi \rangle.
\end{align*}

\medskip

\noindent
$\boxed{j=4}$ By Sobolev's inequality \eqref{eq:Sobolev_interaction}
$$
|R_4| \le \frac{C_\eps N^\eps}{N} \dGamma(1-\Delta)\cN.
$$
Therefore,
$$
\pm \1^{\le m} R_4 \1^{\le m} \le  \frac{C_\eps N^\eps m}{N} \dGamma(1-\Delta).
$$
Next, we consider  
\begin{align*}
&\langle \Phi, \partial_t R_4\Phi\rangle = \frac{1}{2(N-1)} \Re \iiiint \partial_t \Big ( Q(t) \otimes Q(t) w_N Q(t)\otimes Q(t)\Big)(x,y;x',y') \\
& \qquad \qquad \qquad \qquad \qquad \qquad\qquad \qquad\times \langle \Phi, a_x^* a^*_y a_{x'} a_{y'}  \Phi \rangle \,\d x \d y \d x' \d y'.
\end{align*}
Let us decompose $\partial_t \big ( Q(t) \otimes Q(t) w_N Q(t)\otimes Q(t)\big)$ into four terms, and consider for example $\partial_t Q(t) \otimes Q(t) w_N Q(t)\otimes Q(t)$. Using \eqref{eq:kernel-dtQ} again and Sobolev's inequality \eqref{eq:Sobolev_interaction},  we have
\begin{align*}
&\frac{1}{N-1} \left|  \iiiint \Big(\partial_t Q(t) \otimes Q(t) w_N Q(t)\otimes Q(t)\Big)(x,y;x',y')\times \right.\\
& \qquad \qquad \qquad \qquad \qquad \qquad \qquad \qquad \left. \times \Big\langle \Phi, a_x^* a^*_y a_{x'} a_{y'}  \Phi \Big\rangle \,\d x \d y \d x' \d y' \right|\\
&=  \frac{1}{N-1} \left| \iiiint (\partial_t Q(t))(x,x') w_N(x'-y) \delta(y-y')\times \right. \\
& \qquad \qquad \qquad \qquad \qquad \qquad\qquad \qquad \left. \times \Big\langle \Phi, a_x^* a^*_y a_{x'} a_{y'}  \Phi \Big\rangle \,\d x \d y \d x' \d y' \right|\\
&\le \frac{1}{N-1} \iiint |q(t,x)|\, |q(t,x')| \, |w_N(x'-y)|\, \| a_x a_y \Phi\| \|a_{x'} a_{y} \Phi\| \,\d x \d y \d x'\\
&\le \frac{\|q(t,\cdot)\|_{L^\infty}}{N-1} \left( \iiint |w_N(x'-y)| \| a_x a_y \Phi\|^2 \d x \d y \d x'  \right)^{1/2} \times \\
&\qquad \qquad \qquad \qquad  \times \left( \iiint |w_N(x'-y)|\,|q(t,x)|^2 \| a_{x'} a_y \Phi\|^2 \d x \d y \d x'  \right)^{1/2} \\
&\le \frac{C_t}{N}   \langle \Phi, \cN^2 \Phi\rangle^{1/2} C_\eps N^\eps \langle \Phi, \dGamma(1-\Delta) \cN \Phi\rangle^{1/2} \\
&\le \frac{C_{t,\eps} N^\eps m}{N}  \langle \Phi, \dGamma(1-\Delta) \Phi\rangle.
\end{align*}
Thus
\begin{align*}
\left| \langle \Phi, \partial_t R_4\Phi \rangle \right| \le \frac{C_{t,\eps} N^\eps m}{N}  \langle \Phi, \dGamma(1-\Delta) \Phi\rangle.\end{align*}
This completes the proof.  
\end{proof}

As a simple consequence of the above estimates, we have an a-priori upper bound for the kinetic energy of $\Phi_N$. 

\begin{lemma}\label{lem:kinetic-PhiN} For all $\beta>0$ we have
$$
\langle \Phi_N(t), \dGamma(1-\Delta) \Phi_N(t) \rangle \le C_{t,\eps} (N+N^{2\beta}). 
$$
\end{lemma}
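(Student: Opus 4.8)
The plan is to derive the bound from conservation of the $N$-body energy along the Schr\"odinger flow. Since $\Phi_N(t)\in\cF_+^{\le N}(t)$ one has $\cN\le N$ there, hence $\langle\Phi_N(t),\cN\Phi_N(t)\rangle\le N\|\Psi_N(0)\|^2$, and it suffices to bound $\langle\Phi_N(t),\dGamma(-\Delta)\Phi_N(t)\rangle$. Because $\Phi_N(t)$ lies in the excited Fock space, $\langle\Phi_N(t),\dGamma(-\Delta)\Phi_N(t)\rangle=\langle\Phi_N(t),\dGamma\!\big(Q(t)(-\Delta)Q(t)\big)\Phi_N(t)\rangle$, and inserting the factorization of the excitation map $U_N(t)$ (as in \cite{LewNamSerSol-15,LewNamSch-15})
\[
U_N(t)\,\dGamma(-\Delta)\,U_N(t)^{*}=\|\nabla u(t)\|_{L^{2}}^{2}\,(N-\cN)+\dGamma\!\big(Q(t)(-\Delta)Q(t)\big)+\Big(\sqrt{N-\cN}\;a^{*}\!\big(Q(t)(-\Delta)u(t)\big)+\mathrm{h.c.}\Big),
\]
together with $\Psi_N(t)=U_N(t)^{*}\Phi_N(t)$, yields
\[
\langle\Phi_N(t),\dGamma(-\Delta)\Phi_N(t)\rangle=\Big\langle\Psi_N(t),\sum_{j=1}^{N}(-\Delta_{x_j})\Psi_N(t)\Big\rangle-\|\nabla u(t)\|_{L^{2}}^{2}\,\langle\Phi_N(t),(N-\cN)\Phi_N(t)\rangle-2\Re\big\langle\Phi_N(t),\sqrt{N-\cN}\,a^{*}(Q(t)(-\Delta)u(t))\,\Phi_N(t)\big\rangle .
\]
The second term is $\le 0$, while by Cauchy--Schwarz, $\cN\le N$, and $\|Q(t)(-\Delta)u(t)\|_{L^{2}}\le\|u(t)\|_{H^{2}}\le C_t$ from Lemma \ref{lem:Hartree}, the third is $\le C_t N$. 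Hence
\[
\langle\Phi_N(t),\dGamma(1-\Delta)\Phi_N(t)\rangle\le\Big\langle\Psi_N(t),\sum_{j=1}^{N}(-\Delta_{x_j})\Psi_N(t)\Big\rangle+C_t N ,
\]
and it remains to control the $N$-body kinetic energy of $\Psi_N(t)$.

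By conservation, $\langle\Psi_N(t),H_N\Psi_N(t)\rangle=\langle\Psi_N(0),H_N\Psi_N(0)\rangle$, and the initial energy satisfies $\langle\Psi_N(0),H_N\Psi_N(0)\rangle\le C_{\eps}N$: expanding $\Psi_N(0)=\sum_n u_0^{\otimes(N-n)}\otimes_s\psi_n(0)$, the condensate contributes the Hartree energy $N\big(\|\nabla u_0\|_{L^{2}}^{2}+\tfrac12\iint|u_0(x)|^{2}w_N(x-y)|u_0(y)|^{2}\,\d x\,\d y\big)=O(N)$ (the interaction part being $O(1)$ since $\|w_N\|_{L^{1}}=\|w\|_{L^{1}}$), and the excitation contributions are controlled by $\|u_0\|_{H^{d+2}}\le C$, $\langle\Phi(0),\dGamma(1-\Delta)\Phi(0)\rangle\le C$, $\Phi_N(0)=\1^{\le N}\Phi(0)$, and $\cN\le N$, $\cN^{2}\le N\cN$ on $\cF_+^{\le N}$, via Lemma \ref{lem:bHt-dbHt} and Lemma \ref{lem:Bog-app} for $\bH(0)$ and $\cE_N(0)$. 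Writing $H_N$ as kinetic plus interaction,
\[
\Big\langle\Psi_N(t),\sum_j(-\Delta_{x_j})\Psi_N(t)\Big\rangle\le\langle\Psi_N(t),H_N\Psi_N(t)\rangle+\frac1{N-1}\Big\langle\Psi_N(t),\sum_{j<k}(w_N)_-(x_j-x_k)\Psi_N(t)\Big\rangle ,
\]
and the negative interaction is estimated using only the explicit scalings $\|w_N\|_{L^{\infty}}=N^{2\beta}\|w\|_{L^{\infty}}$, $\|w_N\|_{L^{2}}=N^{\beta}\|w\|_{L^{2}}$ and $\|w_N\|_{L^{1}}=\|w\|_{L^{1}}$ --- crucially \emph{not} the Sobolev estimate \eqref{eq:Sobolev_interaction}. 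This produces $\big\langle\Psi_N(t),\sum_j(-\Delta_{x_j})\Psi_N(t)\big\rangle\le C_{t,\eps}(N+N^{2\beta})$, and the lemma follows from the previous paragraph.

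The hard part is this last passage, and the reason is precisely the focusing sign $w\le 0$. In the defocusing case one would simply discard the nonnegative interaction; here it must be retained, and one cannot absorb its negative part into the kinetic energy through \eqref{eq:Sobolev_interaction}, since that reintroduces a term comparable to $\sum_j(-\Delta_{x_j})$ with coefficient $\sim N^{\eps}$ and the wrong sign --- exactly the failure of the stability of the second kind $H_N\ge-CN$ of \cite{LewNamRou-16b,LewNamRou-17} in the present range of $\beta$. Retaining instead only the $L^{1}$, $L^{2}$ and $L^{\infty}$ norms of $w_N$ together with the a priori bound $\cN\le N$ on $\cF_+^{\le N}$ costs the crude factor $N^{2\beta}$; this is acceptable because the estimate is only a priori and the loss is recovered by the localization argument of Section \ref{sec:loc}. (Equivalently, one may propagate $\langle\Phi_N(t),\cG_N(t)\Phi_N(t)\rangle$ along \eqref{eq:eq-PhiNt}: the coercivity $\cG_N(t)+C_{t,\eps}(N+N^{2\beta})\ge\tfrac12\,\dGamma(1-\Delta)$ on $\cF_+^{\le N}$ follows by bounding the pairing term of $\bH(t)$ from below through Lemma \ref{lem:Bog-GSE} applied to the \emph{constant} operator $C_t\,\1$ rather than to $1-\Delta$, so that $\|K_2(t)\|_{L^{2}}^{2}\le C_t N^{2\beta}$ enters only additively, and then $\tfrac{d}{dt}\langle\Phi_N,\cG_N\Phi_N\rangle=\langle\Phi_N,\partial_t\cG_N\Phi_N\rangle\le C_{t,\eps}(N+N^{2\beta})$ by the same $L^{p}$-type bounds on the $R_j$'s.)
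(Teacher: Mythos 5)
Your skeleton (energy conservation, the conjugation identity for $U_N(t)\,\dGamma(1-\Delta)\,U_N(t)^*$, and the bound $\langle\Psi_N(0),H_N\Psi_N(0)\rangle\le CN$ on the initial energy) coincides with the paper's proof, but the decisive step is not justified and, as stated, fails. You claim that
\[
\frac1{N-1}\Big\langle\Psi_N(t),\sum_{j<k}(w_N)_-(x_j-x_k)\,\Psi_N(t)\Big\rangle\le C_{t,\eps}(N+N^{2\beta})
\]
can be obtained ``using only'' $\|w_N\|_{L^1}$, $\|w_N\|_{L^2}$, $\|w_N\|_{L^\infty}$ and $\cN\le N$, explicitly refusing any absorption into the kinetic energy. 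Without borrowing kinetic energy the only bound available per pair is the pointwise one $(w_N)_-\le\|w_N\|_{L^\infty}\sim N^{2\beta}$ (an $L^2$ bound would require control of the two-particle density in $L^2$, i.e.\ again regularity of $\Psi_N$), and since $\binom{N}{2}/(N-1)\sim N/2$ this gives $N^{1+2\beta}$, not $N+N^{2\beta}$. The paper closes exactly this gap by quoting Lemma 2 of \cite{LewNamRou-17}: $H_N-\eps\sum_j(-\Delta_{x_j})\ge -C_\eps N^{2\beta}$ for a \emph{fixed} small $\eps>0$. This is a genuine many-body stability estimate resting on the hypothesis $\int|w_-|<a^*$ through the Gagliardo--Nirenberg inequality \eqref{eq:GN-ineq}: it absorbs the attractive interaction into a fixed fraction of the kinetic energy (coefficient $\eps<1$, not $N^\eps$) at the price of an \emph{additive} error $N^{2\beta}$, and it is valid here without a trap; what is unavailable is only the full stability of the second kind $H_N\ge -CN$. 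Your diagnosis conflates this input with the per-pair Sobolev bound \eqref{eq:Sobolev_interaction} (which is indeed useless for this purpose), and by discarding both you are left with no mechanism that produces the stated $N+N^{2\beta}$.

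The parenthetical alternative does not repair this. Taking $H=C_t\1$ in Lemma \ref{lem:Bog-GSE} does give the pairing term of $\bH(t)$ a lower bound $-C_t\cN-C_t\|K_2\|_{\rm HS}^2/C_t\ge -C_t(\cN+N^{2\beta})$, but the coercivity of $\cG_N(t)=\1^{\le N}(\bH(t)+\cE_N(t))\1^{\le N}$ also needs a lower bound on the error term, and the only control available (Lemma \ref{lem:Bog-app} with $m=N$) is $\pm\1^{\le N}\cE_N(t)\1^{\le N}\le C_{t,\eps}N^\eps\,\dGamma(1-\Delta)$, i.e.\ a coefficient $N^\eps$ in front of the full kinetic operator that cannot be absorbed into $\tfrac12\dGamma(1-\Delta)$; the cubic and quartic terms $R_3,R_4$ (and $\partial_t R_3,\partial_t R_4$) cannot be bounded by $C_t(N+N^{2\beta})$ through $L^p$ norms of $w_N$ and $\cN\le N$ alone, for the same reason as above. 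Even granting coercivity, a Gronwall bound of the form $\tfrac{d}{dt}\langle A\rangle\le C_{t,\eps}N^\eps\langle A\rangle$ produces a factor $\exp(C_{t,\eps}N^\eps)$, which is not $O(N^{2\beta})$. In short, some kinetic borrowing with an $N$-independent small constant is indispensable, and the correct tool is precisely the weak stability estimate of \cite{LewNamRou-17} that you set aside.
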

\begin{proof} Since $\Psi_N(t)$ is the Schr\"odinger dynamics, it preserves energy:  
$$\langle \Psi_N(t), H_N \Psi_N(t)\rangle= \langle \Psi_N(0), H_N \Psi_N(0)\rangle.$$ 
From \cite[Lemma 2]{LewNamRou-17} we have the lower bound 
$$
H_N -\eps \sum_{i=1}^N (-\Delta_i) \ge -C_\eps N^{2\beta}. 
$$
with fixed $\eps>0$ small enough. We conclude that
\begin{equation*}
\begin{aligned}
\langle \Psi_N, \sum_{i=1}^N (1-\Delta_i)  \Psi_N \rangle \le N+C_\eps N^{2\beta}+C_\eps\langle \Psi_N, H_N \Psi_N \rangle \leq C_\eps (N+N^{2\beta})
\end{aligned}
\end{equation*}
which follows from 
$$
\langle \Psi_N(0), H_N \Psi_N(0)\rangle\leq CN.
$$
The last inequality follows from the assumption on $\Psi_N(0)$ and bounds derived in Lemmas \ref{lem:bH-kinetic} and \ref{lem:Bog-app}. Finally, we have
\begin{equation*}
\begin{aligned}
\langle \Psi_N & (t), \sum_{i=1}^N (1-\Delta_i)  \Psi_N(t) \rangle=\langle \Phi_N(t), U_N(t) \dGamma (1-\Delta)U_N^* (t)   \Phi_N(t) \rangle \\
&  = \langle \Phi_N(t),\dGamma (1-\Delta)\Phi_N (t) \rangle +\int |\nabla u(t)|^2 dx   \langle \Phi_N(t), (N-\cN)\Phi_N(t)\rangle \\
&\qquad+ \left[\langle \Phi_N(t), \sqrt{N-\cN} a(Q(t) (-\Delta) u(t) ) \Phi_N(t)\rangle +\text{h.c.}\right].
\end{aligned}
\end{equation*}
Since 
$$\int |\nabla u(t)|^2 dx   \langle \Phi_N(t), (N-\cN)\Phi_N(t)\rangle \leq C_t N$$
and 
\begin{equation*}
\begin{aligned}
\Big|\langle \Phi_N(t),& \sqrt{N-\cN} a(Q(t) (-\Delta) u(t) ) \Phi_N(t)\rangle\Big|\leq \|\sqrt{N-\cN} \Phi_N \|\, \|a(Q(t) (-\Delta) u(t) ) \Phi_N(t)\| \\
& \leq N+ \|Q(t) (-\Delta) u(t) \|_2^2 N\leq C_t N
\end{aligned}
\end{equation*}
we conclude that
$$\langle \Phi_N(t),\dGamma (1-\Delta)\Phi_N (t) \rangle \leq C_{t,\eps}(N+N^{2\beta}).$$
\end{proof}

\section{Truncated dynamics} \label{sec:loc}

As we explained, the main goal is to compare $\Phi_N$ and $\Phi$. Instead of doing this directly, we introduce an intermediate dynamics living in the sector of very few particles. Related ideas have been used in our works \cite{NamNap-17,BNNS} on defocusing 3D systems. Take 
$$M = N^{1-\delta}, \quad \delta\in (0,1)$$
and let $\Phi_{N,M}(t)$ be the solution to
\begin{equation} \label{eq:def-PhiNM}
i\partial_t \Phi_{N,M}(t) = \1^{\le M}\cG_{N}(t)\1^{\le M}\Phi_{N,M}(t), \quad \Phi_{N,M}(0)= \1^{\le M} \Phi_{N}(0).
\end{equation}
Putting differently, $\Phi_{N,M}(t)$ is the dynamics in the truncated excited Fock space $\cF_+^{\le M}$, generated by the quadratic form $\cG_{N}(t)$ restricted on $\cF_+^{\le M}$. 

The existence and uniqueness of $\Phi_{N,M}(t)$ follows from \cite[Theorem 7]{LewNamSch-15}. Let us briefly explain why $\Phi_{N,M}(t)$ indeed belongs to $\cF_+^{\le M}$. The identity 
$$\frac{d}{dt}\|\1^{>M}\Phi_{N,M}(t)\|^2 =\langle  \Phi_{N,M}(t), i[\1^{\le M}\cG_{N}(t)\1^{\le M},\1^{>M}]\Phi_{N,M}(t)\rangle=0$$
together with the initial condition $\1^{>M}\Phi_{N,M}(0)=0$ imply that $\1^{>M}\Phi_{N,M}(t)=0$ and thus indeed $\Phi_{N,M}(t)$ belongs to the truncated space with no more than $M$ particles. 

The fact that $\Phi_{N,M}(t)$ belongs to the excited space follows from the same argument applied to  $\|a(u(t))\Phi_{N,M}(t)\|^2$ (see also proof of \cite[Thm. 7]{LewNamSch-15}. 

 The kinetic energy of this truncated dynamics is the subject of the following

\begin{lemma} \label{lem:kinetic-PhiNM} When $M = N^{1-\delta}$ with a constant $\delta\in (0,1)$ we have
$$ \langle \Phi_{N,M}(t), \dGamma(1-\Delta) \Phi_{N,M}(t) \rangle \le C_{t,\eps} N^\eps,\quad \forall \eps>0.$$
\end{lemma}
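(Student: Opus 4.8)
```latex
\begin{proof}[Proof sketch]
The plan is to run a Gr\"onwall argument on the kinetic energy of the truncated dynamics $\Phi_{N,M}(t)$, exactly in the spirit of the proof of Lemma~\ref{lem:bH-kinetic}, but now exploiting the crucial gain that the truncation to $\cF_+^{\le M}$ with $M=N^{1-\delta}$ produces a small factor $\sqrt{M/N}=N^{-\delta/2}$ in the error estimates of Lemma~\ref{lem:Bog-app}. Since $\Phi_{N,M}(t)$ is supported on $\cF_+^{\le M}$ and is generated by the quadratic form $\1^{\le M}\cG_N(t)\1^{\le M}=\1^{\le M}[\bH(t)+\cE_N(t)]\1^{\le M}$, on this sector we may freely insert $\1^{\le M}$ and apply Lemma~\ref{lem:Bog-app} with $m=M$.

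First I would set, as in Lemma~\ref{lem:bH-kinetic}, the auxiliary observable
$$
A_M(t):=\1^{\le M}\big[\bH(t)+C_{t,\eps}(\cN+N^\eps)\big]\1^{\le M},
$$
with $C_{t,\eps}$ large enough that $A_M(t)\ge \tfrac12\1^{\le M}\dGamma(1-\Delta)\1^{\le M}$ on $\cF_+^{\le M}$; this is legitimate by Lemma~\ref{lem:bHt-dbHt}, since $\cN\le M\le N$ there. Then I would compute
$$
\frac{d}{dt}\big\langle \Phi_{N,M}(t),A_M(t)\Phi_{N,M}(t)\big\rangle
=\big\langle \Phi_{N,M}(t),\partial_t A_M(t)\,\Phi_{N,M}(t)\big\rangle
+\big\langle \Phi_{N,M}(t), i[\1^{\le M}\cG_N(t)\1^{\le M},A_M(t)]\Phi_{N,M}(t)\big\rangle .
$$
The time-derivative term is controlled by $\partial_t\bH(t)$ (Lemma~\ref{lem:bHt-dbHt}) plus $\partial_t\cE_N(t)$ (Lemma~\ref{lem:Bog-app}), and the commutator produces $i[\bH(t),\cN]$, $i[\cE_N(t),\cN]$ and cross terms, all of which are bounded by $\eta\dGamma(1-\Delta)+C_{t,\eps}(1+\eta^{-1})(\cN+N^\eps)$ for $\bH$ and by $C_{t,\eps}N^\eps\sqrt{M/N}\,\dGamma(1-\Delta)$ for $\cE_N$; choosing $\eta$ small and absorbing the small prefactor $\sqrt{M/N}$ into the lower bound $A_M(t)\gtrsim \dGamma(1-\Delta)$ yields
$$
\frac{d}{dt}\big\langle \Phi_{N,M}(t),A_M(t)\Phi_{N,M}(t)\big\rangle\le C_{t,\eps}\big\langle \Phi_{N,M}(t),A_M(t)\Phi_{N,M}(t)\big\rangle+C_{t,\eps}N^\eps.
$$
Gr\"onwall's lemma then gives $\langle \Phi_{N,M}(t),A_M(t)\Phi_{N,M}(t)\rangle\le C_{t,\eps}\big(\langle\Phi_{N,M}(0),A_M(0)\Phi_{N,M}(0)\rangle+N^\eps\big)$, and the left-hand side dominates $\tfrac12\langle\Phi_{N,M}(t),\dGamma(1-\Delta)\Phi_{N,M}(t)\rangle$.

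It remains to control the initial value. Since $\Phi_{N,M}(0)=\1^{\le M}\Phi_N(0)=\1^{\le M}\1^{\le N}\Phi(0)=\1^{\le M}\Phi(0)$, and using $A_M(0)\le \1^{\le M}\big[C_\eps(\dGamma(1-\Delta)+N^\eps)\big]\1^{\le M}$ together with $[\1^{\le M},\dGamma(1-\Delta)]=0$, we get $\langle\Phi_{N,M}(0),A_M(0)\Phi_{N,M}(0)\rangle\le C_\eps\big(\langle\Phi(0),\dGamma(1-\Delta)\Phi(0)\rangle+N^\eps\big)\le C_\eps N^\eps$ by the hypothesis $\langle\Phi(0),\dGamma(1-\Delta)\Phi(0)\rangle\le C$. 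Combining everything yields the claimed bound $\langle \Phi_{N,M}(t),\dGamma(1-\Delta)\Phi_{N,M}(t)\rangle\le C_{t,\eps}N^\eps$. The main subtlety, and the reason the truncation is introduced at all, is exactly the appearance of the gain $\sqrt{M/N}=N^{-\delta/2}$ in Lemma~\ref{lem:Bog-app}: it is this factor (absent in the untruncated bound of Lemma~\ref{lem:kinetic-PhiN}, which only gives $O(N+N^{2\beta})$) that lets the error term $\cE_N$ be absorbed into $\dGamma(1-\Delta)$ rather than merely bounded by it, thereby closing the Gr\"onwall estimate at size $N^\eps$ instead of $N$.
\end{proof}
```
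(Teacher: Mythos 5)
Your overall strategy (Gr\"onwall on a kinetic-type observable, absorbing the error $\cE_N$ via the gain $\sqrt{M/N}=N^{-\delta/2}$ from Lemma \ref{lem:Bog-app} with $m=M$, and controlling the initial datum through $\Phi_{N,M}(0)=\1^{\le M}\Phi(0)$) is exactly the paper's, but there is a genuine gap in your choice of observable. You set $A_M(t)=\1^{\le M}[\bH(t)+C_{t,\eps}(\cN+N^\eps)]\1^{\le M}$, while the dynamics is generated by $\1^{\le M}\cG_N(t)\1^{\le M}=\1^{\le M}[\bH(t)+\cE_N(t)]\1^{\le M}$. In the derivative of $\langle\Phi_{N,M}(t),A_M(t)\Phi_{N,M}(t)\rangle$ the commutator term is then
$$i\big[\1^{\le M}(\bH(t)+\cE_N(t))\1^{\le M},\,\1^{\le M}\bH(t)\1^{\le M}\big]+C_{t,\eps}\,i\big[\1^{\le M}\cG_N(t)\1^{\le M},\cN\big],$$
and the first piece contains the cross commutator $i[\1^{\le M}\cE_N(t)\1^{\le M},\1^{\le M}\bH(t)\1^{\le M}]$. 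This is precisely the "cross term" you wave at, but it is \emph{not} among the quantities estimated anywhere: Lemmas \ref{lem:bHt-dbHt} and \ref{lem:Bog-app} control $\bH$, $\partial_t\bH$, $[\bH,\cN]$, $\cE_N$, $\partial_t\cE_N$ and $[\cE_N,\cN]$, but not $[\cE_N,\bH]$. A commutator of the cubic/quartic error with $\dGamma(-\Delta)$ and the pairing term produces terms in which derivatives hit $w_N$, which cost positive powers of $N^\beta$ and cannot be absorbed by the quadratic-form bounds you cite; so as written the Gr\"onwall estimate does not close.

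The fix is the one the paper uses: take the observable to be the generator itself, $A(t)=\1^{\le M}\cG_N(t)\1^{\le M}+C_{t,\eps}(\cN+N^\eps)$. Then the dangerous commutator vanishes identically ($[\,\text{generator},\text{generator}\,]=0$) and the only surviving commutator is with $\cN$, which is exactly what the third bounds of Lemmas \ref{lem:bHt-dbHt} and \ref{lem:Bog-app} provide; the time derivative of $A(t)$ involves $\partial_t\bH$ and $\partial_t\cE_N$, both controlled, and the lower bound $A(t)\ge\tfrac12\dGamma(1-\Delta)-C_{t,\eps}(\cN+N^\eps)$ on $\cF_+^{\le M}$ follows from Lemma \ref{lem:bHt-dbHt} together with Lemma \ref{lem:Bog-app}, since $C_{t,\eps}N^\eps\sqrt{M/N}\le C_{t,\eps}N^{\eps-\delta/2}$ is small for $\eps<\delta/2$ (which is no loss, as $\eps$ is arbitrary). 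With that single substitution the remainder of your argument, including the treatment of the initial value, goes through and coincides with the paper's proof.
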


\begin{proof} By using Lemmas \ref{lem:bHt-dbHt}, \ref{lem:Bog-app} and the assumption $M\le N^{1-\delta}$ we have the following quadratic form estimate on $\cF^{\le M}:$
\begin{align*}
\pm (\cG_{N}(t)+\dGamma(\Delta)) &\le \frac{1}{2} \dGamma(1-\Delta) + C_{t,\eps} (\cN + N^\eps), \quad \forall \eps>0, \\
\pm \partial_t \cG_{N}(t) &\le C_t \dGamma(1-\Delta),\\
\pm i[\cG_{N}(t),\cN] &\le C_t \dGamma(1-\Delta).
\end{align*}
The desired estimate follows by applying Gronwall's argument as in the proof of Lemma \ref{lem:bH-kinetic}, with $A(t)$ replaced by $\1^{\le M}\cG_{N}(t)\1^{\le M} + C_{t,\eps} (\cN + N^\eps)$. 
\end{proof}

The main result of this section is the following comparison. 

\begin{lemma} \label{lem:PhiN-PhiNM}  When $M = N^{1-\delta}$ with a constant $\delta\in (0,1)$,
 $$\|\Phi_{N}(t)-\Phi_{N,M}(t)\|^2\le C_{t,\eps} N^\eps \Big( \frac{ N^\beta}{M}  + \frac{1}{M^{1/2}}\Big), \quad \forall \eps>0.$$
\end{lemma}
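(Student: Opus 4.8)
The plan is to compare the two dynamics by a Gr\"onwall estimate for $\|f(t)\|^2$ with $f(t):=\Phi_N(t)-\Phi_{N,M}(t)$. Differentiating and using the equations \eqref{eq:eq-PhiNt} and \eqref{eq:def-PhiNM}, the self-adjointness of the generators (which makes $\langle f,\cG_N(t) f\rangle$ real), the identity $\cG_N(t)\Phi_{N,M}(t)=[\bH(t)+\cE_N(t)]\Phi_{N,M}(t)$ (valid since $\Phi_{N,M}(t)\in\cF_+^{\le M}$ and $M<N$), and $\1^{>M}f(t)=\1^{>M}\Phi_N(t)$, one gets
\begin{align*}
\frac{d}{dt}\|f(t)\|^2 = 2\,\mathrm{Im}\,\big\langle \1^{>M}\Phi_N(t),\, [\bH(t)+\cE_N(t)]\,\Phi_{N,M}(t)\big\rangle .
\end{align*}
Since every monomial occurring in $\bH(t)+\cE_N(t)$ changes the particle number by at most $2$, only the two top sectors $\1^{\{M-1,M\}}\Phi_{N,M}(t)$ and only the particle-number-raising part
\begin{align*}
\cB(t)=\tfrac12\iint K_2(t,x,y)\,a_x^*a_y^*\,\d x\,\d y+\tfrac12\big(R_1^*(t)+R_2(t)+R_3^*(t)\big)
\end{align*}
can contribute ($R_0$, $R_4$ and $\dGamma(h)$ conserve $\cN$). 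Thus, after integrating in time, Lemma~\ref{lem:PhiN-PhiNM} follows once we prove
\begin{align*}
\big|\big\langle \1^{>M}\Phi_N(s),\,\cB(s)\,\1^{\{M-1,M\}}\Phi_{N,M}(s)\big\rangle\big| \le C_{t,\eps}N^\eps\Big(\frac{N^\beta}{M}+\frac1{M^{1/2}}\Big), \qquad 0\le s\le t,
\end{align*}
together with the elementary bound $\|f(0)\|^2=\|\1^{>M}\Phi_N(0)\|^2\le M^{-1}\langle\Phi(0),\cN\Phi(0)\rangle\le CM^{-1}$.

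The matrix element is then estimated term by term via Cauchy--Schwarz. On the $\Phi_{N,M}$-side one moves annihilation operators so as to produce powers of $\cN$ that are controlled by the concentration of $\Phi_{N,M}(t)$ near the vacuum sector, i.e.\ by Lemma~\ref{lem:kinetic-PhiNM} and, where required, by the higher-moment analogues $\langle\Phi_{N,M}(t),\dGamma(1-\Delta)(\cN+1)^{2k}\Phi_{N,M}(t)\rangle\le C_{t,k,\eps}N^\eps$ obtained by running the same Gr\"onwall argument with an extra weight $(\cN+1)^{2k}$; on the $\Phi_N$-side one uses only the kinetic bound of Lemma~\ref{lem:kinetic-PhiN} or the trivial $\|\1^{>M}\Phi_N\|\le1$. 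The short range of $w_N$ is treated as in the proof of Lemma~\ref{lem:Bog-app}: one replaces $|w_N(x-y)|$ by $C_\eps N^\eps(1-\Delta_x)$ to absorb the singularity into $\dGamma(1-\Delta)$, and extracts $\|w_N\|_{L^1}=\|w\|_{L^1}$ wherever a $\d y$-integration is free. The terms carrying the prefactor $(N-1)^{-1}$, namely $R_1^*$, $R_3^*$, and — through the coefficient $\sqrt{(N-\cN)(N-\cN-1)}/(N-1)-1\sim \cN/N$ — $R_2$, produce the $N^\beta/M$ summand, while the interaction-paired pieces $\tfrac12\iint K_2 a^*_xa^*_y$ and $\tfrac12 R_2$ produce the $M^{-1/2}$ summand after the time integration.

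The delicate point is this last one. The kernel $\widetilde K_2(t,x,y)=u(t,x)w_N(x-y)u(t,y)$ satisfies only $\|\widetilde K_2(t)\|_{L^2}\sim N^{\beta}$, so bounding $\iint K_2 a^*_xa^*_y$ applied to a state with $\sim M$ particles by $\|\widetilde K_2\|_{L^2}\cdot M$ is hopelessly lossy. Instead one expands $a_2(\overline{\widetilde K_2})\,a_2^*(\widetilde K_2)$ into normally ordered form and estimates the leading operator piece through the Sobolev inequality $|w_N(x-y)|\le C_\eps N^\eps(1-\Delta_x)$ together with the sharp $\cN$-weights, rather than through the crude $L^2$-norm of the kernel; the remaining scalar contraction term is precisely of size $\|\widetilde K_2\|_{L^2}^2\,\|\1^{\{M-1,M\}}\Phi_{N,M}\|^2\lesssim N^{2\beta}M^{-1}$. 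This is exactly where the ``very few particles'' philosophy replaces the stability of the second kind of \cite{LewNamRou-16b,LewNamRou-17}: one never needs $H_N\ge-CN$, only that $\Phi_{N,M}(t)$ lives on the low sectors, so all the smallness of the estimate has to be extracted from the $\Phi_{N,M}$-side. Summing the four contributions and adding the initial bound completes the proof.
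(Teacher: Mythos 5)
Your reduction to the boundary term is fine: the derivative identity $\tfrac{d}{dt}\|f(t)\|^2=2\,\mathrm{Im}\langle \1^{>M}\Phi_N(t),[\bH(t)+\cE_N(t)]\Phi_{N,M}(t)\rangle$, the observation that only the number-raising monomials acting on the sectors $\{M-1,M\}$ contribute, and the initial bound $\|f(0)\|^2\le C/M$ are all correct. The gap is in the quantitative estimate of this boundary term, and it is not a technicality: with the a priori information actually available, your route cannot reach the claimed bound. The only usable bound on the left factor is $\|\1^{>M}\Phi_N(t)\|\le 1$ (Lemma \ref{lem:kinetic-PhiN} only gives $\langle\Phi_N,\cN\Phi_N\rangle\lesssim N+N^{2\beta}$, so $\1^{>M}\Phi_N$ has no smallness), while on the right your own accounting of the scalar contraction shows $\|a_2^*(K_2)\1^{\{M-1,M\}}\Phi_{N,M}\|^2\gtrsim \|\widetilde K_2\|_{L^2}^2\,\|\1^{\{M-1,M\}}\Phi_{N,M}\|^2\sim N^{2\beta+\eps}/M$ (and this is a genuine lower bound, since the normally ordered remainders are nonnegative). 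Hence Cauchy--Schwarz gives a derivative of size $N^{\beta+\eps}M^{-1/2}$, not $N^\eps(N^\beta/M+M^{-1/2})$; after time integration this is larger than the statement of the lemma by a factor $\min(M^{1/2},N^\beta)$, and with the final choice $M=N^{(2\beta+1)/3}$ it even diverges for $\beta>1/4$. Your assertion that the paired pieces ``produce the $M^{-1/2}$ summand'' is therefore unsubstantiated. The proposed rescue via higher moments, $\langle\Phi_{N,M}(t),\dGamma(1-\Delta)(\cN+1)^{2k}\Phi_{N,M}(t)\rangle\le C_{t,k,\eps}N^\eps$, is not available either: Theorem \ref{thm:main} only assumes $\langle\Phi(0),\dGamma(1-\Delta)\Phi(0)\rangle\le C$, so $\Phi_{N,M}(0)=\1^{\le M}\Phi(0)$ carries no $N$-uniform bound on $(\cN+1)^{2k}$ beyond the trivial $M^{2k}$, and propagating that gains nothing. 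In other words, there is no mechanism in your argument forcing the two top sectors of the truncated dynamics to carry less mass than the first-moment estimate $N^\eps/M$, and that is exactly what a sharp cutoff at a single value $M$ would require.

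The missing idea is the paper's averaging over the cutoff location. Instead of a Gr\"onwall estimate for $\|f(t)\|^2$ with the fixed projection $\1^{\le M}$, the paper writes $\|f(t)\|^2\le 2(1-\Re\langle\Phi_N(t),\Phi_{N,M}(t)\rangle)$, splits the overlap with $\1^{\le m}$ for $m\in[M/2,M-3]$, and averages over $m$. The time derivative of each overlap is $\langle\Phi_N,i[\cG_N(t),\1^{\le m}]\Phi_{N,M}\rangle$, and the crucial point (Lemma \ref{lem:com-G-1}) is that the \emph{sum} over $m$ of these boundary commutators collapses to a single operator bounded by $C_{t,\eps}N^\eps\dGamma(1-\Delta)$, so the \emph{average} gains the factor $1/M$ that your sharp-cutoff derivative lacks; Cauchy--Schwarz with Lemmas \ref{lem:kinetic-PhiN} and \ref{lem:kinetic-PhiNM} then gives the $N^\eps N^\beta/M$ term, while the discarded sectors $\1^{>m}$ give the $N^\eps M^{-1/2}$ term through Lemma \ref{lem:kinetic-PhiNM}. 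If you want to keep a direct comparison of the two vectors, you would have to build this averaging (or some equivalent smoothing of the cutoff) into your argument; as written, the estimate of the boundary pairing term fails.
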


\begin{proof} Note that 
$$
\|\Phi_N(t)\|=\|\Phi_N(0)\|= \|\1^{\le N}\Phi(0)\|\le 1, \quad \|\Phi_{N,M}(t)\|=\|\Phi_{N,M}(0)\|=\|\1^{\le M} \Phi(0)\|\le 1.
$$
Therefore,
$$\|\Phi_{N}(t)-\Phi_{N,M}(t)\|^2\le 2\Big(1-\Re \langle \Phi_{N}(t),\Phi_{N,M}(t)\rangle \Big).$$
Take a parameter $M/2\le m\le M-3$ and write 
$$
\langle \Phi_{N}(t),\Phi_{N,M}(t)\rangle =\langle \Phi_{N}(t),\1^{\le m}\Phi_{N,M}(t)\rangle +\langle \Phi_{N}(t),\1^{>m}\Phi_{N,M}(t)\rangle.
$$
For the many-particle sectors, using the Cauchy-Schwarz inequality and Lemma \ref{lem:kinetic-PhiNM} we can estimate 
\begin{align} \label{eq:PhiN-PhiNM->m}
|\langle \Phi_{N}(t),\1^{>m}\Phi_{N,M}(t)\rangle| &\le \|\Phi_{N}(t)\|. \|\1^{>m}\Phi_{N,M}(t)\| \nn\\
&\le \Big\langle \Phi_{N,M}(t), (\cN/m) \Phi_{N,M}(t) \Big\rangle^{1/2} \nn\\
&\le C_{t,\eps} N^\eps M^{-1/2}. 
\end{align}
For the few-particle sectors, we use the equations of $\Phi_N(t)$ and $\Phi_{N,M}(t)$. We have
\begin{align*}
\frac{d}{dt} \langle \Phi_{N}(t),\1^{\le m}\Phi_{N,M}(t)\rangle &= i \Big\langle  \Phi_{N}(t), \Big(\cG_{N}(t)\1^{\le m} - \1^{\le m} \1^{\le M} \cG_{N}(t) \1^{\le M} \Big) \Phi_{N,M}(t)  \Big\rangle\\
&= \Big\langle  \Phi_{N}(t), i[\cG_{N}(t),\1^{\le m}] \Phi_{N,M}(t)  \Big\rangle
\end{align*}
Here in the last equality we have used 
\begin{equation} \label{eq:1<M1<m}
\1^{\le m} \1^{\le M} \cG_{N}(t)\1^{\le M}=\1^{\le m} \cG_{N}(t)
\end{equation}
which follows from the choice $m\le M-3$ and the fact that $\cG_N(t)$ contains at most 2 creation operators and at most 2 annihilation operators. 

Now we average over $m\in [M/2,M-3]$. 
\begin{lemma} \label{lem:com-G-1} When $M = N^{1-\delta}$ with a constant $\delta\in (0,1)$, then on the excited Fock space we have
$$
\pm \frac{1}{M/2-2}\sum_{m=M/2}^{M-3} i[\cG_{N}(t),\1^{\le m}]  \le \frac{C_{t,\eps} N^\eps}{M}   \dGamma(1-\Delta).$$
\end{lemma}

\begin{proof} We decompose 
$$[\cG_{N}(t),\1^{\le m}] = \1^{\le m} \cG_{N}(t)\1^{> m} - \1^{> m}\cG_{N}(t)\1^{\le m}.$$
We will focus on $\1^{> m}\cG_{N}(t)\1^{\le m}.$ The other term can be treated similarly.
Let us denote
\begin{equation*}
\begin{aligned}
A_1 & =\frac12  \iiiint( Q(t)\otimes Q(t)w_N Q(t)\otimes 1)(x,y;x',y')u(t,x') a^*_{x} a^*_{y} a_{y'} \,\d x \d y \d x' \d y',\\
 & \qquad   -a^*(Q(t)[w_N *|u(t)|^2]u(t))\cN =: A_1^3+A_1^1, \\
A_2 & = \frac12 \iint  K_2(t,x,y) a^*_x a^*_y \d x \d y.
\end{aligned}
\end{equation*}
 We then have
\begin{equation*}
\begin{aligned}
\1^{> m}&\cG_{N}(t)\1^{\le m}=\1^{\le N} \1^{> m}\Big( A_1\frac{\sqrt{N-\cN}}{N-1}+A_2 \frac{\sqrt{(N-\cN)(N-\cN-1)}}{N-1} \Big)\1^{\le m}\1^{\le N} \\
& =   A_1\frac{\sqrt{N-\cN}}{N-1}\1(\cN=m)+  A_2 \frac{\sqrt{(N-\cN)(N-\cN-1)}}{N-1} \1(m-1\leq \cN\leq m).  
\end{aligned}
\end{equation*}
Here we used the fact that $A_1$ contains one creation operator, while $A_2$ contains two creation operators. It follows that
\begin{equation*}
\begin{aligned}
&\sum_{m=M/2}^{M-3} \1^{> m}\cG_{N}(t)\1^{\le m}=  A_1\frac{\sqrt{N-\cN}}{N-1}\1(M/2\leq\cN \leq M-3)  \\
&  + A_2 \frac{\sqrt{(N-\cN)(N-\cN-1)}}{N-1}\left[\1(M/2-1 <\cN\leq M-3)+\1(M/2-1 \leq \cN< M-3)\right].
\end{aligned}
\end{equation*}
Let us now bound the first term. For any $X\in \cF_+$, using Cauchy-Schwarz we have 
\begin{equation*}
\begin{aligned}
|\langle & X, A_1^3 \frac{\sqrt{N-\cN}}{N-1}\1(M/2\leq\cN \leq M-3)X\rangle| \leq \frac12 \int |w_N(x-y)| |u(t,x)| \|a_x a_y X\| \times \\
& \times \|a_y \frac{\sqrt{N-\cN}}{N-1}\1(M/2\leq\cN \leq M-3)X\|dxdy \leq \frac{C}{N}\int |w_N(x-y)| \|a_x a_y X\|^2 dxdy \\
& \qquad +  \frac{C}{N}\int |w_N(x-y)||u(t,x)|^2  \|a_y \sqrt{N-\cN}\1(M/2\leq\cN \leq M-3)X\|^2 dxdy \\
& \leq \frac{C_\eps N^\eps}{N} \langle X, \dGamma(1-\Delta)\cN X \rangle +  C_t \langle X, \cN X \rangle \leq C_{t,\eps}N^{\eps} \dGamma(1-\Delta)
\end{aligned}
\end{equation*}
where we used \eqref{eq:Sobolev_interaction} and $\|u(t)\|_{\infty} \leq C_t$. The terms involving $A_1^1$ and $A_2$ can be bounded in the same way. This ends the proof.
\end{proof}

\begin{remark} From the proof, we also obtain 
$$
\pm \frac{1}{M/2-2}\sum_{m=M/2}^{M-3} i[\bH(t),\1^{\le m}]  \le \frac{C_{t,\eps} N^\eps}{M}   \dGamma(1-\Delta).$$
\end{remark}

Now we come back to the proof of Lemma \ref{lem:PhiN-PhiNM}. Using Lemma \ref{lem:com-G-1}, the Cauchy-Schwarz inequality and the kinetic estimates in Lemmas \ref{lem:kinetic-PhiN}, \ref{lem:kinetic-PhiNM}, we can estimate
\begin{align*}
&\Big| \frac{1}{M/2-2}\sum_{m=M/2}^{M-3}  \frac{d}{dt} \langle \Phi_{N}(t),\1^{\le m}\Phi_{N,M}(t)\rangle \Big| \\
&= \Big| \frac{1}{M/2-2}\sum_{m=M/2}^{M-3}  \Big\langle  \Phi_{N}(t), i[\cG_{N}(t),\1^{\le m}] \Phi_{N,M}(t)  \Big\rangle \Big| \\
&\le \frac{C_{t,\eps} N^{\eps}}{M} \Big\langle  \Phi_{N}(t), \dGamma(1-\Delta)\Phi_{N}(t)  \Big\rangle^{1/2} \Big\langle  \Phi_{N,M}(t), \dGamma(1-\Delta)\Phi_{N,M}(t)  \Big\rangle^{1/2}\\
&\le \frac{C_{t,\eps} N^{\frac32 \eps} N^\beta}{M}, \quad \forall \eps>0.
\end{align*}
Consequently,
\begin{align*}
&\Re \frac{1}{M/2-2}\sum_{m=M/2}^{M-3}   \langle \Phi_{N}(t),\1^{\le m}\Phi_{N,M}(t)\rangle \\
&\ge \Re \frac{1}{M/2-2}\sum_{m=M/2}^{M-3}   \langle \Phi_{N}(0),\1^{\le m}\Phi_{N,M}(0)\rangle - \frac{C_{t,\eps} N{\frac32 \eps} N^\beta}{M}.
\end{align*}
Moreover, since $\Phi_N(0)=\1^{\le N}\Phi(0)$, $\Phi_{N,M}(0)=\1^{\le M}\Phi(0)$ and $\langle \Phi(0),\cN \Phi(0)\rangle \le C$ we have
\begin{align*}
\langle \Phi_{N}(0),\1^{\le m}\Phi_{N,M}(0)\rangle &= \langle \Phi(0),\1^{\le m}\Phi(0)\rangle \\
&=1 -  \langle \Phi(0),\1^{>m}\Phi(0)\rangle \\
&\ge 1 -  \langle \Phi(0),(\cN/m)\Phi(0)\rangle \ge 1 -\frac{C}{M}.
\end{align*}
Thus
\begin{align*}
\Re \frac{1}{M/2-2}\sum_{m=M/2}^{M-3}   \langle \Phi_{N}(t),\1^{\le m}\Phi_{N,M}(t)\rangle \ge 1- \frac{C}{M}- \frac{C_{t,\eps} N^\eps N^\beta}{M}.
\end{align*}
Recall that from \eqref{eq:PhiN-PhiNM->m} we have immediately
\begin{align*}
\pm \Re \frac{1}{M/2-2}\sum_{m=M/2}^{M-3}   \langle \Phi_{N}(t),\1^{> m}\Phi_{N,M}(t)\rangle \le C_{t,\eps} N^\eps M^{-1/2}.
\end{align*}
Combining the latter two estimates, we arrive at
\begin{align*}
\Re \langle  \Phi_{N}(t),  \Phi_{N,M}(t)\rangle \ge 1 - C_{t,\eps }N^\eps \Big( \frac{ N^\beta}{M}  + \frac{1}{M^{1/2}}\Big).
\end{align*}
Therefore, we conclude that
$$
\| \Phi_{N}(t)-  \Phi_{N,M}(t)\|^2 \le 2 (1-\Re \langle  \Phi_{N}(t),  \Phi_{N,M}(t)\rangle) \le C_{t,\eps} N^\eps \Big( \frac{ N^\beta}{M}  + \frac{1}{M^{1/2}}\Big), \quad \forall \eps>0. 
$$
\end{proof}

\section{Proof of Theorem \ref{thm:main} ($d=2$)}\label{sec:main-proof}

In this section, we prove Theorem \ref{thm:main} in case $d=2$. We have compare $\Phi_N$ and $\Phi_{N,M}$ in the previous section. The last main step is the following comparison 

\begin{lemma} \label{lem:com-PhiNM-Phi}When $M = N^{1-\delta}$ with a constant $\delta\in (0,1)$, 
$$\|\Phi_{N,M}(t)-\Phi(t)\|^2 \le C_{t,\eps} N^\eps \Big( \sqrt{\frac{M}{N}} + \frac{1}{M} \Big), \quad \forall \eps>0.
$$ 
\end{lemma}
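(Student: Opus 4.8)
The plan is to reduce the claim to a lower bound on the overlap $\Re\langle\Phi_{N,M}(t),\Phi(t)\rangle$ and then run the sliding-threshold averaging scheme already used in Lemma~\ref{lem:PhiN-PhiNM}. Since $\|\Phi(t)\|=1$ and $\|\Phi_{N,M}(t)\|=\|\1^{\le M}\Phi(0)\|\le 1$, we have $\|\Phi_{N,M}(t)-\Phi(t)\|^2\le 2\big(1-\Re\langle\Phi_{N,M}(t),\Phi(t)\rangle\big)$, so it suffices to show $\Re\langle\Phi_{N,M}(t),\Phi(t)\rangle\ge 1-C_{t,\eps}N^\eps(\sqrt{M/N}+1/M)$. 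Differentiating $\langle\Phi_{N,M}(t),\Phi(t)\rangle$ directly (equivalently, working with $\|\Phi_{N,M}(t)-\1^{\le M}\Phi(t)\|^2$) is not viable: it produces the boundary term $\langle\Phi_{N,M}(t),\1^{\le M}\bH(t)\1^{>M}\Phi(t)\rangle$, which is the pairing part $\tfrac12\iint\overline{K_2(t,x,y)}a_xa_y$ of $\bH(t)$ acting on the sectors $\cN\in\{M+1,M+2\}$ of $\Phi(t)$; via Lemma~\ref{lem:Sobolev-inverse-K2} and the kinetic a priori bounds this is only controlled by roughly $C_{t,\eps}N^\eps M^{1/2}$, which is far larger than the target. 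The remedy, exactly as prepared in Section~\ref{sec:loc}, is to localize the overlap with a threshold $m$ strictly below $M$ and to average over $m$, which turns this boundary term into a commutator carrying an extra factor $1/M$.

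Concretely, I would fix $M/2\le m\le M-3$ and study $\langle\Phi_{N,M}(t),\1^{\le m}\Phi(t)\rangle$. Using $\cG_N(t)=\1^{\le N}\big(\bH(t)+\cE_N(t)\big)\1^{\le N}$ together with the fact that $\bH(t)$ and $\cE_N(t)$ change the particle number by at most $2$, while $m+2\le M-1<M$, one gets $\1^{\le M}\cG_N(t)\1^{\le m}=\bH(t)\1^{\le m}+\1^{\le M}\cE_N(t)\1^{\le m}$, hence, from the equations for $\Phi_{N,M}$ and $\Phi$,
$$
\frac{d}{dt}\langle\Phi_{N,M}(t),\1^{\le m}\Phi(t)\rangle
= i\big\langle\Phi_{N,M}(t),[\bH(t),\1^{\le m}]\,\Phi(t)\big\rangle
+ i\big\langle\Phi_{N,M}(t),\1^{\le M}\cE_N(t)\1^{\le m}\,\Phi(t)\big\rangle .
$$
I would then average over $m\in\{M/2,\dots,M-3\}$. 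For the first term one uses the remark following Lemma~\ref{lem:com-G-1}, namely $\pm\frac{1}{M/2-2}\sum_m i[\bH(t),\1^{\le m}]\le \frac{C_{t,\eps}N^\eps}{M}\dGamma(1-\Delta)$, the elementary fact that $\pm A\le B$ with $B\ge 0$ implies $|\langle\phi,A\psi\rangle|\le\langle\phi,B\phi\rangle^{1/2}\langle\psi,B\psi\rangle^{1/2}$, and the kinetic bounds $\langle\Phi(t),\dGamma(1-\Delta)\Phi(t)\rangle\le C_{t,\eps}N^\eps$ (Lemma~\ref{lem:bH-kinetic}) and $\langle\Phi_{N,M}(t),\dGamma(1-\Delta)\Phi_{N,M}(t)\rangle\le C_{t,\eps}N^\eps$ (Lemma~\ref{lem:kinetic-PhiNM}); this contributes $\le C_{t,\eps}N^\eps/M$. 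For the second term one writes $\1^{\le M}\cE_N(t)\1^{\le m}=(\1^{\le M}\cE_N(t)\1^{\le M})\1^{\le m}$ and invokes Lemma~\ref{lem:Bog-app} with threshold $M$, i.e.\ $\pm\1^{\le M}\cE_N(t)\1^{\le M}\le C_{t,\eps}N^\eps\sqrt{M/N}\,\dGamma(1-\Delta)$, again in the off-diagonal form, pairing $\Phi_{N,M}(t)$ with $\1^{\le m}\Phi(t)$ (whose kinetic energy is likewise $\le C_{t,\eps}N^\eps$); this contributes $\le C_{t,\eps}N^\eps\sqrt{M/N}$.

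Integrating in time and using $\langle\Phi_{N,M}(0),\1^{\le m}\Phi(0)\rangle=\|\1^{\le m}\Phi(0)\|^2\ge 1-\langle\Phi(0),\cN\Phi(0)\rangle/m\ge 1-C/M$ then gives
$$
\Re\,\frac{1}{M/2-2}\sum_{m=M/2}^{M-3}\langle\Phi_{N,M}(t),\1^{\le m}\Phi(t)\rangle\ \ge\ 1-C_{t,\eps}N^\eps\Big(\sqrt{\tfrac{M}{N}}+\tfrac1M\Big),
$$
and it remains to restore the high sectors: $|\langle\Phi_{N,M}(t),\1^{>m}\Phi(t)\rangle|\le\|\1^{>m}\Phi(t)\|\le C_{t,\eps}N^\eps/M$, using the a priori bound $\langle\Phi(t),\cN^2\Phi(t)\rangle\le C_{t,\eps}N^\eps$ (obtained by a Gronwall argument of the type in the proof of Lemma~\ref{lem:bH-kinetic}, noting that $[\bH(t),\cN]$ is the pairing part of $\bH(t)$). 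Combining these two displays with $\|\Phi_{N,M}(t)-\Phi(t)\|^2\le 2\big(1-\Re\langle\Phi_{N,M}(t),\Phi(t)\rangle\big)$ yields the claim. The one genuinely new point is the first step: recognizing that a naive comparison is destroyed by the pairing-induced boundary term, and that the sliding-threshold averaging set up in Section~\ref{sec:loc} converts it into a commutator gaining the factor $1/M$. Everything after that — the off-diagonal quadratic-form estimates and the collection of a priori kinetic and number bounds — is routine bookkeeping built on Lemmas~\ref{lem:Bog-app}, \ref{lem:bH-kinetic}, \ref{lem:kinetic-PhiNM} and the remark after Lemma~\ref{lem:com-G-1}.
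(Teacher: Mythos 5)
Your overall route coincides with the paper's: reduce to a lower bound on $\Re\langle\Phi_{N,M}(t),\Phi(t)\rangle$, localize with a sliding threshold $m\in[M/2,M-3]$, use the identity $\1^{\le M}\cG_N(t)\1^{\le M}\1^{\le m}=\bH(t)\1^{\le m}+\cE_N(t)\1^{\le m}$ (valid since $m+2\le M-1$) to split the time derivative into the commutator $[\bH(t),\1^{\le m}]$ and the error $\cE_N(t)\1^{\le m}$, average over $m$, control the commutator by the remark after Lemma \ref{lem:com-G-1} and the error by Lemma \ref{lem:Bog-app}, and pair everything with the kinetic bounds of Lemmas \ref{lem:bH-kinetic} and \ref{lem:kinetic-PhiNM} via the off-diagonal Cauchy--Schwarz inequality $|\langle\phi,A\psi\rangle|\le\langle\phi,B\phi\rangle^{1/2}\langle\psi,B\psi\rangle^{1/2}$ for $\pm A\le B$. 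All of this is correct and is essentially the paper's proof (the paper applies Lemma \ref{lem:Bog-app} at threshold $m+2$ rather than $M$; since $m\le M$ this is an immaterial difference), as is the bound on the initial overlap.

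The step that fails is your treatment of the high sectors. You estimate $|\langle\Phi_{N,M}(t),\1^{>m}\Phi(t)\rangle|\le\|\1^{>m}\Phi(t)\|$ and then invoke $\langle\Phi(t),\cN^2\Phi(t)\rangle\le C_{t,\eps}N^\eps$, claimed via a Gronwall argument. That bound is not available: the hypothesis of Theorem \ref{thm:main} only controls $\langle\Phi(0),\dGamma(1-\Delta)\Phi(0)\rangle\le C$, i.e.\ a first moment of $\cN$ at time zero, while $\langle\Phi(0),\cN^2\Phi(0)\rangle$ can be arbitrarily large under this assumption (e.g.\ mass of order $1/n$ on an $n$-particle sector built from low-momentum states), so no propagation argument starting from the stated assumptions can yield the claimed second-moment bound; moreover, even granting an initial second moment, propagating it would require new commutator estimates beyond the remark that $[\bH(t),\cN]$ is the pairing term. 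With only the first moment you get $\|\1^{>m}\Phi(t)\|\le C_{t,\eps}N^{\eps}M^{-1/2}$, which proves the lemma with $M^{-1/2}$ in place of $M^{-1}$ — incidentally still enough for Theorem \ref{thm:main}, since Lemma \ref{lem:PhiN-PhiNM} already carries an $M^{-1/2}$ term, but not the statement as given. The repair is exactly the paper's: keep both tails and write $|\langle\Phi_{N,M}(t),\1^{>m}\Phi(t)\rangle|\le\|\1^{>m}\Phi_{N,M}(t)\|\,\|\1^{>m}\Phi(t)\|\le\langle\Phi_{N,M}(t),(\cN/m)\Phi_{N,M}(t)\rangle^{1/2}\langle\Phi(t),(\cN/m)\Phi(t)\rangle^{1/2}\le C_{t,\eps}N^\eps/M$, using Lemmas \ref{lem:kinetic-PhiNM} and \ref{lem:bH-kinetic}; no second-moment information is needed.
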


\begin{proof} We follow the same strategy as in the proof of Lemma \ref{lem:PhiN-PhiNM}. Again, we use
$$
\|\Phi_{N,M}(t)-\Phi(t)\|^2 \le 2 (1-\Re \langle \Phi_{N,M}(t), \Phi(t) \rangle)
$$
and write, with $M/2\le m\le M-3$, 
$$
\langle \Phi_{N,M}(t), \Phi(t) \rangle= \langle \Phi_{N,M}(t), \1^{\le m}\Phi(t) \rangle + \langle \Phi_{N,M}(t),\1^{>m}\Phi(t) \rangle.
$$
For the many-particle sectors, by using the bounds in Lemmas \ref{lem:bH-kinetic}, \ref{lem:kinetic-PhiNM}  we have
\begin{align} \label{eq:many-pa-PhiNM-Phi}
| \langle \Phi_{N,M}(t),\1^{>m}\Phi(t) \rangle| &\le \| \1^{>m} \Phi_{N,M}(t)\|. \|\1^{>m} \Phi(t)\| \nn\\
&\le \langle \Phi_{N,M}(t),(\cN/m)\Phi_{N,M}(t) \rangle^{1/2}\langle \Phi(t),(\cN/m)\Phi(t) \rangle^{1/2}\nn\\
&\le \frac{C_{t,\eps} N^\eps}{M}.
\end{align}
For the few-particle sectors, by using the equations of $\Phi_{N,M}(t)$ and $\Phi(t)$ and the identity (cf. \eqref{eq:1<M1<m})
$$
 \1^{\le M} \cG_{N}(t)\1^{\le M} \1^{\le m}= \cG_{N}(t)\1^{\le m}
$$
we can write 
\begin{align*}
&\frac{d}{dt} \langle \Phi_{N,M}(t),\1^{\le m}\Phi(t)\rangle \\
&= i \Big\langle  \Phi_{N,M}(t), \Big( \1^{\le M} \cG_{N}(t) \1^{\le M} \1^{\le m}-\1^{\le m}\bH \Big) \Phi(t)  \Big\rangle\\
&= i \Big\langle  \Phi_{N,M}(t), \Big( (\cG_{N}(t)-\bH)\1^{\le m}+ [\bH,\1^{\le m}])\Big) \Phi(t)  \Big\rangle.
\end{align*}
For the term involving $(\cG_{N}(t)-\bH)$, by Lemma \ref{lem:Bog-app} we have 
$$
\pm \1^{\le m+2} (\cG_{N}(t)-\bH)\1^{\le m+2}\le  C_{t,\eps} N^\eps \sqrt{\frac{m}{N}} \dGamma(1-\Delta), \quad \forall \eps>0.
$$
Therefore, by the Cauchy-Schwarz inequality and Lemmas \ref{lem:bH-kinetic}, \ref{lem:kinetic-PhiNM}, we can estimate
\begin{align*}
&\Big|  \Big\langle  \Phi_{N,M}(t),  (\cG_{N}(t)-\bH)\1^{\le m}\Phi(t)\Big\rangle \Big| \\
&= \Big|  \Big\langle  \Phi_{N,M}(t), \1^{\le m+2} (\cG_{N}(t)-\bH) \1^{\le m+2} \1^{\le m}\Phi(t)\Big\rangle \Big| \\
&\le C_{t,\eps} N^{\eps} \sqrt{\frac{m}{N}} \Big\langle \Phi_{N,M}(t), \dGamma(1-\Delta) \Phi_{N,M}(t) \Big\rangle^{1/2} \Big\langle \1^{\le m} \Phi(t), \dGamma(1-\Delta) \1^{\le m} \Phi(t) \Big\rangle^{1/2}\\
&\le C_{t,\eps} N^{2\eps} \sqrt{\frac{M}{N}}, \quad \forall \eps>0.
\end{align*}

For the part involving $i[\bH,\1^{\le m}]$, we will take the average over $m\in [M/2,M-3]$. Recall the remark after Lemma \ref{lem:com-G-1}:
$$
\pm \frac{1}{M/2-2}\sum_{m=M/2}^{M-3} i[\bH(t),\1^{\le m}]  \le \frac{C_{t,\eps} N^\eps}{M}   \dGamma(1-\Delta).$$
By the Cauchy-Schwarz inequality and Lemmas \ref{lem:bH-kinetic}, \ref{lem:kinetic-PhiNM} again, we can thus bound
\begin{align*}
&\Big| \frac{1}{M/2-2}\sum_{m=M/2}^{M-3}  \Big\langle  \Phi_{N,M}(t),  i[\bH,\1^{\le m}]\Phi(t)\Big\rangle \Big| \\
&\le \frac{C_{t,\eps} N^{\eps}}{M}  \Big\langle \Phi_{N,M}(t), \dGamma(1-\Delta) \Phi_{N,M}(t) \Big\rangle^{1/2} \Big\langle \Phi(t), \dGamma(1-\Delta) \Phi(t) \Big\rangle^{1/2}\\
&\le \frac{C_{t,\eps} N^{2\eps}}{M}, \quad \forall \eps>0.
\end{align*}
Thus we have proved that
\begin{align*}
&\Big| \frac{1}{M/2-2}\sum_{m=M/2}^{M-3}  \frac{d}{dt} \langle \Phi_{N,M}(t),\1^{\le m}\Phi(t)\rangle \Big| \le C_{t,\eps} N^\eps \Big( \sqrt{\frac{M}{N}} + \frac{1}{M} \Big) , \quad \forall \eps>0.
\end{align*}
Consequently,
\begin{align*}
&\Re \frac{1}{M/2-2}\sum_{m=M/2}^{M-3}  \langle \Phi_{N,M}(t),\1^{\le m}\Phi(t)\rangle\\
&\ge \Re \frac{1}{M/2-2}\sum_{m=M/2}^{M-3}  \langle \Phi_{N,M}(0),\1^{\le m}\Phi(0)\rangle - C_{t,\eps} N^\eps \Big( \sqrt{\frac{M}{N}} + \frac{1}{M} \Big).
\end{align*}
Moreover, 
$$ \langle \Phi_{N,M}(0),\1^{\le m}\Phi(0)\rangle \ge \langle \Phi(0),\1^{\le m}\Phi(0)\rangle \ge 1- \frac{C}{M}.$$ 
Thus
\begin{align*}
&\Re \frac{1}{M/2-2}\sum_{m=M/2}^{M-3}  \langle \Phi_{N,M}(t),\1^{\le m}\Phi(t)\rangle\ge 1- \frac{C}{M} - C_{t,\eps} N^\eps \Big( \sqrt{\frac{M}{N}} + \frac{1}{M} \Big).
\end{align*}
Averaging \eqref{eq:many-pa-PhiNM-Phi} over $m\in [M/2,M-3]$ and combining with the latter estimate, we arrive at
$$
\Re \langle \Phi_{N,M}(t),\Phi(t)\rangle \ge 1- C_{t,\eps} N^\eps \Big( \sqrt{\frac{M}{N}} + \frac{1}{M} \Big), \quad \forall \eps>0,
$$
and hence 
$$
\| \Phi_{N,M}(t) - \Phi(t)\|^2 \le 2(1- \Re \langle \Phi_{N,M}(t),\Phi(t)\rangle )\le C_{t,\eps} N^\eps \Big( \sqrt{\frac{M}{N}} + \frac{1}{M} \Big), \quad \forall \eps>0.
$$
\end{proof}
Now we are ready to conclude
\begin{proof}[Proof of Theorem \ref{thm:main} in case $d=2$]  Since $U_N(t)$ is a unitary operator from $\gH^N$ to $\cF_+^{\le N}$, we have the identity
$$
\| \Psi_N(t) - U_N(t)^* \1^{\le N}\Phi(t)\| = \| U_N(t) \Psi_N(t) - \1^{\le N} \Phi(t)\| \le  \| \Phi_N(t) -  \Phi(t)\|.
$$
Using the triangle inequality 
$$
 \| \Phi_N(t) -  \Phi(t)\| \le  \| \Phi_N(t) -  \Phi_{N,M}(t)\|+ \| \Phi_{N,M}(t) -  \Phi(t)\|,
$$
where $\Phi_{N,M}$ is introduced in \eqref{eq:def-PhiNM} with $M=N^{1-\delta}$, and using the comparison results in Lemmas \ref{lem:PhiN-PhiNM}, \ref{lem:com-PhiNM-Phi}, we find that
$$
\| \Phi_N(t) -  \Phi(t)\|^2 \le C_{t,\eps} N^\eps \Big( \frac{ N^\beta}{M}  + \frac{1}{M^{1/2}} + \sqrt{\frac{M}{N}}    \Big), \quad \forall \eps>0.
$$
By choosing
$$ M=N^{\frac{2\beta+1}{3}}$$
we obtain
$$
\| \Phi_N(t) -  \Phi(t)\|^2 \le C_{t,\eps} N^\eps N^{\frac{\beta-1}{3}}, \quad \forall \eps>0.
$$
This ends the proof.
\end{proof}

\begin{proof}[Proof of Corollary \ref{cor}] It is well-known that the trace norm controls the partial trace norm, and hence from the norm convergence we can deduce easily that 
$$
\lim_{N\to \infty} \Tr \Big| \gamma_{\Psi_N(t)}^{(1)} - |u(t)\rangle \langle u(t)| \Big| =0
$$
(see \cite[Corollary 2]{LewNamSch-15} for a detailed explanation). Moreover, since $w_N \wto a \delta_0$ weakly with $a=\int w$, it is straightforward to check that the Hartree solution converges strongly in $L^2(\R^d)$ to the nonlinear Sch\"odinger evolution in \eqref{eq:NLS} with the initial value $\varphi(0,x)=u(0,x)$ and the phase factor
$$
\mu(t)= \lim_{N\to \infty}\mu_N(t)=\frac{a}{2} \int_{\R^d} |u(t,x)|^4 \d x. 
$$
Thus we conclude that
$$
\lim_{N\to \infty} \Tr \Big| \gamma_{\Psi_N(t)}^{(1)} - |\varphi(t)\rangle \langle \varphi(t)| \Big| =0
$$
To go from equation \eqref{eq:NLS} to equation \eqref{eq:NLS-0phase}, we only need to use a gauge transformation 
$$ \varphi(t) \mapsto e^{-i\int_0^t \mu (s)ds} \varphi(t).$$
This transformation, however, does not change the projection $|\varphi(t)\rangle \langle \varphi(t)|$ and the desired conclusion follows. 
\end{proof}

\section{Proof of Theorem \ref{thm:main} ($d=1$)} \label{sec:last-d1}

In this section, we prove Theorem \ref{thm:main} in case $d=1$. The scheme of the proof remains the same as in the case of $d=2$, but the proof for $d=1$ is somewhat simpler, thanks to the stronger Sobolev's inequality in one-dimension. Below we will sketch the main steps again and point out where the differences lie.

First, we consider  the Hartree equation \eqref{eq:Hartree-equation}. When  $w_N(x)=N^{\beta} w(N^\beta x)$ with $w\in L^1(\R)$ and $\beta>0$, we can show that for every $u(0)\in H^3(\R)$ the equation \eqref{eq:Hartree-equation} has a unique solution $u(t,.)$ in $H^3(\R)$ and for all $t>0$ we have the bounds
$$
\| u(t,\cdot)\|_\infty\le \|u(t,\cdot)\|_{H^1(\R)} \le C, \quad  \|u(t,\cdot)\|_{H^2(\R)} \le Ce^{Ct}, \quad \|\partial_t u(t,\cdot)\|_\infty \le Ce^{Ct}
$$
where the constant $C>0$ is independent of $t$ and $N$ (it depends only on $\|w\|_{L^1}$ and $\|u(0)\|_{H^3(\R)}$). This result is obtained by following the same strategy as in the proof of Lemma \ref{lem:Hartree}, plus Sobolev's inequality in 1D (see e.g. \cite{LieLos-01})
$$
\|f\|_{L^\infty(\R)}^2 \le \|f\|_{L^2(\R)}  \|\partial_x f\|_{L^2(\R)}, \quad \forall f\in H^1(\R).
$$

Next, we consider Lemma \ref{lem:Sobolev-inverse-K2}. By following the proof of Lemma \ref{lem:Sobolev-inverse-K2} and using 
\begin{equation*}
\int_{\R^d}  (1+|2\pi p|^2)^{-1}  \d p  <\infty
\end{equation*}
when $d=1$, we obtain the improved version of Lemma \ref{lem:Sobolev-inverse-K2} when $d=1$: 
\begin{align*}
\|(1-\Delta_x)^{-1/2} K_2(t, \cdot, \cdot)\|^2_{L^2} + \|(1-\Delta_x)^{-1/2} \partial_t K_2(t, \cdot, \cdot)\|^2_{L^2} \le C_{t}.
\end{align*}
This leads to an improved estimate in Lemma \ref{lem:bHt-dbHt}, namely for $d=1$ and any $\eta>0$ we have the bounds 
\begin{equation}
\begin{aligned}  \label{eq:1DbHbounds}
\pm \Big( \bH(t) + \dGamma(\Delta) \Big) &\le \eta \dGamma(1-\Delta) + C_{t}(1+\eta^{-1})\cN,\\
\pm \partial_t \bH(t) &\le \eta \dGamma(1-\Delta) +   C_{t}(1+\eta^{-1})\cN,\\
\pm i[\bH(t),\cN] &\le \eta \dGamma(1-\Delta) +   C_{t}(1+\eta^{-1})\cN. 
\end{aligned}
\end{equation}
Consequently, we have immediately an improved kinetic estimate for the Bogoliubov dynamics $\Phi(t)$:
\bq \label{eq:1DPhikinetic}
\big \langle \Phi(t), \dGamma(1-\Delta)  \Phi(t) \big\rangle \le  C_{t}. 
\eq

Now we consider the operator inequality \eqref{eq:Sobolev_interaction}. When $d=1$, Sobolev's inequality implies that 
\begin{equation}\label{eq:Sobolev1D}
|w_N(x-y)| \le C (1-\Delta_x). 
\end{equation}
As a consequence we are able to bound the error term $\cE_N$ in Lemma \ref{lem:Bog-app} in terms of the kinetic energy on the whole truncated Fock space (not only on the sector with less than $m <N$ particles as for $d=2$). The one dimensional counterpart of Lemma \ref{lem:Bog-app} reads, for any $\eta>0$, 
\begin{equation}
\begin{aligned} \label{eq:1DcEbounds}
\pm \1^{\le N}\cE_{N}(t)\1^{\le N} &\le \eta \dGamma(1-\Delta) + C_{t}(1+\eta^{-1})\cN \\
\pm \1^{\le N} \partial_t \cE_{N}(t)\1^{\le N}  &\le \eta \dGamma(1-\Delta) + C_{t}(1+\eta^{-1})\cN \\
\pm  \1^{\le N} i [\cE_{N}(t),\cN]\1^{\le N}  &\le\eta \dGamma(1-\Delta) + C_{t}(1+\eta^{-1})\cN.
\end{aligned}
\end{equation}
The bounds \eqref{eq:1DbHbounds} and \eqref{eq:1DcEbounds} imply that we can apply the Gr\"onwall argument to the quantity $\langle \Phi_N(t), (\cG_N(t)+C_t\cN)\Phi_N(t)\rangle$ and derive a better, $\beta$-independent bound on the kinetic energy of the full many-body dynamics $\Phi_N(t)$, i.e.
\bq \label{eq:1DPhiNkinetic}
\langle \Phi_N(t), \dGamma(1-\Delta) \Phi_N(t) \rangle \le C_{t}.
\eq

The bounds \eqref{eq:1DPhikinetic} and \eqref{eq:1DPhiNkinetic}  imply in particular that now, when $d=1$, the bounds in Lemmas \ref{lem:PhiN-PhiNM} and \ref{lem:com-PhiNM-Phi} become independent of $\beta$ and $\eps$, namely
$$\|\Phi_{N}(t)-\Phi_{N,M}(t)\|^2\le \frac{C_{t}}{M^{1/2}}$$
and
$$\|\Phi_{N,M}(t)-\Phi(t)\|^2 \le C_{t}  \Big( \sqrt{\frac{M}{N}} + \frac{1}{M} \Big).
$$ 
Then we can proceed exactly as in the proof for $d=2$ in Section \ref{sec:main-proof} and arrive at the desired final estimate.

\end{document}